\documentclass[letterpaper,12pt]{article}
\usepackage{amssymb}
\usepackage{amsmath}
\usepackage{amsfonts}
\usepackage{geometry,enumerate,bbm}
\usepackage{amsthm}
\usepackage{natbib}
\usepackage{booktabs}
\usepackage{epsfig}
\usepackage{float}
\usepackage[onehalfspacing]{setspace}
\usepackage{graphicx,color}
\usepackage{subcaption}
\usepackage{lscape}
\usepackage[flushleft]{threeparttable}
\usepackage[normalem]{ulem}
\usepackage{scrextend}
\usepackage{afterpage}
\usepackage{comment}
\definecolor{dgreen}{rgb}{0.0, 0.5, 0.13}

\usepackage{pgfplots}
\newcommand{\dd}{\mathrm{d}}

\RequirePackage[colorlinks,citecolor=blue,urlcolor=blue,linkcolor=blue]{hyperref}
\allowdisplaybreaks

\newtheorem{remark}{Remark} 
\newtheorem{assumption}{Assumption} 
\newtheorem{theorem}{Theorem} 
\newtheorem{example}{Example}

\newtheorem{definition}{Definition}
\newtheorem{lemma}{Lemma}
\newtheorem{corollary}{Corollary}

\newcommand{\diag}{{\rm diag}}

\DeclareMathOperator{\logit}{logit}

\begin{document}

\title{\bf Approximate Operator Inversion for Average Effects in Nonlinear Panel Models\thanks{
 We thank Stéphane Bonhomme for helpful comments. We are also grateful to seminar participants at the University of Oxford, Queen Mary University of London, Maastricht University, and Dongbei University of Finance \& Economics
for helpful comments.  Jad Beyhum and Geert Dhaene gratefully acknowledge financial support from the Research Fund of KU Leuven (grant STG/23/014) and from the Research Foundation – Flanders (grant G031125N).}}
\author{\setcounter{footnote}{2}
Jad Beyhum\thanks{%
Department of Economics, KU Leuven, \texttt{jad.beyhum@kuleuven.be} }
\and
Geert Dhaene\thanks{%
Department of Economics, KU Leuven, \texttt{geert.dhaene@kuleuven.be} }
\and
Cavit Pakel%
\thanks{%
Department of Economics, University of Oxford, \texttt{cavit.pakel@economics.ox.ac.uk} } 
\and
Martin Weidner%
\thanks{%
Dept.\ of Economics \& Nuffield College, University of Oxford, \texttt{martin.weidner@economics.ox.ac.uk} } }
\date{\today}

\maketitle
\thispagestyle{empty}
\setcounter{page}{1}
\begin{abstract}
    \noindent 
We study the estimation of average effects in nonlinear panel data models with fixed effects when the time dimension $T$ is only moderately large. Our approach, called approximate operator inversion (AOI), offers a new perspective on bias correction. Instead of first estimating unit-specific fixed effects and then correcting the resulting plug-in bias, AOI approximately inverts the likelihood-induced mapping from the fixed-effect distribution to the outcome distribution. AOI can be interpreted as the limit of an infinitely iterated bias correction scheme, and this limit is available in closed form. We show that the bias of the AOI estimator has a rate double robustness property and converges to zero at an exponential rate in $T$ under regularity conditions. Our asymptotic theory requires $T \to \infty$, but the exponential convergence rate of the bias means that finite-sample performance is very good even for moderately large $T$. We establish asymptotic normality and provide feasible inference.
\end{abstract}
\bigskip
\noindent
{\bf Keywords:} { Panel data, discrete choice, average effects, incidental parameters, ill-posed inverse problem}

\bigskip
\noindent
{\bf JEL classification code:} {C14, C23, C25}

\newpage 

\section{Introduction}
\label{sec:introduction}

Nonlinear panel data models with fixed effects are central to empirical work in economics.
In the standard semiparametric setting, 
the researcher specifies for each individual unit $i=1,\ldots,n$, a conditional probability (or density) $f(Y_i|X_i,A_i;\theta_0)$ of outcomes $Y_i = (Y_{i1}, \ldots, Y_{iT})$ given the observed covariates $X_i = (X_{i1}, \ldots, X_{iT})$, the fixed effects $A_i$ that capture unobserved heterogeneity, and the model parameter $\theta_0$. The conditional distribution of fixed effects $\pi_0(A_i|X_i)$, on the other hand, is left completely unspecified. Two objects of typical interest are the common model parameters $\theta_0$ and average 
effects of the form $\mu_0 = \mathbb{E}[\mu(X_i, A_i, \theta_0)]$, such as average 
marginal effects or average treatment effects. The presence of unobserved fixed effects leads to the well-known incidental parameter problem \citep{neyman1948consistent}, which complicates estimation of both $\theta_0$ and $\mu_0$. The literature frames the study of this problem along two asymptotic regimes, each with its own characteristic issues.

\begin{enumerate}[(A)]
    \item \textbf{Large $n$ and fixed $T$.} In this setting, each unit contributes only a finite number of observations. Consequently, the distribution $\pi_0(A_i|X_i)$ can only be set-identified. Essentially, the issue at hand is inversion (or, lack thereof) of the likelihood-induced mapping from the fixed-effect distribution to the outcome distribution, as given by
\begin{align}
    \mathbb{P}(Y=y\,| \, X=x) 
    = 
    \int_{\mathcal{A}} f(y\,| \, x,\alpha;\theta_0)\,
    \pi_0(\alpha\,| \, x)\,\dd\alpha.
    \label{eq:popnmain}
\end{align}
Lack of point identification of $\pi_0(A_i|X_i)$ precludes point identification of $\theta_0$ except under certain parametric families. More seriously, point-identification of $\mu_0$ largely fails. Except for some very specific cases, set identification is the norm rather than exception. 

 \item \textbf{Large $n$ and large $T$, with $A_i$ consistently estimable.} 
The analysis under this regime
reasons that as $T\to\infty$ each unit should accumulate enough information for $A_i$ to be consistently estimable. This abstracts away from identification of $\pi_0(A_i|X_i)$ and allows for inference based on $f(y\,| \, x,\alpha;\theta)$ using plug-in consistent estimates of $A_i$. However, even though consistently estimable, the asymptotically growing number of fixed effects leads to the incidental parameter bias. The bias-correction literature offers various options for removing this asymptotic bias. 
\end{enumerate}

Clearly, the large-$T$ literature operates on the level of $A_i$ and not $\pi_0(A_i|X_i)$, sidestepping the fundamental inversion problem in \eqref{eq:popnmain}. This poses an inherent limitation. Indeed, the large-$T$ literature almost exclusively focuses on the correction of the leading $O(1/T)$ bias. While this is entirely justified for truly large values of $T$, in practice the remaining bias terms can potentially matter for moderate values of $T$, especially in more complicated models. At a practical level, complete correction of bias remains a virtually hopeless task. At a theoretical level, the focus on $A_i$ rather than $\pi_0(A_i|X_i)$ limits the scope of traditional bias-correction methods.

Motivated by these observations, 
we propose the novel approximate operator inversion (AOI) method. This method can be used to conduct inference on both $\theta_0$ and $\mu_0$ under large-$n$ large-$T$ asymptotics, though our focus in this paper will be on $\mu_0$. The AOI estimator operates directly on the level of $\pi_0(A_i|X_i)$ rather than of $A_i$. Specifically, our approach
treats the estimation of $\mu_0$ as an operator inversion problem and it
approximately inverts the likelihood-induced mapping from the fixed-effect
distribution to the outcome distribution given in \eqref{eq:popnmain}.
By shifting the focus to the fundamental inversion problem, our approach bridges large-$T$ analysis with the fixed-$T$ literature. 

Crucially, although distinct from traditional bias-correction methods, the AOI estimator admits an iterated bias-correction interpretation and, in a well-defined sense, delivers infinite-order bias correction. This advances the state of the art in the large-$T$ panel literature, where existing methods typically correct bias only up to a fixed, often first, order.

The bias of AOI depends on two approximation components: how well the 
average-effect function $\mu(X, \cdot, \theta_0)$ is approximated by a chosen 
basis, and how well the unknown fixed-effect distribution is approximated by 
another basis.
The overall bias is governed by the product of these two approximation errors,
a property we call \textit{rate double robustness}. Rapid convergence in either component
is enough for fast overall bias decay. Under regularity conditions, the bias decays at an exponential rate in $T$, i.e., $O(\rho^T)$ for
some $0 < \rho < 1$. This \textit{exponential decay} property is a major improvement over the polynomial rates $O(1/T^s)$ obtained in 
standard bias-correction methods.

Operating on $A_i$ rather than $\pi_0(A_i|X_i)$ can result in more than the loss of infinite-order bias-correction. We highlight this by studying a third asymptotic scenario that interpolates between the large-$T$ and fixed-$T$ settings. 

\begin{enumerate}[(A)]
    \setcounter{enumi}{2}
    \item \textbf{Large $n$ and large $T$, with $A_i$ not consistently estimable.}
     Although $T\to\infty$, no unit is able to accumulate enough information on $A_i$: consistent estimation of $A_i$ fails even as $T$ tends to infinity. We show that $\theta_0$ and $\mu_0$ can nevertheless be consistently estimated
    in this regime under regularity conditions.
\end{enumerate}
The traditional large-$T$ asymptotics implicitly assumes that all fixed effects are consistently estimable as long as $T\to\infty$. However, and crucially, consistent estimation of $A_i$ is not a matter of whether $T$ tends to infinity but of whether the Fisher information in data on $A_i$ grows without bound as $T\to\infty$. Depending on the model and the covariate design, $A_i$ can fail to be consistently estimable, rendering existing bias-correction approaches non-operational. Our study of this scenario highlights that the extant bias-correction  literature operates not under large-$T$ asymptotics but a subset of it where $A_i$ is consistently estimable. We believe that our paper is the first one to make this distinction. Importantly, under regularity conditions, AOI is inherently immune to this issue.

The ideas in this paper apply to inference on both
$\theta_0$ and $\mu_0$. From Section~\ref{sec:Setup} onwards, however, we focus on
estimation of $\mu_0$, which presents the greater challenge. Even in models
where $\theta_0$ can be estimated consistently for fixed $T$ (for example,
using conditional likelihood methods), point identification of $\mu_0$
typically fails for fixed $T$ because the distribution of $A_i$ is only
partially identified from short panels. This motivates our focus on $\mu_0$
and the development of the AOI estimator for this purpose.

\paragraph{Related literature.}
AOI builds on two prior contributions.
The first is functional differencing \citep{bonhomme2012functional}, which 
constructs moment conditions that identify $\theta_0$ while remaining exactly 
free of $A_i$, without any large-$T$ approximation.
Such exact moment conditions exist only in special models and for specific
choices of $\mu_0$, and do not extend to average effects in general
\citep[see also][for applications to average marginal
effects]{dano2023transition, aguirregabiria2024identification}.
\cite{bonhomme2017panel} adopt a similar inverse-problem perspective on average-effect estimation, but restrict attention to settings where $\mu_0$ is point-identified at fixed $T$, ruling out the discrete choice models considered here. 

The second is approximate functional differencing \citep{dhaene2023approximate},
which replaces exact moment conditions with approximate ones whose error vanishes
as $T \to \infty$, broadening the scope to settings where exact moment conditions are
unavailable. Both functional differencing and approximate functional differencing
are general methods applicable to a broad class of nonlinear panel models. AOI
adapts and extends this approach to the estimation of $\mu_0$, going beyond
\cite{dhaene2023approximate} by developing the operator inversion perspective,
establishing formal results on bias decay and asymptotic normality, and providing
feasible inference procedures.

The bounds literature offers a complementary approach, deriving sharp partial 
identification regions under minimal assumptions \citep{honore2006bounds, 
chernozhukov2013average, davezies2021identification, dobronyi2021identification, 
pakel2023bounds, botosaru2024adversarial}.
In modern applications where $n$ is large and $T$ is only moderately large, these bounds 
can be very narrow, so that $\mu_0$ is nearly point-identified in practice.
AOI and bounds approaches are therefore complementary. Bounds provide robust
partial identification guarantees, while AOI provides point estimates and
inference under large-$T$ approximations.

More broadly, AOI contributes to the large literature on bias-correction in 
nonlinear panel models \citep{hahn2004jackknife, arellanobonhomme2009intlike, dhaene2015split, 
higgins2024bootstrap, bonhomme2024neyman}.
These methods rely on a preliminary consistent estimator of $A_i$. AOI does not require consistent estimation of
$A_i$, making it applicable in a broader class of settings and delivering better
finite-sample performance when $T$ is small.

\paragraph{Roadmap.}
The paper is organized as follows.
Section~\ref{sec:Setup} introduces the model, defines the target average effect,
and presents a motivating example.
Section~\ref{sec:bias-corr} formulates the estimation of average effects as an
inversion problem and shows how AOI approximately solves it.
Section~\ref{sec:estimator} defines the AOI estimator and interprets it as an
infinitely iterated bias-correction.
Section~\ref{sec:asymptotics} establishes the large-sample properties of the
estimator, including the rate double robustness result, asymptotic normality,
and feasible inference.
Section~\ref{sec:RCnumerics} presents numerical results for a random-coefficient
logit model.
Section~\ref{sec:theta} discusses inference on the common parameter $\theta_0$.
Section~\ref{sec:conclusion} concludes.
All proofs are collected in the Appendix.

\section{Setup and motivating example}
\label{sec:Setup}

This section introduces the formal framework and a motivating example.
Section~\ref{subsec:model} defines the model and the target average effect.
Section~\ref{subsec:regimeC-example} illustrates that average effects can be 
consistently estimated even when the fixed effects cannot.

\subsection{Model, average effects, and examples}\label{subsec:model}

We observe outcomes $Y_i \in \mathcal{Y}=\{y_{(k)},\ k=1,\dots,n_{\mathcal{Y}}\}$ 
and covariates $X_i \in {\cal X}\subset\mathbb{R}^{d_x}$ for units $i=1,\ldots,n$. 
We restrict attention to finite outcome sets $\mathcal{Y}$, although in principle 
the results can be extended to infinite outcome sets under appropriate conditions.
Besides the observed variables $Y_i$ and $X_i$, we allow for latent variables 
$A_i \in {\cal A}\subset\mathbb{R}^{d_a}$.
In what follows, we often drop the unit index $i$. For example, instead of $Y_i$, 
$X_i$, and $A_i$ we simply write $Y$, $X$, and $A$.

We assume that $(Y_i, X_i, A_i)$, $i=1,\ldots,n$, are independent and identically 
distributed random vectors. The model specifies the conditional outcome 
probabilities\footnote{In many panel models of interest, the conditional outcome 
probabilities also depend on a finite-dimensional common parameter $\theta_0$, so
that $\mathbb{P}(Y=y \,|\, X=x, A=\alpha) = f(y \,|\, x, \alpha, \theta_0)$ and
$\mathbb{P}(Y=y \,|\, X=x) = \int_{\mathcal{A}} f(y \,|\, x, \alpha, \theta_0) \,
\pi_0(\alpha \,|\, x) \, \dd\alpha$. When a {$\sqrt{nT}$}-consistent estimator of
$\theta_0$ is available, $\theta_0$ plays essentially no role for inference on 
$\mu_0$. To avoid carrying $\theta_0$ in the notation throughout the paper, we 
absorb it into the model and write simply $f(y \,|\, x, \alpha)$. Inference on 
$\theta_0$ itself is discussed in Section~\ref{sec:theta}.}
\begin{align}
   \mathbb{P}\left( Y= y \, \big| \, X=x, \, A = \alpha \right) 
 =  f\left(y \, \big| \, x, \alpha \right) , \qquad y \in \mathcal{Y},
    x \in \mathcal{X}, \alpha \in \mathcal{A},
    \label{model}
\end{align}
where the function $f\left(y \, \big| \, x, \alpha \right)$ is known.

Let $\pi_0(\alpha \,| \,x)$ denote the true conditional probability density function 
of $A$ given $X=x$. Integrating out $A$ from~\eqref{model}, the conditional outcome 
probabilities identified from the data are
\begin{align}
   \mathbb{P}\left( Y= y \, \big| \, X=x  \right) 
 = \int_{\cal A} \,  f\left(y \, \big| \, x, \alpha \right) \, 
 \pi_0(\alpha \,| \,x) \, \dd\alpha, \qquad y\in\mathcal{Y},x\in\mathcal{X}.
     \label{model2}
\end{align}
We impose no restrictions on $\pi_0(\alpha \,| \,x)$ nor on the marginal distribution 
of $X$, that is, we have a semiparametric model with unknown nonparametric 
component $\pi_0(\alpha \,| \,x)$.

\begin{remark}[Dynamic models]
\label{rmk:dynamic}
The setup in~\eqref{model}--\eqref{model2} accommodates certain dynamic nonlinear 
panel models. For instance, when $Y_t$ depends on $Y_{t-1}$ and $X_t$, and the 
initial condition $Y_{0}$ is observed, we set $Y = (Y_{1},\ldots,Y_{T})$ and 
$X = (Y_{0},X_{1},\ldots,X_{T})$. Our setup, however, does not allow for dynamic 
feedback from the dependent variable $Y_t$ to covariates in later periods 
$X_{t+s}, s\ge 1$.
\end{remark}

In models of the form~\eqref{model}, the primary objects of interest are often
functionals of the unknown conditional density $\pi_0(\alpha\,| \, x)$. In 
particular, we consider average effects of the form
\begin{align}
   \mu_0 &:= \mathbb{E}\left[ \mu(X,A) \right]
    = \mathbb{E}\left[ \int_{\cal A}   \mu(X, \alpha)  \,   
    \pi_0(\alpha\,|\,X) \, \dd\alpha \right] ,
    \label{eq:avg_effect}
\end{align}
where $\mu(x,\cdot)\in L^2(\mathcal{A})$ is a known function specifying 
the effect of interest, and $L^2(\mathcal{A})$ denotes the space of 
square-integrable functions on $\mathcal{A}$ with respect to Lebesgue measure.
For example, in a panel data model, the average marginal effect of the $p$-th 
regressor in period $t$ on the expected outcome in period $t$ sets
$\mu( x,\alpha) = \frac{\partial}{\partial x_{t,p}} \sum_{y \in 
\mathcal{Y}} y_t f(y \,| \, x, \alpha)$,
where $y=(y_1,\ldots,y_T)$ and $y_t$ is the $t$-th component of $y$.


\begin{example}[\textbf{Static logit}]
   \label{example:StaticLogit}
   Consider the static logit model with $T$ periods, where the common parameter
   $\theta_0$ has already been estimated (for example, using the conditional logit 
   estimator of \cite{rasch1961general,andersen1970asymptotic,chamberlain1980analysis}) 
   and absorbed into the model. We have
   $\mathcal{Y} = \{0,1\}^T$ and
   $$
    f(y \, | \,x,\alpha) = \prod_{t=1}^T
    \frac{[\exp(x_t' \beta + \alpha)]^{y_t}} {1 + \exp(x_t' \beta + \alpha)} 
    =:\prod_{t=1}^T \tilde{f}(y_t \, | \,x_t,\alpha) ,
   $$
   for $x=(x_1^\top,\dots,x_{T}^\top)^\top\in{\cal X}\subset\mathbb{R}^{KT}$, 
   $\alpha\in{\cal A}\subset \mathbb{R}$, and a known coefficient vector $\beta$.
   
   \medskip
   
   \noindent\textit{\textbf{Average treatment effect.}}
   Suppose $X_{t1}$ is binary for all $t$. Let 
   $\widetilde{X}^{(k)}=((\widetilde{X}_{1}^{(k)})^\top,\dots,
   (\widetilde{X}_{T}^{(k)})^\top)^\top$ 
   denote the counterfactual covariate vector with the first component in each 
   period set to $k\in\{0,1\}$, i.e., 
   $\widetilde{X}_{t}^{(k)}=(k,X_{t2},\dots,X_{tK})^\top$, $t=1,\dots,T$. 
   The average treatment effect of $X_{t1}$ for $t=1,\ldots,T$ on 
   $\bar{Y}:=T^{-1}\sum_{t=1}^T Y_t$ is
   \begin{align*}\mu_0
   &=\mathbb{E}\left[\sum_{y\in\mathcal{Y}}\bar{y}\int_{\cal A} \,  
   \left\{f(y \, \big| \, \widetilde{X}^{(1)}, \alpha ) 
   - f(y \, \big| \,  \widetilde{X}^{(0)}, \alpha ) 
   \right\}\, \pi_0(\alpha \,| \,X) \, \dd\alpha\right],
   \end{align*}
   where $\bar{y}=\frac1T\sum_{t=1}^T y_{t}$.
   
   \medskip

   \noindent\textit{\textbf{Average marginal effect.}}
   If, instead, $X_{t1}$ has a continuous distribution, the average marginal 
   effect of $X_{t1}$ for $t=1,\ldots,T$ on $\bar{Y}$ is 
   \begin{align*}\mu_0
  & = \mathbb{E}\left[\sum_{y\in\mathcal{Y}}\bar{y}\int_{\cal A}  
  \frac{\partial}{\partial X_{t1}} \prod_{s=1}^T \tilde{f}\left(y_{s} \, \big| \,
  X_{s}, \alpha \right) 
   \, \pi_0(\alpha \,| \,X)\,\dd\alpha\right].\end{align*}
\end{example}

\subsection{Random-coefficient binary choice model}\label{subsec:regimeC-example}

Consider a static binary choice panel model with $T$ periods and 
$\mathcal{Y} = \{0,1\}^T$.
There is no common parameter $\theta_0$, and the unit-specific parameters are 
$A_i = (A_{i1}, A_{i2})^\top \in \mathcal{A} \subset \mathbb{R}^2$, where 
$A_{i1}$ is an additive effect and $A_{i2}$ is a slope.
The outcome of unit $i$ in period $t$ is
\[
Y_{it} = \mathbf{1}(A_{i1} + X_{it} A_{i2} + U_{it} \geq 0),
\]
where $X_{it} \in \mathbb{R}$ is a scalar covariate and $U_{it}$ is 
independently and identically distributed across $i$ and $t$ with distribution 
function $F$, and independent of $A_i$.
The conditional outcome probabilities are
$$
f(y \,|\, x, \alpha) = \prod_{t=1}^T 
[F(\alpha_1 + x_t \alpha_2)]^{y_t}
[1 - F(\alpha_1 + x_t \alpha_2)]^{1-y_t},
\qquad y \in \mathcal{Y},
x \in \mathcal{X}, \alpha \in \mathcal{A},
$$
where $x = (x_1, \dots, x_T)^\top \in \mathcal{X} \subset \mathbb{R}^T$ and 
$\alpha = (\alpha_1, \alpha_2)^\top \in \mathcal{A} \subset \mathbb{R}^2$.
The average effect of interest is
\[
\mu_0 = \mathbb{E}[F(A_{i1} + A_{i2})],
\]
the average predicted probability when the covariate is set to one. If, for all $i$,
$X_{it} = 1$ for some period $t$, then $F(A_{i1} + A_{i2}) = 
\mathbb{P}(Y_{it} = 1 \,|\, A_i, X_{it}=1)$, and $\mu_0 = \mathbb{E}[Y_{it}]$ is 
identified from the data without any parametric assumption on $F$ 
\citep[see][]{chernozhukov2013average}. When $X_{it} \neq 1$ for all $t$, 
however, evaluating $F(A_{i1} + A_{i2})$ requires extrapolating from the 
observed covariate values to the target value one. The parametric form of $F$ 
then becomes essential, and the difficulty of the problem depends on how much 
information the observed covariate variation provides about this extrapolation.

 Consistent estimation of 
$A_i = (A_{i1}, A_{i2})$ as $T \to \infty$ requires the covariates to exhibit 
sufficient variation across periods. A necessary and sufficient condition is
\begin{align}
\sum_{t=1}^T (X_{it} - \bar{X}_{iT})^2 \to \infty,
    \label{SufficientVariationXit}
\end{align}
where $\bar{X}_{iT} = T^{-1} \sum_{t=1}^T X_{it}$. When this $L^2$ condition 
on the covariate variation fails, the Fisher information that the data provide 
about $A_i$ remains bounded as $T \to \infty$, and $A_i$ cannot be consistently 
estimated.

Consistent estimation of the average effect $\mu_0$ requires weaker assumptions on the covariate variation. {Under
appropriate regularity conditions on the model primitives (bounded
covariates and a compact parameter space for $A_i$),} a sufficient condition is
\begin{align}
\sum_{t=1}^T |X_{it} - \bar{X}_{iT}| \to \infty.
    \label{SufficientVariationMu}
\end{align}
This is an $L^1$ condition on the covariate variation, and it is strictly weaker
than~\eqref{SufficientVariationXit}:
condition~\eqref{SufficientVariationXit} always
implies~\eqref{SufficientVariationMu}, since
$\sum_{t=1}^T |X_{it} - \bar{X}_{iT}| \geq
\bigl(\sum_{t=1}^T (X_{it} - \bar{X}_{iT})^2\bigr)^{1/2}$, but the converse
does not hold in general.

The gap between conditions~\eqref{SufficientVariationXit} 
and~\eqref{SufficientVariationMu} defines the territory of regime (C) from the 
introduction: settings in which $\mu_0$ can be consistently estimated even though 
$A_i$ cannot.
To make these conditions concrete, consider a two-block covariate design where
\[
X_{it} = 0 \text{ for } t \leq T/2, \qquad X_{it} = c_T \text{ for } t > T/2,
\]
with $T$ even and $c_T > 0$. Here $c_T$ is allowed to depend on $T$, reflecting
a triangular array asymptotic framework in which the covariate design may change
as $T$ grows. In this design,
condition~\eqref{SufficientVariationXit} reduces to $Tc_T^2 \to \infty$ and
condition~\eqref{SufficientVariationMu} reduces to $Tc_T \to \infty$.
With a slight abuse of notation, define the within-unit sums
\[
Y_{i1} = \sum_{t=1}^{T/2} Y_{it}, \qquad Y_{i2} = \sum_{t=T/2+1}^{T} Y_{it}.
\]
These two sums are jointly sufficient statistics for $A_i$ given $X_i$, with
$Y_{i1} \sim \mathrm{Bin}(T/2,\, F(\alpha_1))$ and
$Y_{i2} \sim \mathrm{Bin}(T/2,\, F(\alpha_1 + c_T\,\alpha_2))$ conditionally
on $A_i = \alpha$.

\paragraph{Regime (B): $c_T$ fixed.}
When $c_T = c$ is a fixed positive constant that does not depend on $T$, both
conditions are satisfied and $A_i$ can be consistently estimated as
$T \to \infty$. The standard approach would estimate each $A_i$ by maximum
likelihood and form the plug-in estimator
$\hat{\mu}_{\rm FE} = n^{-1} \sum_{i=1}^n F(\hat{A}_{i1} + \hat{A}_{i2})$.
Due to the incidental parameter problem, this estimator has a bias of order
$O(1/T)$ \citep{hahn2004jackknife}. By contrast, the method developed in this
paper achieves a bias converging to $0$ at the exponential rate $\rho^T$ for some
$0 < \rho < 1$, which is much faster. This illustrates that our approach offers
substantial improvements over standard methods even within regime (B).

\paragraph{Regime (C): $c_T = 1/\sqrt{T}$.}
When $c_T = 1/\sqrt{T}$, we have $Tc_T^2 = 1$ and $Tc_T = \sqrt{T}$,
so~\eqref{SufficientVariationXit} fails
but~\eqref{SufficientVariationMu} holds. The fixed effect $A_i$ cannot be
consistently estimated, whereas $\mu_0$ can. Standard bias-correction methods,
which rely on a preliminary consistent estimator of $A_i$, are not applicable
in this setting. Our method remains applicable because it operates at the
distributional level and does not require $A_i$ to be consistently estimated.

A consistent estimator of $\mu_0$ is
\[
\hat{\mu}_0 = \frac{1}{n} \sum_{i=1}^n m_T(Y_{i1}, Y_{i2}),
\]
where $m_T: \{0,\ldots,T/2\}^2 \to \mathbb{R}$ is a function of the two 
within-unit sums that can be chosen to make $\hat{\mu}_0$ consistent for 
$\mu_0$ as $n, T \to \infty$.
For the standard logistic $F$, an explicit such choice is
\[
m_T(y_1, y_2) = \sum_{m=0}^{T/2}\sum_{l=0}^{T/2}
g_T(t_m, t_l)\, L_m(y_1)\, L_l(y_2),
\]
where $g_T(p_1, p_2) = F\bigl((1-\sqrt{T})\,\mathrm{logit}(p_1) +
\sqrt{T}\,\mathrm{logit}(p_2)\bigr)$, $t_0,\ldots,t_{T/2}$ are Chebyshev
nodes on $[\epsilon, 1-\epsilon]$ {for any fixed $\epsilon \in (0,1/2)$},
and $L_0,\ldots,L_{T/2}$ are the
associated unbiased polynomial estimators.
With this choice, the bias satisfies
\[
\sup_{\alpha \in \mathcal{A}}
\bigl|\mathbb{E}[m_T(Y_{i1}, Y_{i2}) \mid A_i = \alpha] - F(\alpha_1 + \alpha_2)\bigr|
\leq C\, e^{-c_0 \sqrt{T}}
\]
for constants $C, c_0 > 0$, so the bias decays to zero faster than any
polynomial rate in $T$, even though $A_i$ is not consistently estimable.
Full details and the proof are given in 
Appendix~\ref{app:proof-lemma1}.

\section{Average effect estimation as an inversion problem}\label{sec:bias-corr}

The average effect $\mu_0 = \mathbb{E}[\mu(X,A)]$ depends on the distribution
of the unobserved $A$. We only observe outcomes $Y$, whose distribution is
determined by the distribution of $A$ through the model. This section explains
how to estimate $\mu_0$ by approximately inverting this relationship. The
covariate value $x \in \mathcal{X}$ is fixed throughout and plays no essential
role. Where it aids readability, we suppress $x$ from the discussion.

The model $f(y \,|\, x, \alpha)$ tells us the probability of each outcome $y$
given a particular fixed-effect value $\alpha$. But we are not interested in
$f$ as a function of individual fixed effects. What matters is the mapping that
$f$ induces at the level of \emph{distributions}: if the fixed effect $A$ has
distribution $\pi$ on $\mathcal{A}$, then integrating $f$ against $\pi$
produces the distribution of the outcome $Y$ on $\mathcal{Y}$. Formally, for
any distribution $\pi(\alpha \,|\, x)$ of $A$ given $X = x$,
\begin{equation}\label{eq:mapping}
\mathbb{P}(Y=y\,| \, X=x) = \int_{\mathcal{A}} f(y\,| \, x,\alpha)\,
\pi(\alpha\,| \, x)\,\dd\alpha,
\qquad  y\in\mathcal{Y}.
\end{equation}
This is a linear map from distributions on $\mathcal{A}$ to distributions on
$\mathcal{Y}$. Its input is a density $\pi(\cdot \,|\, x)$ on $\mathcal{A}$,
which is an infinite-dimensional object. Its output is a vector of
$n_{\mathcal{Y}} = |\mathcal{Y}|$ outcome probabilities, which is
finite-dimensional for any given $T$.

The data identify the output of~\eqref{eq:mapping}, that is, the outcome
probabilities $\mathbb{P}(Y = y \,|\, X = x)${, while} the average effect $\mu_0$
depends on the input, the true distribution $\pi_0(\cdot \,|\, x)$. Estimating
$\mu_0$ therefore {corresponds to} inverting the distributional
mapping~\eqref{eq:mapping}: given the observed output, recover enough about the
input distribution to compute $\mu_0 = \mathbb{E}[\mu(X,A)]$.

{Unfortunately,} this is not possible in general. The input is infinite-dimensional
and the output is finite-dimensional, so many different input distributions
produce the same output. The average effect $\mu_0$, which depends on $\pi_0$
through~\eqref{eq:avg_effect}, is therefore typically not point-identified for
finite $T$.

\subsection*{Approximate inversion}

Although the full inversion fails, it succeeds when the input distribution is
restricted to a finite-dimensional class. Let $\phi_x^{(j)}(\alpha)$,
$j=1,\dots,J$, be a collection of basis functions and consider the linear span
\[
\Pi_{\phi_x} := \left\{\pi_x:\mathcal{A}\to\mathbb{R}\ \big|\
\pi_x(\alpha)=\sum_{j=1}^{J} \nu_x^{(j)}\,\phi_x^{(j)}(\alpha),\
\nu_x^{(j)}\in\mathbb{R}\right\}.
\]
If $\pi_0(\cdot\,|\,x)$ lies in $\Pi_{\phi_x}$, the dimensions match: the
mapping~\eqref{eq:mapping} becomes a linear system with $n_{\mathcal{Y}}$
equations and $J \leq n_{\mathcal{Y}}$ unknowns. Under the rank condition that
the $n_{\mathcal{Y}}\times J$ matrix with entries
$\bigl[\int_{\mathcal{A}} f(y_{(k)}\,| \, x,\alpha)\,
\phi_x^{(j)}(\alpha)\,\dd\alpha\bigr]_{k,j}$
has rank $J$, this system is invertible and the distribution
$\pi_0(\cdot\,|\,x)$ can be recovered from the observed outcome probabilities.

The inversion yields an estimating function
$m_T(\cdot,x):\mathcal{Y}\to\mathbb{R}$ that satisfies, \emph{for any
$\pi_0(\cdot\,|\,x) \in \Pi_{\phi_x}$},
\begin{equation}\label{eq:exact_moment}
\mathbb{E}[m_{T}(Y,x)\,| \, A=\alpha,X=x]=\mu(x,\alpha),
\quad \text{for all } \alpha\in\mathcal{A}.
\end{equation}
This condition holds for every value of $\alpha$, so integrating both sides
over $A$ eliminates the fixed effect entirely:
$\mathbb{E}[m_T(Y,x)\,|\,X=x] = \mathbb{E}[\mu(x,A)\,|\,X=x]$. The
left-hand side depends only on the observed data. When~\eqref{eq:exact_moment}
holds, $\mu_0$ is identified and can be estimated without any knowledge of the
distribution $\pi_0$ or of the individual $A_i$.

The restriction $\pi_0(\cdot\,|\,x)\in\Pi_{\phi_x}$ is, of course, too strong
to maintain as an assumption. AOI uses $\Pi_{\phi_x}$ as an
\emph{approximation device}. Even when $\pi_0(\cdot\,|\,x)$ does not lie in
$\Pi_{\phi_x}$, there is a unique element $\pi_x^*\in\Pi_{\phi_x}$ whose image
under~\eqref{eq:mapping} matches the observed outcome probabilities, that is,
$\pi_x^*$ uniquely solves
\[
\mathbb{P}(Y=y\,| \, X=x)=\int_{\mathcal{A}} f(y\,| \, x,\alpha)\,
\pi_x^*(\alpha)\,\dd\alpha, \qquad y\in\mathcal{Y}.
\]
AOI estimates $\mathbb{E}[\mu(x,A)]$ by
$\mathbb{E}_{\pi_x^*}[\mu(x,A)]$, the average-effect functional evaluated at
$\pi_x^*$ instead of $\pi_0(\cdot\,|\,x)$. Since $\pi_x^*$ only approximates
$\pi_0(\cdot\,|\,x)$, the exact condition~\eqref{eq:exact_moment} no longer
holds. Over all distributions $\pi_0$, we can only achieve the approximation
\begin{equation}\label{eq:approx_moment}
\mathbb{E}[m_{T}(Y,x)\,| \, A=\alpha,X=x]\approx\mu(x,\alpha),
\quad \text{for all } \alpha\in\mathcal{A},
\end{equation}
with an error that depends on how well $\Pi_{\phi_x}$ captures
$\pi_0(\cdot\,|\,x)$.

The approximation error in~\eqref{eq:approx_moment} can be made small because
the dimension of $\Pi_{\phi_x}$ can grow with $T$. This is possible because the
number of distinct outcome values $n_{\mathcal{Y}} = |\mathcal{Y}|$ itself
grows with $T$.\footnote{For instance, in a binary-outcome panel model with $T$
periods, $n_{\mathcal{Y}}=2^T$.} As $T$ increases, there are more outcome
probabilities available to pin down the input distribution, so the approximating
space $\Pi_{\phi_x}$ can be made richer while the rank condition continues to
hold. Under appropriate smoothness conditions on $\mu$ and $\pi_0$, this means
the approximation error in~\eqref{eq:approx_moment} vanishes rapidly as $T$
grows. The formal rates are established in Section~\ref{sec:asymptotics}.

A central feature of the procedure described above is that it operates entirely
at the level of distributions. The distributional mapping~\eqref{eq:mapping}
takes as input the distribution of $A$ and produces as output the distribution
of $Y$. Inverting this mapping recovers information about the distribution of
$A$, which is all that is needed to compute $\mu_0 = \mathbb{E}[\mu(X,A)]$. At
no point does the method construct an estimator of the fixed effect $A_i$ for
any individual unit $i$. All that is needed is the aggregate distribution of
outcomes, not the ability to trace outcomes back to individual fixed-effect
values.

This is a fundamental difference from standard bias-correction methods, which
first estimate each $A_i$ and then correct the resulting bias. Those methods
require $A_i$ to be consistently estimable, which places them in regime (B)
of the classification in Section~\ref{sec:introduction}. Because AOI bypasses
individual fixed-effect estimation, it applies equally in regimes (B) and (C).
Whether or not $A_i$ can be consistently estimated is irrelevant to the
procedure. This is the reason why AOI remains valid in settings where existing
methods fail.

The approach is related to functional differencing
\citep{bonhomme2012functional}, which constructs exact moment conditions of the
form~\eqref{eq:exact_moment} that hold for \emph{any} distribution $\pi_0$
\citep[see also][for applications to average marginal
effects]{dano2023transition, aguirregabiria2024identification}. Exact moment
conditions of this kind exist only in specific models and for specific choices
of $\mu_0$. AOI works with the approximate moment
conditions~\eqref{eq:approx_moment} instead, which are broadly available, at
the cost of a bias that vanishes as $T$ grows.

\section{The AOI estimator}
\label{sec:estimator}

This section presents the AOI estimator. 
Section~\ref{subsec:prior-posterior} introduces the prior and posterior 
densities that serve as building blocks. Section~\ref{subsec:transition} 
defines the transition matrix $Q(x)$, with entries given by posterior predictive probabilities,
and its pseudoinverse. 
Section~\ref{subsec:def-estimator} presents the estimator itself and 
characterizes the sieve space it implicitly uses. 
Section~\ref{sec:iterated} shows that the estimator arises as the limit 
of an iterated bias correction.

\subsection{Prior, posterior, and predictive densities}
\label{subsec:prior-posterior}

The AOI estimator is built from a user-chosen prior density 
$\pi_{\rm prior}(\alpha\,| \, x)$ for $A$ given $X=x$. This prior is not a 
belief about $A$ in the Bayesian sense; rather, it is a computational device 
that determines the function space $\Pi_{\phi_x}$ used to approximate the unknown true density 
$\pi_0(\cdot\,| \, x)$. As we show in Section~\ref{subsec:def-estimator}, the 
choice of prior pins down functions that span $\Pi_{\phi_x}$,
but the AOI estimator remains consistent as $T\to\infty$ for any prior satisfying the 
conditions below.

We require that $\pi_{\rm prior}(\alpha\,| \, x)$ integrates to one and satisfies 
the following positivity condition:
\begin{align}
     \pi_{\rm prior}\left(\alpha \,| \,x \right) > 0 , \qquad 
     \text{for all $\alpha \in {\cal A}$ and $x \in {\cal X}$.}
     \label{ConditionPrior}
\end{align}
The prior need not depend on $x$: choosing 
$\pi_{\rm prior}(\alpha\,| \, x) = \pi_{\rm prior}(\alpha)$ simplifies 
computation, but $x$-dependent priors are permitted. The prior  
$\pi_{\rm prior}$ may differ from the true $\pi_0$.

Given $\pi_{\rm prior}$, the posterior density of $A$ conditional on $Y=y$
and $X=x$ follows from Bayes' rule:
\begin{align}
   \pi_{\rm post}(\alpha \,|\, y,x) 
     = \frac{ f(y \, | \,x,\alpha) \, \pi_{\rm prior}(\alpha\,|\,x)} 
     {p_{\rm prior}(y \, | \, x)} \, ,
     \label{DefPost}
\end{align}     
where 
\begin{align}
p_{\rm prior}(y \, |\, x) 
= \int_{\cal A} f( y \, | \,x,\alpha) \, 
\pi_{\rm prior}( \alpha\,|\,x) \, \dd\alpha
\label{eq:prior_predictive}
\end{align}
is the prior predictive probability of outcome $y$. We assume that the prior 
is chosen such that
\begin{align}
     p_{\rm prior}(y \, |\, x) > 0 , \qquad \text{for all   
     $y\in\mathcal{Y}$ and $x \in {\cal X}$.}
     \label{ConditionPrior2}
\end{align}
Condition~\eqref{ConditionPrior2} ensures that the posterior~\eqref{DefPost} 
is well-defined for every possible outcome $y\in\mathcal{Y}$. It is automatically satisfied when 
$\pi_{\rm prior}$ satisfies~\eqref{ConditionPrior} and $\mathcal{A}$ has 
positive Lebesgue measure.

\subsection{The transition matrix and its pseudoinverse}
\label{subsec:transition}

Given $x \in {\cal X}$, the posterior predictive 
probability of a ``future" outcome $\widetilde y\in {\cal Y}$ after having 
observed $y \in {\cal Y}$ is
\begin{align}
    Q( \widetilde y \, | \, y,x) = \int_{\cal A} f \left(\widetilde y 
    \, \big| \, x,\alpha \right) \pi_{\rm post} \left( \alpha \, 
    \big| \, y,x \right) \dd\alpha \, .
    \label{DefQ}
\end{align}
Collecting these probabilities into a matrix, let $Q(x)$ be the 
$n_{\cal Y} \times n_{\cal Y}$ matrix with entries
$Q_{k,\ell}(x) = Q(y_{(k)} \, | \, y_{(\ell)},x)$.
Thus $Q(x)$ is a transition matrix: its $(k,\ell)$-entry is the 
probability that an independent replicate of $Y$ equals $y_{(k)}$, given 
that the original observation was $y_{(\ell)}$ and that the prior 
$\pi_{\rm prior}$ is used to form beliefs about $A$. We have the following lemma from \cite{dhaene2023approximate}.

\begin{lemma}
    \label{lemma:MatrixQ}
     Let $x \in {\cal X}$.
     Assume that $p_{\rm prior}(y \, |\, x) > 0$ for all $y \in {\cal Y}$.
     Then $Q(x)$ is diagonalizable and all its eigenvalues are real 
     numbers in the interval $[0,1]$.
\end{lemma}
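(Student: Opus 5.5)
The plan is to exhibit $Q(x)$ as a product of the form $D^{-1}S$, with $D$ a positive diagonal matrix and $S$ symmetric positive semidefinite. Such a matrix is similar to the symmetric matrix $D^{-1/2} S D^{-1/2}$, which settles diagonalizability and the reality and nonnegativity of the eigenvalues. The upper bound $1$ will then come from the stochastic structure of $Q(x)$.

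\emph{Step 1: factorization.} Substituting \eqref{DefPost} into \eqref{DefQ} gives
\[
Q_{k,\ell}(x) = \frac{1}{p_{\rm prior}(y_{(\ell)}\,|\,x)}\int_{\cal A} f(y_{(k)}\,|\,x,\alpha)\, f(y_{(\ell)}\,|\,x,\alpha)\,\pi_{\rm prior}(\alpha\,|\,x)\,\dd\alpha =: \frac{S_{k\ell}}{p_\ell}.
\]
Hence $Q(x) = S D^{-1}$, where $D = \diag(p_1,\dots,p_{n_{\cal Y}})$ and $p_\ell := p_{\rm prior}(y_{(\ell)}\,|\,x) > 0$ by assumption. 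The matrix $S$ is a Gram matrix in $L^2(\pi_{\rm prior})$ of the functions $\alpha\mapsto f(y_{(k)}\,|\,x,\alpha)$. These functions are bounded by $1$, so the integrals are finite. Consequently $S$ is symmetric, and for any vector $v$,
\[
v^\top S v = \int \Bigl(\sum_k v_k f(y_{(k)}\,|\,x,\alpha)\Bigr)^2 \pi_{\rm prior}(\alpha\,|\,x)\,\dd\alpha \ge 0,
\]
so $S$ is positive semidefinite.

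\emph{Step 2: similarity.} We have $D^{-1/2} Q(x) D^{1/2} = D^{-1/2} S D^{-1/2} =: M$, and $M$ is real symmetric positive semidefinite. By the spectral theorem, $M$ is orthogonally diagonalizable with real eigenvalues in $[0,\infty)$. Therefore $Q(x) = D^{1/2} M D^{-1/2}$ is diagonalizable with the same eigenvalues.

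\emph{Step 3: upper bound.} Each column of $Q(x)$ sums to one, because
\[
\sum_k Q_{k,\ell}(x) = \int \Bigl(\sum_{\tilde y} f(\tilde y\,|\,x,\alpha)\Bigr)\pi_{\rm post}(\alpha\,|\,y_{(\ell)},x)\,\dd\alpha = 1,
\]
and all entries are nonnegative. Thus $Q(x)^\top$ is row-stochastic, so its $\infty$-norm equals $1$. The spectral radius of $Q(x)$ equals that of $Q(x)^\top$, which is at most $1$. Combined with Step 2, every eigenvalue lies in $[0,1]$.

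The only step requiring care is Step 1: recognizing the symmetric Gram-matrix structure hidden behind the asymmetric normalization by $p_{\rm prior}(y\,|\,x)$, and justifying the integrals. Both are straightforward here since $\mathcal{Y}$ is finite and $f\le 1$. Steps 2 and 3 are routine linear algebra.
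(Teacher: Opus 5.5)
Your proof is correct and follows essentially the same route as the paper. The paper defers the proof to Dhaene and Weidner, but its appendix relies on exactly your similarity $\widetilde{Q}(x)=P_{\rm p}(x)^{-1/2}Q(x)P_{\rm p}(x)^{1/2}=\widetilde{\mathbf{F}}(x)\widetilde{\mathbf{F}}(x)^*$ (your $D^{-1/2}SD^{-1/2}$), symmetric positive semidefinite, together with the fact that $Q(x)$ is a transition matrix with unit column sums to get eigenvalues in $[0,1]$.
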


Let $\lambda_1(x) \ge \ldots \ge 
\lambda_{n_{\mathcal Y}}(x)$ denote the eigenvalues of 
$Q(x)$, ordered in descending order, and let
 $U(x)$ be the $n_{\mathcal Y}\times n_{\mathcal Y}$ 
matrix whose columns are the corresponding right eigenvectors. 
Define
$ \Lambda(x) := \operatorname{diag}\bigl(\lambda_k(x)\bigr)_{k=1}^{n_{\mathcal Y}}. $
By Lemma~\ref{lemma:MatrixQ}, 
$\lambda_k(x)\in[0,1]$ for all $k$, and
$$
Q(x)
=
U(x)\,\Lambda(x)\,U^{-1}(x).
$$
Our definition of the AOI estimator involves the inverse $Q(x)^{-1}$, or, when $Q(x)$ is singular, its Drazin inverse,
\[
Q(x)^D
:=
U(x)\,\Lambda(x)^D\,U(x)^{-1},
\]
where
$ \Lambda(x)^D$ is obtained from $ \Lambda(x)$ by inverting the nonzero eigenvalues $\lambda_k(x)$ and setting the zero eigenvalues to zero. If $Q(x)$ is nonsingular, then $Q(x)^D=Q(x)^{-1}$.
In general, when $Q(x)$ is singular, the Drazin inverse differs from the Moore-Penrose pseudoinverse, unless $Q(x)$ is symmetric, which is usually not the case in our setup.

\subsection{Definition of the estimator}
\label{subsec:def-estimator}

Suppose we are interested in estimating the average effect 
$\mu_0=\mathbb{E}[\mu(X,A)]$. The AOI estimator is
\begin{align}
   \widehat \mu^{(\infty)} &:= \frac 1 n \sum_{i=1}^n 
   w^{(\infty)} ( Y_i,X_i),
   \label{eq:AOI_estimator}
\end{align}
where the estimating function $w^{(\infty)}$ is defined as
\begin{align}
    w^{(\infty)}(y,x) 
  &:= \sum_{\tilde y \in {\cal Y}} 
   \left(\int_{\cal A} \mu( x, \alpha ) \,    
   \pi_{\rm post}(\alpha \,|\, \tilde y,x) \, \dd\alpha\right)
   \bigl\{Q(x)^D\bigr\}_{\tilde y,y}.
   \label{eq:Winf}
\end{align}
The estimating function has a transparent structure. For each hypothetical 
outcome $\tilde{y}$, the integral computes the posterior mean of $\mu(x,A)$. These posterior means are 
then reweighted by the entries of the Drazin inverse $Q(x)^D$, 
which corrects for the distortion introduced by using the posterior rather 
than the true conditional distribution of $A$.

The estimator uses the approximate moment condition
\[
\mathbb{E}\!\left[w^{(\infty)}(Y,X)
-\mu(X,A)\right]\approx 0,
\]
where the approximation becomes exact under the conditions given below.

As formally established in Theorem~\ref{theorem.inf}, the AOI estimator is exactly unbiased whenever, 
for all $x \in {\cal X}$, there exists 
$\nu(\cdot,x):\mathcal{Y}\to\mathbb{R}$ such that
\begin{align}
   \pi_0(\alpha\,| \, x) = \pi_{\rm prior}(\alpha\,|\,x)
      \sum_{y \in {\cal Y}} \nu(y,x) \, f(y \,| \, x,\alpha), \qquad \text{for all } \alpha\in\mathcal{A}.
      \label{eq:exact_unbiased}
\end{align}
In the notation of Section~\ref{sec:bias-corr}, condition~\eqref{eq:exact_unbiased} 
means that $\pi_0(\cdot\,| \, x)$ lies in the sieve space $\Pi_{\phi_x}$ with 
basis functions
\[
\phi_x^{(k)}(\alpha) := \pi_{\rm prior}(\alpha\,|\,x)\,
f(y_{(k)}\,| \, x,\alpha), 
\qquad k=1,\dots,n_{\mathcal{Y}}.
\]
This particular choice of sieve space -- prior-weighted likelihood 
functions -- has two desirable properties:
\begin{enumerate}[(i)]
\item It enables the interpretation of our method as a bias correction that is iterated infinitely many times, as developed in 
Section~\ref{sec:iterated}.
\item It guarantees that $\widehat\mu^{(\infty)}$ is exactly unbiased
whenever $\mu_0$ is a linear combination of the 
outcome probabilities $\mathbb{P}(Y=y\,|\,X=x)$, $y\in\mathcal{Y}$, $x\in\mathcal{X}$.\footnote{Or, more primitively, that, for all $x\in\mathcal{X}$, $\mu(x,\cdot)$ is a linear combination of the 
likelihood functions $f(y\,| \, x,\cdot)$, $y\in\mathcal{Y}$.} This is natural, 
since such average effects are point-identified.
\end{enumerate}

\subsection{Interpretation as iterated bias correction}
\label{sec:iterated}

We now show that $w^{(\infty)}$ arises as the limit of an iterated 
bias correction, extending the approach of 
\citet{dhaene2023approximate} to average effects.
The starting point is the plug-in estimating function
\begin{equation}
\label{eq:w0}
w^{(0)}(y,x)
=
\int_{\mathcal{A}} \mu(x,\alpha)\,
\pi_{\rm post}(\alpha\,| \, y,x)\,\dd\alpha,
\end{equation}
which replaces the unknown $\pi_0(\alpha\,| \, x)$ by the posterior
$\pi_{\rm post}(\alpha\,| \, y,x)$. This introduces a bias
\begin{align}
   \mathbb{E}\!\left[w^{(0)}(Y,X)-\mu(X,A)\right]
   &=  \mathbb{E}\!\left[\widetilde\mu^{(0)}(X,A)\right],
   \label{eq:plugin_bias}
\end{align}
where
$      \widetilde\mu^{(0)}(x,\alpha) := \sum_{y\in \cal Y}
       w^{(0)}(y,x)\, f(y\,| \, x,\alpha) - \mu(x,\alpha),
$
see Appendix~\ref{app:plugin_bias} for details. The right-hand side
of~\eqref{eq:plugin_bias} has the same form as the average effect
$\mu_0 = \mathbb{E}[\mu(X,A)]$, but with $\mu$ replaced by the conditional
bias function $\widetilde\mu^{(0)}$. The bias of $w^{(0)}$ is therefore itself
an average effect, and can be estimated by the same plug-in construction.
Replacing $\pi_0$ by $\pi_{\rm post}$ in $\mathbb{E}[\widetilde\mu^{(0)}(X,A)]$
yields the estimating function
\[
   b^{(0)}(y,x) := \int_{\cal A} \widetilde\mu^{(0)}(x,\alpha)\,
   \pi_{\rm post}(\alpha\,|\,y,x)\,\dd\alpha,
\]
which, by the definition of $Q(x)$ in~\eqref{DefQ}, simplifies to
$b^{(0)}(\cdot,x) = Q(x)\,w^{(0)}(\cdot,x) - w^{(0)}(\cdot,x)$.
Subtracting
this estimated bias from $w^{(0)}$ gives the bias-corrected estimating
function
\[
w^{(1)}(\cdot,x) = w^{(0)}(\cdot,x) - b^{(0)}(\cdot,x)
= \bigl(2\,\mathbb{I}_{n_{\cal Y}} - Q(x)\bigr)\,w^{(0)}(\cdot,x),
\]
where $\mathbb{I}_{n_{\cal Y}}$ denotes the $n_{\mathcal{Y}}\times n_{\mathcal{Y}}$ identity matrix.
This bias correction idea can be iterated: The bias of $w^{(q)}$ takes the
analogous form $\mathbb{E}[\widetilde\mu^{(q)}(X,A)]$ with conditional bias
function
\[
\widetilde\mu^{(q)}(x,\alpha) := \sum_{y\in \cal Y}
       w^{(q)}(y,x)\, f(y\,|\,x,\alpha) - \mu(x,\alpha),
\]
which is again an average-effect function. Estimating its expectation by
the same plug-in construction produces
$b^{(q)}(\cdot,x) = Q(x)\,w^{(q)}(\cdot,x) - w^{(0)}(\cdot,x)$ --- the second
term is just the plug-in $\int_{\cal A}\mu(x,\alpha)\pi_{\rm post}(\alpha\,|\,y,x)\dd\alpha
= w^{(0)}(y,x)$, which does not depend on $q$. The next estimating function
is $w^{(q+1)} := w^{(q)} - b^{(q)}$, and rearranging gives the simple
recursion
\begin{equation*}
w^{(q+1)}(\cdot,x) = w^{(0)}(\cdot,x) +
\bigl(\mathbb{I}_{n_{\cal Y}} - Q(x)\bigr)\,w^{(q)}(\cdot,x), \qquad q=0,1,2,\ldots.
\end{equation*}
which yields the closed-form expression
\begin{align}
     w^{(q)}(y,x) &=
     \sum_{\tilde y \in {\cal Y}}
   \left(\int_{\cal A} \mu( x, \alpha ) \,
   \pi_{\rm post}(\alpha \,|\, \tilde y,x) \, \dd\alpha\right)
   \left\{ \sum_{r=0}^q  \bigl[\mathbb{I}_{n_{\cal Y}} - Q(x)
     \bigr]^r \right\}_{\tilde y,y}.
     \label{eq:wq}
\end{align}
The corresponding estimator is 
   $$\widehat \mu^{(q)} := \frac 1 n \sum_{i=1}^n 
   w^{(q)} ( Y_i,X_i).$$
The partial sum 
$\sum_{r=0}^q [\mathbb{I}_{n_{\cal Y}} - Q(x)]^r$ is a truncated 
Neumann series for $Q(x)^D$. The following lemma confirms that 
this series converges (in the appropriate sense) to $Q(x)^D$ as $q\to\infty$,
so that $w^{(q)}$ converges to $w^{(\infty)}$.

\begin{lemma}\label{lemma.qinf}
    For all $y\in{\cal Y}$ and $x\in{\cal X}$,
    \begin{align*}
          w^{(\infty)}(y,x) &=
     \lim_{q \rightarrow \infty}  w^{(q)}(y,x).
    \end{align*}
\end{lemma}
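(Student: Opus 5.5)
The plan is to diagonalize the partial sums of the Neumann series using Lemma~\ref{lemma:MatrixQ}, and then handle the zero eigenvalues separately. Fix $x$ and suppress it. Write $w^{(0)}$ for the column vector $(w^{(0)}(\tilde y))_{\tilde y\in\mathcal Y}$. By~\eqref{eq:wq}, the row vector $(w^{(q)}(y))_y$ equals $w^{(0)\top} S_q$, where $S_q:=\sum_{r=0}^q(\mathbb{I}-Q)^r$.

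\textbf{Spectral form of $S_q$.} Since $Q=U\Lambda U^{-1}$, we have $S_q=U\,\diag\bigl(s_q(\lambda_k)\bigr)U^{-1}$ with $s_q(\lambda)=\sum_{r=0}^q(1-\lambda)^r$. For $\lambda_k\in(0,1]$ we have $|1-\lambda_k|<1$, so $s_q(\lambda_k)\to 1/\lambda_k$. For $\lambda_k=0$, however, $s_q(0)=q+1$ diverges. Hence
\[
S_q = U\Lambda^D_{(q)}U^{-1} + (q+1)\,U E_0 U^{-1},
\]
where $E_0$ is the diagonal indicator of the zero eigenvalues and $\Lambda^D_{(q)}\to\Lambda^D$. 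The first term converges to $Q^D$. The claim therefore reduces to showing $w^{(0)\top}U E_0=0$, i.e.\ $w^{(0)\top}u=0$ for every right null vector $u$ of $Q$. The columns of $UE_0$ are exactly such null vectors, or zero.

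\textbf{Key step: $w^{(0)}$ annihilates $\ker Q$.} This is the main point, and I expect it to follow from a positivity argument. Let $Qu=0$ and define $h(\alpha):=\sum_{y}f(y\,|\,x,\alpha)\,u(y)/p_{\rm prior}(y\,|\,x)$. By~\eqref{DefPost} and~\eqref{DefQ},
\[
(Qu)(\tilde y)=\int_{\mathcal A} f(\tilde y\,|\,x,\alpha)\,h(\alpha)\,\pi_{\rm prior}(\alpha\,|\,x)\,\dd\alpha = 0
\qquad\text{for all } \tilde y .
\]
Multiply by $u(\tilde y)/p_{\rm prior}(\tilde y\,|\,x)$, which is legitimate by~\eqref{ConditionPrior2}, and sum over $\tilde y$. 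This gives $\int h(\alpha)^2\pi_{\rm prior}(\alpha\,|\,x)\,\dd\alpha=0$. By~\eqref{ConditionPrior}, $h=0$ a.e.\ on $\mathcal A$. On the other hand, by~\eqref{eq:w0} and~\eqref{DefPost},
\[
w^{(0)\top}u=\sum_{\tilde y}\int_{\mathcal A}\mu(x,\alpha)\,\frac{f(\tilde y\,|\,x,\alpha)\,\pi_{\rm prior}(\alpha\,|\,x)}{p_{\rm prior}(\tilde y\,|\,x)}\,u(\tilde y)\,\dd\alpha=\int_{\mathcal A}\mu(x,\alpha)\,h(\alpha)\,\pi_{\rm prior}(\alpha\,|\,x)\,\dd\alpha=0 .
\]
Here I would check that the integral is finite, using $\mu(x,\cdot)\in L^2(\mathcal A)$ together with boundedness of $f$ and the prior, or simply take integrability of $\mu\,\pi_{\rm prior}$ as implicit in the definition of $w^{(0)}$.

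\textbf{Conclusion.} Combining the two steps, $w^{(0)\top}S_q=w^{(0)\top}U\Lambda^D_{(q)}U^{-1}\to w^{(0)\top}U\Lambda^DU^{-1}=w^{(0)\top}Q^D$, which is exactly $w^{(\infty)}(\cdot,x)$ by~\eqref{eq:Winf}. The only delicate point is the divergent zero-eigenvalue part. It is killed precisely because $w^{(0)}$ lies in the range of the prior-weighted likelihood structure, which is orthogonal to $\ker Q$.
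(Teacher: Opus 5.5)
Your proof is correct and follows the same route as the paper: spectrally decompose the Neumann partial sums, use $|1-\lambda_k|<1$ on the nonzero eigenvalues, and kill the divergent zero-eigenvalue part by showing that $\ker Q(x)$ is annihilated by the prior-weighted likelihood structure from which $w^{(0)}$ is built. The only difference is packaging: the paper works with the symmetrized matrix $\widetilde{Q}(x)=P_{\rm p}(x)^{-1/2}Q(x)P_{\rm p}(x)^{1/2}=\widetilde{\mathbf{F}}(x)\widetilde{\mathbf{F}}(x)^*$ and invokes $\ker\widetilde{Q}(x)=\operatorname{range}(\widetilde{\mathbf{F}}(x))^{\perp}$, whereas your identity $\int_{\mathcal A}h^2\pi_{\rm prior}\,\dd\alpha=0$ is exactly $\langle\widetilde u,\widetilde{Q}(x)\widetilde u\rangle=\|\widetilde{\mathbf{F}}(x)^*\widetilde u\|^2=0$ for $\widetilde u=P_{\rm p}(x)^{-1/2}u$, written in coordinates using the non-symmetric eigenbasis of $Q(x)$ from Lemma~\ref{lemma:MatrixQ}.
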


\noindent
The convergence holds because the eigenvalues of 
$\mathbb{I}_{n_{\cal Y}}-Q(x)$ lie in $[0,1)$ for all nonzero 
eigenvalues of $Q(x)$, so the geometric series converges on the 
range of $Q(x)$. Eigenvalues equal to zero contribute divergent 
terms, but these lie in the kernel of $Q(x)$ and are annihilated 
when multiplied by the posterior means.

\section{Asymptotic theory}
\label{sec:asymptotics}

This section establishes the large-sample properties of the AOI estimator.
We study the estimator $\widehat{\mu}^{(\infty)}$ defined in \eqref{eq:AOI_estimator}--\eqref{eq:Winf}.
The analysis proceeds in two steps. Section~\ref{subsec:bias} characterizes 
the bias $\mu_*^{(\infty)}-\mu_0$, where 
$\mu_*^{(\infty)}:=\mathbb{E}[w^{(\infty)}(Y,X)]$, and establishes 
rates of convergence. Section~\ref{sec:InferenceKnownTheta} provides 
asymptotic normality and feasible inference under asymptotics where 
$n,T\to\infty$ jointly.

\subsection{Bias}\label{subsec:bias}

\subsubsection{Characterization of the bias}\label{subsubsec:bias-char}

The bias of $\widehat{\mu}^{(\infty)}$ admits a clean decomposition in 
terms of two approximation errors. Fix $x \in\mathcal{X}$ and define the following two 
subspaces of $L^2(\mathcal{A})$.

\paragraph{Likelihood span.}
Let $\mathcal{F}(x)$ denote the subspace of $L^2(\mathcal{A})$ spanned by the likelihood functions 
$\{f(y\,| \, x,\cdot):y\in\mathcal{Y}\}$, and write
\[
\mu(x,\cdot)=\mu_{\mathcal{F}(x)}(x,\cdot) + 
\mu_{\mathcal{F}(x)^\perp}(x,\cdot),
\]
where $\mu_{\mathcal{F}(x)}$ and $\mu_{\mathcal{F}(x)^\perp}$ denote the 
orthogonal projections of $\mu(x,\cdot)$ onto $\mathcal{F}(x)$ and 
its complement, respectively. The residual 
$\mu_{\mathcal{F}(x)^\perp}$ measures how well the likelihood basis 
approximates the average-effect functional.

\paragraph{Prior-weighted likelihood span.}
Let $\mathcal{P}(x)$ denote the subspace of $L^2(\mathcal{A})$ spanned by the functions
$\phi_x^{(k)}(\alpha)=\pi_{\rm prior}(\alpha\,|\,x)
f(y_{(k)}\,| \, x,\alpha)$, 
$k=1,\dots,n_{\mathcal{Y}}$, and write
\[
\pi_0(\cdot\,| \, x)=\pi_{\mathcal{P}(x)}(\cdot\,| \, x) + 
\pi_{\mathcal{P}(x)^\perp}(\cdot\,| \, x),
\]
where $\pi_{\mathcal{P}(x)}$ and $\pi_{\mathcal{P}(x)^\perp}$ are the 
projections of $\pi_0(\cdot\,| \, x)$ onto $\mathcal{P}(x)$ and its complement. The residual 
$\pi_{\mathcal{P}(x)^\perp}$ measures how well linear combinations of the prior-weighted likelihood 
functions approximate the true fixed-effect distribution.

The following theorem states our main result on bias.

\begin{theorem}\label{theorem.inf} Suppose that $\cal A$ is compact and the prior is uniform, i.e., 
   $\pi_{\rm prior}(\alpha\,|\,x)=(\int_\mathcal{A}1 \dd\alpha )^{-1}$ for all $\alpha\in\mathcal{A}$ and $x\in\mathcal{X}$. Then
   \[
   \mu_*^{(\infty)}-\mu_0 = -\,\mathbb{E}\!\left[\int_{\mathcal{A}}
   \mu_{\mathcal{F}(X)^\perp}(X,\alpha)\,
   \pi_{\mathcal{P}(X)^\perp}(\alpha\,| \, X)\,{\rm{d}}\alpha\right].
   \]
\end{theorem}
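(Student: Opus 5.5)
The plan is to fix $x\in\mathcal{X}$ and show that, under the uniform prior, the function
\[
g_x(\alpha):=\sum_{y\in\mathcal{Y}} w^{(\infty)}(y,x)\,f(y\,|\,x,\alpha)
\]
equals the orthogonal projection $\mu_{\mathcal{F}(x)}(x,\cdot)$. The theorem then follows by integrating against $\pi_0$. The conditional bias is
\[
\mathbb{E}[w^{(\infty)}(Y,x)\,|\,X=x]-\mathbb{E}[\mu(x,A)\,|\,X=x]=\int_{\mathcal{A}}\bigl(g_x(\alpha)-\mu(x,\alpha)\bigr)\pi_0(\alpha\,|\,x)\,\dd\alpha .
\]
If $g_x=\mu_{\mathcal{F}(x)}$, this equals $-\int_{\mathcal{A}}\mu_{\mathcal{F}(x)^\perp}\,\pi_0\,\dd\alpha$. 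With a constant prior, $\mathcal{P}(x)=\mathcal{F}(x)$, so $\pi_{\mathcal{P}(x)}\in\mathcal{F}(x)$ is orthogonal to $\mu_{\mathcal{F}(x)^\perp}$ and drops out. Only $\pi_{\mathcal{P}(x)^\perp}$ remains, and taking expectations over $X$ gives the stated formula. The reduction from $\mu_0$ to the conditional version is the same computation as in Appendix~\ref{app:plugin_bias}.

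The core step is linear algebra, done for fixed $x$ with $x$ suppressed. Let $c=(\int_{\mathcal{A}}1\,\dd\alpha)^{-1}$ and define
\[
G_{y,y'}=\int_{\mathcal{A}} f(y\,|\,\alpha)f(y'\,|\,\alpha)\,\dd\alpha ,\qquad b(y)=\int_{\mathcal{A}}\mu(\alpha)f(y\,|\,\alpha)\,\dd\alpha ,\qquad D=\diag\bigl(p_{\rm prior}(y)\bigr),
\]
where $G$ is the symmetric positive semidefinite Gram matrix and $D$ is invertible by \eqref{ConditionPrior2}. From \eqref{DefPost}, \eqref{DefQ} and \eqref{eq:w0}, the plug-in vector is $w^{(0)}=c\,D^{-1}b$ and $Q=c\,G D^{-1}$. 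Set $M:=c\,D^{-1/2}GD^{-1/2}$. This matrix is symmetric, and $Q=D^{1/2}MD^{-1/2}$.

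Because the Drazin inverse is defined through the eigendecomposition, it is similarity-equivariant. It also coincides with the Moore--Penrose inverse for symmetric matrices. Hence $Q^D=D^{1/2}M^{+}D^{-1/2}$. This gives the row-vector formula
\[
w^{(\infty)\top}=w^{(0)\top}Q^D=c\,b^\top D^{-1/2}M^+D^{-1/2}.
\]

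Orthogonality of $g-\mu$ to $\mathcal{F}$ means $\langle g,f(y')\rangle=b(y')$ for all $y'$, that is, $w^{(\infty)\top}G=b^\top$. Since $\mu_{\mathcal{F}}=\sum_y v(y)f(y\,|\,\cdot)$ for some coefficient vector $v$, we have $b=Gv$, so $b$ lies in the range of $G$. Substituting $G=c^{-1}D^{1/2}MD^{1/2}$ gives
\[
w^{(\infty)\top}G=c\,v^\top G D^{-1/2}M^+D^{-1/2}G=c^{-1}v^\top D^{1/2}MM^+MD^{1/2}=v^\top G=b^\top,
\]
using $MM^+M=M$. Thus $g-\mu\perp\mathcal{F}$. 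Since also $g\in\mathcal{F}$, we get $g=\mu_{\mathcal{F}}$.

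The main obstacle I expect is the handling of the Drazin inverse when $Q$ is singular. The key points to check carefully are the identity $Q^D=D^{1/2}M^+D^{-1/2}$ and the fact that $b$ lies in $\operatorname{range}(G)$. Together these make the kernel directions harmless, matching the informal remark after Lemma~\ref{lemma.qinf}. The uniform prior enters in two places: it makes $Q$ similar to a symmetric matrix through a diagonal scaling, and it makes $\mathcal{P}(x)=\mathcal{F}(x)$.
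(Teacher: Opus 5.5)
Your proof is correct, but it takes a different route from the paper's.

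\textbf{The paper's route.} The paper writes the conditional bias as $\mu(x,\cdot)^*\{G(x)Q(x)^D\mathbf{F}(x)-\mathbb{I}\}\pi_0(\cdot\,|\,x)$. Here $\mathbf{F}(x)$ is the integral operator against the likelihoods and $G(x)=\pi_{\rm p}(x)\mathbf{F}(x)^*P_{\rm p}(x)^{-1}$. It then removes the pieces one at a time:
\begin{itemize}
\item The $\pi_{\mathcal{P}(x)}$ component is killed by the auxiliary identity $GQ^D\mathbf{F}\pi_{\rm p}\mathbf{F}^*=\pi_{\rm p}\mathbf{F}^*$. This is proved through a Moore--Penrose lemma applied to the symmetrized $\widetilde{Q}=\widetilde{\mathbf{F}}\widetilde{\mathbf{F}}^*$, and it holds for any positive prior.
\item The $\mu_{\mathcal{F}(x)}$ component is killed by $QQ^D\mathbf{F}=\mathbf{F}$, which is Lemma~3 of \cite{dhaene2023approximate}.
\item Uniformity of the prior is used only to get $\mu_{\mathcal{F}(x)^\perp}(x,\cdot)^*G(x)=0$.
\end{itemize}

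\textbf{Your route.} You prove a stronger, pointwise statement: $\alpha\mapsto\mathbb{E}[w^{(\infty)}(Y,x)\,|\,A=\alpha,X=x]$ equals the $L^2(\mathcal{A})$ projection $\mu_{\mathcal{F}(x)}(x,\cdot)$. You do this by checking the normal equations $w^{(\infty)\top}G=b^\top$, using only four facts: the diagonal similarity $Q=D^{1/2}MD^{-1/2}$, $Q^D=D^{1/2}M^{+}D^{-1/2}$, $b\in\operatorname{range}(G)$, and $MM^{+}M=M$. The $\pi_{\mathcal{P}(x)}$ term then drops out simply because $\mathcal{P}(x)=\mathcal{F}(x)$.

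\textbf{What your route buys.} It is more self-contained. It needs no external lemmas, and even Lemma~\ref{lemma:MatrixQ} is unnecessary because diagonalizability comes from the similarity. It is also more interpretable: under a uniform prior, AOI is a least-squares fit of $\mu(x,\cdot)$ by the likelihood functions, and the bias conditional on $A=\alpha$ is exactly $-\mu_{\mathcal{F}(x)^\perp}(x,\alpha)$. Running the same computation with the prior-weighted Gram matrix $\int_{\mathcal{A}} f(y\,|\,x,\alpha)f(y'\,|\,x,\alpha)\,\pi_{\rm prior}(\alpha\,|\,x)\,\dd\alpha$ would also handle a non-uniform prior directly in $L^2(\pi_{\rm prior})$, without the change of variables of Remark~\ref{rmk:nonuniform}.

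\textbf{What the paper's route buys.} It keeps the two exactness properties, $\pi_0(\cdot\,|\,x)\in\mathcal{P}(x)$ and $\mu(x,\cdot)\in\mathcal{F}(x)$, as separate statements. The first is established for an arbitrary positive prior, and the decomposition isolates exactly where uniformity is needed.
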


\begin{remark}[Non-uniform priors]
\label{rmk:nonuniform}
The compactness and the uniform-prior assumption in Theorem~\ref{theorem.inf} are without loss of 
generality. For a non-compact $\cal A$ or a non-uniform prior, one can apply a change of variables that 
renders the prior uniform on $[0,1]^{d_a}$ and then invoke Theorem~\ref{theorem.inf} in the 
transformed model. When $d_a=1$, define $\bar{A}=\Pi_{\rm prior}(A\,| \, X)$, 
where 
$\Pi_{\rm prior}(\alpha\,| \, x):=\int_{-\infty}^\alpha 
\pi_{\rm prior}(a\,| \, x)\,{\rm{d}}a$ 
is the prior CDF. The transformed fixed effect $\bar{A}$ is uniform on 
$[0,1]$, and the original model is observationally equivalent to the model with conditional outcome probabilities
\[
{\rm Pr}(Y=y\,|\,X=x,\,\bar{A}=\bar{\alpha}) 
= f(y\,|\,x,\,\Pi_{\rm prior}^{-}(\bar{\alpha}\,| \, x)),
\]
where $\Pi_{\rm prior}^{-}(u\,| \, x)
:=\inf\{\alpha\in\mathbb{R}:u\le\Pi_{\rm prior}(\alpha\,| \, x)\}$ is the conditional quantile function. 
The average effect becomes 
$\mu_0=\mathbb{E}[\mu(X,\Pi_{\rm prior}^{-}(\bar{A}\,| \, X))]$. 
In the multidimensional case ($d_a>1$), a sequential 
conditioning scheme as in \cite{rosenblatt1952remarks} can be used to
transform a non-uniform prior into a uniform prior on $[0,1]^{d_a}$.
\end{remark}

\subsubsection{Discussion of the bias structure}

Theorem~\ref{theorem.inf} decomposes the bias as the inner product of two 
approximation residuals. We discuss each component and then derive the implied 
rates.

\paragraph{(i) Approximation error to $\mu(X,\cdot)$.}
The factor $\mu_{\mathcal{F}(X)^\perp}(X,\alpha)$ is the component 
of $\mu(X,\cdot)$ orthogonal to the likelihood span 
$\mathcal{F}(X)$. It is small when $\mu(X,\cdot)$ is well 
approximated by linear combinations of the likelihood functions 
$f(y\,| \, X,\cdot)$, $y\in\mathcal{Y}$. In particular, 
Theorem~\ref{theorem.inf} implies exact unbiasedness when 
$\mu(x,\cdot)\in\mathcal{F}(x)$ for every $x\in\cal X$: the average effect 
$ \mathbb{E}\left[ \mu(x,A) \right]$
is then a 
linear combination of the outcome probabilities $\mathbb{P}(Y=y\,|\,X=x),y\in\cal Y$, and is 
therefore point-identified. The residual norm
$\|\mu_{\mathcal{F}(X)^\perp}\|_{L^2(\mathcal{A})}$ can be made small when $\mu(X,\cdot)$ is sufficiently smooth. Rates are discussed 
in Section~\ref{subsec:rates}.

\paragraph{(ii) Approximation error to $\pi_0(\cdot\,| \, X)$.}
The factor $\pi_{\mathcal{P}(X)^\perp}(\cdot\,| \, X)$ is the component of 
$\pi_0(\cdot\,| \, X)$ orthogonal to the prior-weighted likelihood span $\mathcal{P}(X)$. It vanishes when $\pi_0(\cdot\,| \, x)\in\mathcal{P}(x)$ for all $x\in\cal X$. In that case, the AOI estimator exactly recovers $\pi_0$ by inverting the
outcome probabilities $\mathbb{P}(Y=y\,|\,X=x)$, and the bias is zero.
The residual norm
$\|\pi_{\mathcal{P}(X)^\perp}\|_{L^2(\mathcal{A})}$ will converge to zero when $\pi_0(\cdot\,| \, X)$ is regular enough. Rates are discussed 
in Section~\ref{subsec:rates}. 

\paragraph{(iii) Rate double robustness.}
Since the bias is the expectation of the inner product of two 
residuals in $L^2(\cal A)$, it is governed by residual product. By the Cauchy--Schwarz inequality,
\begin{equation}\label{eq:CS_bound}
\left|\mu_*^{(\infty)}-\mu_0\right|
\le
\mathbb{E}\!\left[\left\|\mu_{\mathcal{F}(X)^\perp}(X,\cdot)
\right\|_{L^2(\mathcal{A})}
\left\|\pi_{\mathcal{P}(X)^\perp}(\cdot\,| \, X)
\right\|_{L^2(\mathcal{A})}\right].
\end{equation}
Thus, the bias vanishes exactly when either 
$\mu(x,\cdot)\in\mathcal{F}(x)$ or 
$\pi_0(\cdot\,| \, x)\in\mathcal{P}(x)$, for all $x\in\mathcal{X}$ and is small whenever either 
approximation is good. The product structure is analogous to the double 
robustness property in semiparametric estimation 
\citep{funk2011doubly,chernozhukov2018double}, but here it operates at the level 
of rates rather than point identification. The following corollary makes 
this explicit.

\begin{corollary}\label{cor:rates}
  Suppose that, as $T\to\infty$,
  \begin{align*}
    \left\|\mu_{\mathcal{F}(X)^\perp}(X,\cdot)
    \right\|_{L^2(\mathcal{A})} 
    &= O_P(r_{\mu,T}), \\
    \left\|\pi_{\mathcal{P}(X)^\perp}(\cdot\,| \, X)
    \right\|_{L^2(\mathcal{A})}
    &= O_P(r_{\pi,T}).
  \end{align*}
  Then 
  $|\mu_*^{(\infty)}-\mu_0|=O(r_{\mu,T}\,r_{\pi,T})$.
\end{corollary}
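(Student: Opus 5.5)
The plan is to obtain the corollary from the Cauchy--Schwarz bound \eqref{eq:CS_bound}, which follows from Theorem~\ref{theorem.inf}, and then to pass from the two $O_P$ rates to a bound on the expectation. Write
\[
R_{\mu,T}:=\left\|\mu_{\mathcal{F}(X)^\perp}(X,\cdot)\right\|_{L^2(\mathcal{A})},
\qquad
R_{\pi,T}:=\left\|\pi_{\mathcal{P}(X)^\perp}(\cdot\,| \, X)\right\|_{L^2(\mathcal{A})} .
\]
By Theorem~\ref{theorem.inf} and Cauchy--Schwarz in $L^2(\mathcal{A})$ pointwise in $X$, we have $|\mu_*^{(\infty)}-\mu_0|\le \mathbb{E}[R_{\mu,T}R_{\pi,T}]$. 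Both sides are deterministic, so it remains to show $\mathbb{E}[R_{\mu,T}R_{\pi,T}]=O(r_{\mu,T}r_{\pi,T})$. For a non-uniform prior, Remark~\ref{rmk:nonuniform} first reduces to the uniform case, and the residuals are then interpreted in the transformed model.

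The main obstacle is that $O_P$ statements do not control expectations in general. The key step is therefore to read the hypotheses in the form in which they are actually delivered by the rate results of Section~\ref{subsec:rates}. There one obtains deterministic bounds that are uniform over $x\in\mathcal{X}$, namely $\sup_x\|\mu_{\mathcal{F}(x)^\perp}(x,\cdot)\|\le C r_{\mu,T}$ and $\sup_x\|\pi_{\mathcal{P}(x)^\perp}(\cdot\,|\,x)\|\le C r_{\pi,T}$. Under such bounds the product is bounded pointwise and the conclusion is immediate.

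To stay closer to the literal $O_P$ statement, I would instead argue with uniform integrability. The orthogonal projection is a contraction, so $R_{\mu,T}\le\|\mu(X,\cdot)\|_{L^2}$ and $R_{\pi,T}\le\|\pi_0(\cdot\,|\,X)\|_{L^2}$. These envelopes do not depend on $T$. If $\mathbb{E}[\|\mu(X,\cdot)\|\,\|\pi_0(\cdot\,|\,X)\|]<\infty$, then $R_{\mu,T}R_{\pi,T}$ is dominated by an integrable variable. That alone yields $o(1)$, not the rate. To get the rate, I would require the normalized product $Z_T:=R_{\mu,T}R_{\pi,T}/(r_{\mu,T}r_{\pi,T})$ to be uniformly integrable, which is a slight strengthening of the stated tightness of $Z_T$. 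Uniform integrability gives $\sup_T\mathbb{E}[Z_T]<\infty$, and hence $\mathbb{E}[R_{\mu,T}R_{\pi,T}]\le C\,r_{\mu,T}r_{\pi,T}$.

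I would state this interpretation explicitly in the proof, meaning $O_P$ in the $L^1$ or uniform-in-$x$ sense. The remaining steps are routine: multiply the two bounds and take the expectation. The interpretation of the rate conditions is the only point that needs care.
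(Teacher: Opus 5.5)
Your route is the paper's: the corollary is read off from Theorem~\ref{theorem.inf} through the Cauchy--Schwarz bound \eqref{eq:CS_bound}, and the paper multiplies the two rates inside the expectation without further comment. Your caveat is well founded and is a point the paper glosses over. With only $O_P$ hypotheses the conclusion can fail: covariate values of probability $p_T\to 0$ at which both residuals stay of order one leave the hypotheses intact but can contribute bias of order $p_T$. So an $L^1$-type strengthening is genuinely required, either the one you impose (for which $\sup_T\mathbb{E}[Z_T]<\infty$ already suffices, uniform integrability being more than needed) or the uniform-in-$x$ deterministic bounds that Lemmas~\ref{lmm.approx} and~\ref{lmm.approxmu} actually deliver.
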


\noindent
The rate $r_{\mu,T}$ depends on the smoothness of $\mu(\cdot, \cdot)$, and $r_{\pi,T}$ 
on the smoothness of $\pi_0(\cdot\,|\,\cdot)$. Smoothness of either one suffices for 
consistency; smoothness of both yields faster rates.

\subsubsection{Rates of convergence of the approximation errors}
\label{subsec:rates}

We now derive the approximation rates $r_{\mu,T}$ and $r_{\pi,T}$. To keep 
the exposition concrete, we consider the following setting.

\begin{example}[\textbf{Static binary choice without covariates}]
   \label{example:NoCovStaticLogit}
   Consider a static binary choice model without covariates. The outcome $Y$ is 
   the number of successes out of $T$ trials, 
   ${\cal Y} = \{0,\dots,T\}$, and the fixed effect is scalar, 
   ${\cal A}\subset\mathbb{R}$. For a known strictly increasing link 
   function $p:\mathcal{A}\to(0,1)$,  the conditional outcome probabilities are
   \[
    f(y\,| \,\alpha) = \binom{T}{y} p(\alpha)^y(1-p(\alpha))^{T-y},
    \qquad y\in{\cal Y},\ \alpha\in{\cal A}.
   \]
   The choice $p(\alpha)=(1+e^{-\alpha})^{-1}$ gives the logit 
   model. Since there are no covariates, we suppress $x$ from all notation.
\end{example}

This example isolates the core approximation-theoretic challenge: for each unit $i$, only a single draw $Y_i\sim\mathrm{Bin}(T,p(A_i))$ is available to learn about $A_i$.

\paragraph{(i) Approximation rate to $\pi_0(\cdot)$.}

In Example~\ref{example:NoCovStaticLogit}, the space $\mathcal{P}$ 
consists of the functions of the form
\[
\pi_\nu(\alpha) = \pi_{\rm prior}(\alpha)
\sum_{y=0}^T \nu(y)\binom{T}{y}a^{y}(1-a)^{T-y}, 
\qquad a := p(\alpha),
\]
where $\nu(\cdot)$ is any function in $N:=\{\nu:{\cal Y}\to \mathbb{R}\}$.
The functions $\pi_\nu(\alpha)$ are prior-weighted linear combinations of Bernstein basis polynomials in $a=p(\alpha)$. Assume that ${\cal A}=[\underline{\alpha},\overline{\alpha}]$ is a 
compact interval, and let $[\underline{a},\overline{a}]=[p(\underline{\alpha}),p(\overline{\alpha})]$.
Bounding $\|\pi_0(\cdot)-\pi_\nu(\cdot)\|_{L^2(\mathcal{A})}$ then reduces to bounding 
\begin{equation}\label{eq:to_bound}
\inf_{\nu\in N}\sup_{a\in[\underline{a},\overline{a}]}
\left|\pi_0(p^{-1}(a))-    \pi_{\rm prior}(p^{-1}(a))
    \sum_{y=0}^T  \nu(y) \, \binom{T}{y}a^{ y}(1-a)^{T-y}\right|.
\end{equation}
Assuming also that $\pi_{\rm prior}(\cdot)$ is bounded away from zero on $\mathcal{A}$, 
\eqref{eq:to_bound} is bounded if 
\begin{equation}\label{eq:to_bound2}
    \inf_{\nu\in N}\sup_{a\in[\underline{a},\overline{a}]}
    \left|\frac{\pi_0(p^{-1}(a))}{\pi_{\rm prior}(p^{-1}(a))}-    
      \sum_{y=0}^T  \nu(y) \, \binom{T}{y}a^{ y}(1-a)^{T-y}\right|
\end{equation} 
is bounded.
Since the Bernstein basis polynomials $\binom{T}{y}a^{y}(1-a)^{T-y}, y\in\cal Y$, span the 
space of polynomials of degree at most $T$, we can invoke standard results from polynomial approximation 
theory to bound \eqref{eq:to_bound2}. The rate of 
convergence depends on the smoothness of the function 
$a\mapsto \pi_0(p^{-1}(a))/\pi_{\rm prior}(p^{-1}(a))$ on $[\underline{a},\overline{a}]$. 
In particular, by Theorem 8.1 in \cite{devore1993constructive}, we have the following 
lemma.

\begin{lemma}\label{lmm.approx}
   Suppose the data-generating process is that of 
   Example~\ref{example:NoCovStaticLogit}, with ${\cal A}=[\underline{\alpha},\overline{\alpha}]$ 
   compact, and let $\pi_{\rm prior}$ be bounded away from zero on $\mathcal{A}$. 
   If the mapping $a\mapsto\pi_0(p^{-1}(a))/\pi_{\rm prior}(p^{-1}(a))$ is analytic 
   on $[\underline{a},\overline{a}]=[p(\underline{\alpha}),p(\overline{\alpha})]$, then there exists $0<\rho_\pi<1$ such that 
   \[
   \inf_{\nu\in N}\sup_{\alpha\in{\cal A}}
   |\pi_0(\alpha)-\pi_\nu(\alpha)|=O(\rho_\pi^T),
   \]
   and, therefore, $\left\|\pi_{\mathcal{P}^\perp}(\cdot)\right\|_{L^2(\mathcal{A})}
    = O(r_{\pi,T})$ with $r_{\pi,T}=\rho_\pi^T$.
\end{lemma}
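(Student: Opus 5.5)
The plan is to reduce the statement to classical polynomial approximation of an analytic function and then pass the uniform bound back to $\mathcal{A}$ through the prior and the $L^2$ norm.

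Write $g(a):=\pi_0(p^{-1}(a))/\pi_{\rm prior}(p^{-1}(a))$ on $[\underline a,\overline a]$. Since $p$ is strictly increasing and continuous on the compact $\mathcal{A}$, the interval $[\underline a,\overline a]$ is compact and contained in $(0,1)$, and $p^{-1}$ is well defined on it.

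First, the Bernstein basis polynomials $\binom{T}{y}a^y(1-a)^{T-y}$, $y=0,\dots,T$, form a basis of $\mathcal{P}_T$, the space of polynomials of degree at most $T$. This is a standard fact: the $T+1$ functions are linearly independent, as seen by the triangular structure after the substitution $a/(1-a)$, and $\dim\mathcal{P}_T=T+1$. Hence $\{\sum_y\nu(y)\binom{T}{y}a^y(1-a)^{T-y}:\nu\in N\}=\mathcal{P}_T$. The quantity in \eqref{eq:to_bound2} is therefore exactly the best uniform approximation error $E_T(g):=\inf_{P\in\mathcal{P}_T}\sup_{[\underline a,\overline a]}|g-P|$.

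Next, invoke Theorem 8.1 of \cite{devore1993constructive} (Bernstein's theorem). After the affine map of $[\underline a,\overline a]$ onto $[-1,1]$, a function analytic on the closed interval extends holomorphically to some Bernstein ellipse $E_R$ with $R>1$. This is the main point to check: analyticity on a compact interval means a convergent power series around each point, and compactness gives a uniform neighbourhood and hence some ellipse. Bernstein's theorem then gives $E_T(g)\le C R^{-T}$. We set $\rho_\pi:=1/R\in(0,1)$. Since affine reparametrization preserves polynomial degree, the bound transfers back to $[\underline a,\overline a]$.

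Then transfer to $\alpha$. Let $P_T$ be a near-minimizer, with $\sup|g-P_T|\le 2CR^{-T}$, and let $\nu_T$ be its Bernstein coefficients. Since the map $\alpha\mapsto a=p(\alpha)$ is a bijection of $\mathcal{A}$ onto $[\underline a,\overline a]$,
\[
\sup_{\alpha\in\mathcal{A}}|\pi_0(\alpha)-\pi_{\nu_T}(\alpha)|
=\sup_{a}\pi_{\rm prior}(p^{-1}(a))\,|g(a)-P_T(a)|
\le \Bigl(\sup_{\mathcal{A}}\pi_{\rm prior}\Bigr)\,2CR^{-T}.
\]
This needs $\pi_{\rm prior}$ bounded above on $\mathcal{A}$. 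That holds since it is a density: the uniform prior, or any continuous prior on the compact set. If needed we add this as an implicit regularity assumption; the stated lower bound away from zero is what makes $g$ well defined. This gives the first claim, $O(\rho_\pi^T)$.

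Finally, because $\pi_{\nu_T}\in\mathcal{P}$ and the orthogonal projection minimizes the $L^2$ distance to $\mathcal{P}$,
\[
\|\pi_{\mathcal{P}^\perp}\|_{L^2(\mathcal{A})}\le\|\pi_0-\pi_{\nu_T}\|_{L^2(\mathcal{A})}\le|\mathcal{A}|^{1/2}\sup_{\mathcal{A}}|\pi_0-\pi_{\nu_T}|=O(\rho_\pi^T),
\]
using compactness of $\mathcal{A}$, which gives finite Lebesgue measure $|\mathcal{A}|$. The only delicate step is the appeal to Bernstein's theorem, together with the fact that analyticity on a closed interval yields a holomorphic extension to a neighbourhood containing a Bernstein ellipse. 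Everything else is bookkeeping.
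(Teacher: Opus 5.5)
Your proposal is correct and follows the paper's own route: the Bernstein basis spans the polynomials of degree at most $T$, so the problem reduces to best uniform polynomial approximation of the analytic ratio via Theorem 8.1 of \cite{devore1993constructive}, after which the bound is transferred back through multiplication by the prior and then to $L^2(\mathcal{A})$ using compactness of $\mathcal{A}$. You are also right that the transfer step actually needs $\pi_{\rm prior}$ to be bounded \emph{above} on $\mathcal{A}$; the paper's ``bounded away from zero'' condition only ensures the ratio is well defined, and the paper's sketch leaves the upper bound implicit.
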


\begin{remark}[Relaxing analyticity]
\label{rmk:relaxing_pi}
If instead $a\mapsto\pi_0(p^{-1}(a))/\pi_{\rm prior}(p^{-1}(a))$ has $s$ 
continuous derivatives on $[\underline{a},\overline{a}]$, standard results 
\citep{devore1993constructive} yield $r_{\pi,T}=T^{-s}$ instead of  
$r_{\pi,T}=\rho_\pi^T$.
\end{remark}

\paragraph{(ii) Approximation rate to $\mu(\cdot)$.}

A similar approximation argument applies to $\mu(\cdot)$, yielding the following result.

\begin{lemma}\label{lmm.approxmu}
   Suppose the data-generating process is that of 
   Example~\ref{example:NoCovStaticLogit}, with 
   ${\cal A}=[\underline{\alpha},\overline{\alpha}]$ compact, and let $\pi_{\rm prior}$ be bounded away from zero on $\mathcal{A}$. If the mapping 
   $a\mapsto\mu(p^{-1}(a))/\pi_{\rm prior}(p^{-1}(a))$ is 
   analytic on $[\underline{a},\overline{a}]$, then there exists $0<\rho_\mu<1$ such 
   that 
   $\sup_{\alpha\in{\cal A}}|\mu_{\mathcal{F}^\perp}(\alpha)|=O(\rho_\mu^T)$. Therefore,
   $\|\mu_{\mathcal{F}^\perp}(\cdot)\|_{L^2(\mathcal{A})}=O(r_{\mu,T})$ with $r_{\mu,T}=\rho_\mu^T$.
\end{lemma}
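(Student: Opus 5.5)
The plan is to reduce the statement to uniform polynomial approximation in the variable $a=p(\alpha)$, exactly as in the argument leading to Lemma~\ref{lmm.approx}, and then to pass from the $L^2$-projection residual to a sup-norm bound.

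\emph{Step 1: identify $\mathcal{F}$.} In Example~\ref{example:NoCovStaticLogit}, $\mathcal{F}$ is spanned by $\alpha\mapsto\binom{T}{y}p(\alpha)^y(1-p(\alpha))^{T-y}$, $y=0,\dots,T$. These Bernstein basis polynomials span all polynomials of degree at most $T$. Hence
\[
\mathcal{F}=\{\alpha\mapsto q(p(\alpha)) : q \text{ a polynomial of degree}\le T\}.
\]

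\emph{Step 2: analyticity of $\mu\circ p^{-1}$.} I need $a\mapsto\mu(p^{-1}(a))$ to be analytic on $[\underline a,\overline a]$. In the setting of Theorem~\ref{theorem.inf}, where the prior is uniform, this is immediate from the hypothesis. For a general prior I would read the hypothesis together with analyticity of $\pi_{\rm prior}\circ p^{-1}$, so that the product is analytic. Alternatively, I would work in the transformed model of Remark~\ref{rmk:nonuniform}, where the relevant function is exactly $\mu/\pi_{\rm prior}$ composed with $p^{-1}$.

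\emph{Step 3: uniform polynomial approximation.} By Theorem~8.1 in \cite{devore1993constructive}, there is a polynomial $q_T$ of degree $T$ such that
\[
\sup_{a\in[\underline a,\overline a]}|\mu(p^{-1}(a))-q_T(a)|\le C\rho^T
\]
for some $\rho<1$. Set $g_T:=q_T\circ p$, so $g_T\in\mathcal{F}$, and write $h_T:=\mu-g_T$, so $\sup_{\mathcal{A}}|h_T|\le C\rho^T$. Since $g_T\in\mathcal{F}$, the projection residual satisfies
\[
\mu_{\mathcal{F}^\perp}=h_T-P_{\mathcal{F}}h_T.
\]
Because orthogonal projection is a contraction in $L^2$,
\[
\|\mu_{\mathcal{F}^\perp}\|_{L^2(\mathcal{A})}\le\|h_T\|_{L^2(\mathcal{A})}\le |\mathcal{A}|^{1/2}C\rho^T.
\]
This already gives the $L^2$ conclusion with $r_{\mu,T}=\rho^T$.

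\emph{Step 4: the sup-norm bound (main obstacle).} The difficulty is controlling $\sup_{\mathcal{A}}|P_{\mathcal{F}}h_T|$, because the $L^2$ projection is not uniformly bounded in sup norm. I would exploit the fact that $P_{\mathcal{F}}h_T=r\circ p$ for a polynomial $r$ of degree at most $T$, with $\|r\circ p\|_{L^2(\mathcal{A})}\le\|h_T\|_{L^2(\mathcal{A})}\le C'\rho^T$.

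First, change variables via $\mathrm{d}\alpha=\mathrm{d}a/p'(p^{-1}(a))$. Since $p$ is strictly increasing and (I would assume) $C^1$ with $p'$ bounded above and away from zero on the compact $\mathcal{A}$, the $L^2(\mathcal{A})$ norm of $r\circ p$ and the $L^2(\mathrm{d}a)$ norm of $r$ on $[\underline a,\overline a]$ are equivalent up to constants.

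Next, apply Nikolskii's inequality for algebraic polynomials on an interval:
\[
\sup_{[\underline a,\overline a]}|r|\le c\,(T+1)\,(\overline a-\underline a)^{-1/2}\,\|r\|_{L^2(\mathrm{d}a)}.
\]
This yields $\sup_{\mathcal{A}}|P_{\mathcal{F}}h_T|\le C''T\rho^T$.

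Combining the two pieces,
\[
\sup_{\mathcal{A}}|\mu_{\mathcal{F}^\perp}|\le C\rho^T+C''T\rho^T=O(\rho_\mu^T)
\]
for any $\rho_\mu\in(\rho,1)$. The polynomial factor $T$ is absorbed by slightly enlarging the base. The $L^2$ statement then also follows from $\|\cdot\|_{L^2}\le|\mathcal{A}|^{1/2}\sup|\cdot|$.
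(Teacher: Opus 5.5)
Your Steps 1--3 match the paper's argument. The paper gives no separate proof; it only says that the reasoning behind Lemma~\ref{lmm.approx} carries over. That reasoning is that the Bernstein basis spans the polynomials of degree $\le T$ in $a=p(\alpha)$, and Theorem~8.1 of \cite{devore1993constructive} gives a geometric rate. The $L^2$ bound then follows because the projection residual is the best $L^2$ approximation error. That argument only controls the best uniform approximation error $\inf_{g\in\mathcal{F}}\sup_{\mathcal{A}}|\mu-g|$, which is not the sup norm of the $L^2$-projection residual $\mu_{\mathcal{F}^\perp}$ appearing in the statement.

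Your Step~4 goes beyond the paper and closes this gap. You correctly note that orthogonal projection is not a sup-norm contraction, and the Nikolskii inequality costs only a factor $T$, which is absorbed into $\rho_\mu$. This step uses $\|r\|_{L^2(\mathrm{d}a)}\lesssim\|r\circ p\|_{L^2(\mathcal{A})}$, which requires $p'$ bounded above on $\mathcal{A}$. That holds for the logit link, but it is an extra regularity assumption beyond Example~\ref{example:NoCovStaticLogit}.

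Your Step~2 also improves on the paper. What the argument actually needs is analyticity of $\mu\circ p^{-1}$. For a non-uniform prior the stated hypothesis on $\mu/\pi_{\rm prior}$ does not imply it. For example, take $\mu=\pi_{\rm prior}$ with a kink (still bounded away from zero); then $\mu/\pi_{\rm prior}\equiv 1$, yet $\|\mu_{\mathcal{F}^\perp}\|_{L^2(\mathcal{A})}$ decays only polynomially. So restricting to the uniform prior of Theorem~\ref{theorem.inf}, or adding analyticity of $\pi_{\rm prior}\circ p^{-1}$, is genuinely necessary, not just convenient.

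Your alternative route via Remark~\ref{rmk:nonuniform} does not work, however. In the transformed model only $\pi_0$ is divided by $\pi_{\rm prior}$. The average-effect function becomes $\bar\alpha\mapsto\mu(\Pi_{\rm prior}^{-}(\bar\alpha))$, which in the variable $a$ is still $\mu\circ p^{-1}$, not $(\mu/\pi_{\rm prior})\circ p^{-1}$.
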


\begin{remark}[Relaxing analyticity]
\label{rmk:relaxing_mu}
As in Remark~\ref{rmk:relaxing_pi}, the rate becomes $r_{\mu,T}=T^{-s}$ when the mapping 
$a\mapsto\mu(p^{-1}(a))/\pi_{\rm prior}(p^{-1}(a))$ has $s$ 
continuous derivatives.
\end{remark}

\paragraph{Summary.}
Combining Lemmas~\ref{lmm.approx} and~\ref{lmm.approxmu} with 
Corollary~\ref{cor:rates} yields the following conclusion. If both mappings are analytic, the bias decays at rate $O(\rho_\pi^T\rho_\mu^T)$, that is, exponentially in $T$. If only one 
mapping---say, the one involving $\pi_0$---is analytic while the other has $s$ derivatives, the bias decays at rate $O(\rho_\pi^T  T^{-s})$, which remains exponentially fast in $T$.
\subsection{Inference}
\label{sec:InferenceKnownTheta}

We establish asymptotic normality of $\widehat{\mu}^{(\infty)}$ under triangular array
asymptotics where $n \to \infty$ and $T = T_n \to \infty$. Define
$$
\sigma_T^2 := {\rm Var}\left(w^{(\infty)}(Y,X)\right).
$$
We assume that $\sigma^2_T < \infty$ for all $T$; this is implicit in 
Assumption~\ref{ass:lindeberg} below.

\begin{assumption}\label{ass:iid}
The random vectors $(Y_i, X_i, A_i)$, $i = 1, \ldots, n$, are independent and 
identically distributed for each $T$.
\end{assumption}

\begin{assumption}\label{ass:variance_lower_bound}
There exists $\underline{\sigma}^2 > 0$ such that $\sigma_T^2 \geq \underline{\sigma}^2$ 
for all $T$.
\end{assumption}

\begin{assumption}\label{ass:lindeberg}
For all $\epsilon > 0$,
$$
\mathbb{E}\left[\left(\frac{w^{(\infty)}(Y,X) - \mu_*^{(\infty)}}{\sigma_T}\right)^2 
\mathbf{1}\left\{\left|\frac{w^{(\infty)}(Y,X) - \mu_*^{(\infty)}}{\sigma_T}\right| 
> \epsilon\sqrt{n}\right\}\right] \to 0
$$
as $n,T\to \infty$.
\end{assumption}

Assumption~\ref{ass:iid} restates the sampling assumption from 
Section~\ref{sec:Setup} for clarity. Assumption~\ref{ass:variance_lower_bound} is 
a nondegeneracy condition ensuring that 
sampling variability of $w^{(\infty)}(Y,X)$ does not vanish as $T\to\infty$.
Assumption~\ref{ass:lindeberg} is the Lindeberg condition adapted to the 
triangular array setting; it controls the tail behavior of the standardized influence 
function uniformly across the sequence $(n, T_n)$.

\begin{theorem}\label{thm:asymptotic_normality}
Suppose that Assumptions~\ref{ass:iid}--\ref{ass:lindeberg} hold, and that $|\mu_*^{(\infty)}-\mu_0|=O(r_{T})$.
If $n,T \to \infty$ such that 
\begin{equation}\label{eq:rate_condition}
    \sqrt{n} \, r_T \to 0,
\end{equation}
then 
$$
\sqrt{n}\left(\frac{\widehat{\mu}^{(\infty)} - \mu_0}{\sigma_T}\right)
\xrightarrow{d} {\cal N}(0,1).
$$
\end{theorem}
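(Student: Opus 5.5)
The plan is the standard decomposition into a centered sampling term plus a deterministic bias term:
\[
\sqrt{n}\,\frac{\widehat{\mu}^{(\infty)}-\mu_0}{\sigma_T}
= \underbrace{\frac{1}{\sqrt{n}}\sum_{i=1}^n \frac{w^{(\infty)}(Y_i,X_i)-\mu_*^{(\infty)}}{\sigma_T}}_{=:S_n}
\;+\; \sqrt{n}\,\frac{\mu_*^{(\infty)}-\mu_0}{\sigma_T}.
\]
Each term is handled separately, and Slutsky's lemma combines them.

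\textbf{The sampling term $S_n$.} Put $\xi_{n,i}:=\bigl(w^{(\infty)}(Y_i,X_i)-\mu_*^{(\infty)}\bigr)/(\sigma_T\sqrt{n})$ with $T=T_n$. By Assumption~\ref{ass:iid}, for each $n$ the $\xi_{n,i}$, $i=1,\ldots,n$, are i.i.d. They have mean zero, because $\mu_*^{(\infty)}=\mathbb{E}[w^{(\infty)}(Y,X)]$. By the definition of $\sigma_T^2$, $\sum_{i}\mathrm{Var}(\xi_{n,i})=1$. We then check the Lindeberg condition for the row sums. Since the variables are identically distributed,
\[
\sum_{i=1}^n \mathbb{E}\bigl[\xi_{n,i}^2\mathbf{1}\{|\xi_{n,i}|>\epsilon\}\bigr]
= \mathbb{E}\!\left[\Bigl(\tfrac{w^{(\infty)}(Y,X)-\mu_*^{(\infty)}}{\sigma_T}\Bigr)^2 \mathbf{1}\Bigl\{\Bigl|\tfrac{w^{(\infty)}(Y,X)-\mu_*^{(\infty)}}{\sigma_T}\Bigr|>\epsilon\sqrt{n}\Bigr\}\right],
\]
which tends to zero by Assumption~\ref{ass:lindeberg}. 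The Lindeberg--Feller central limit theorem for triangular arrays then gives $S_n\xrightarrow{d}\mathcal{N}(0,1)$.

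\textbf{The bias term.} By hypothesis, $|\mu_*^{(\infty)}-\mu_0|\le C r_T$ for large $T$. Assumption~\ref{ass:variance_lower_bound} gives $\sigma_T\ge\underline{\sigma}>0$. Therefore
\[
\Bigl|\sqrt{n}\,(\mu_*^{(\infty)}-\mu_0)/\sigma_T\Bigr|\le C\sqrt{n}\,r_T/\underline{\sigma}\to 0
\]
by \eqref{eq:rate_condition}. This term is deterministic, so it converges to zero in probability trivially. Slutsky's lemma then yields the claim.

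\textbf{Main obstacle.} The only point needing care is the triangular-array bookkeeping. Both the distribution of $(Y,X)$ and the function $w^{(\infty)}$ change with $T_n$, so a fixed-distribution CLT does not apply. Assumption~\ref{ass:lindeberg} is exactly the condition needed to invoke Lindeberg--Feller along the sequence $(n,T_n)$. The variance lower bound is what makes the bias negligible after normalization by $\sigma_T$. Without it, a shrinking $\sigma_T$ could inflate the bias term, and one would need $\sqrt{n}\,r_T/\sigma_T\to0$ instead.
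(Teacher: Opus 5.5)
Your proposal is correct and follows the paper's proof essentially step for step. It uses the same decomposition into a centered sampling term and a deterministic bias term, and applies Lindeberg--Feller to the rowwise i.i.d.\ triangular array, where Assumption~\ref{ass:lindeberg} is exactly the reduced Lindeberg condition. The bias term is then bounded by $\sqrt{n}\,O(r_T)/\underline{\sigma}\to 0$, and Slutsky's theorem combines the two pieces.
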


When $r_T = \rho^T$ for some $0<\rho <1$, as discussed in 
Section~\ref{subsec:bias}, the condition $\sqrt{n} \, r_T \to 0$ becomes 
$\sqrt{n} \, \rho^T \to 0$ or, equivalently,
$$
T \gg \frac{\log n}{2|\log \rho|}.
$$
Thus, $T$ needs only grow slightly faster than logarithmically in $n$ for valid 
inference. This is a remarkably mild requirement, and stands in contrast with 
the $T \gg n^{\frac{1}{2(s+1)}}$ condition typically needed, for $s$-th order bias correction,
when the bias decays at a polynomial rate $O(T^{-(s+1)})$.

For feasible inference, the variance $\sigma_T^2$ must be estimated. Define
\begin{equation}\label{eq:variance_estimator}
    \widehat{\sigma}_T^2 := \frac{1}{n} \sum_{i=1}^n 
    \left(w^{(\infty)}(Y_i,X_i) - \widehat{\mu}^{(\infty)}\right)^2.
\end{equation}
Consistency of $\widehat{\sigma}_T^2$ in the triangular array setting requires 
control on higher moments.

\begin{assumption}\label{ass:fourth_moment}
There exists $\kappa < \infty$ such that for all $T$,
$$
\mathbb{E}\left[\left(\frac{w^{(\infty)}(Y,X) - \mu_*^{(\infty)}}{\sigma_T}
\right)^4\right] \leq \kappa.
$$
\end{assumption}

\begin{corollary}[Feasible inference]\label{cor:feasible}
Under the conditions of Theorem~\ref{thm:asymptotic_normality} and 
Assumption~\ref{ass:fourth_moment}, we have 
$\widehat{\sigma}_T^2 \xrightarrow{p} \sigma_T^2$ and
$$
\sqrt{n}\left(\frac{\widehat{\mu}^{(\infty)} - \mu_0}{\widehat{\sigma}_T} \right)
\xrightarrow{d} {\cal N}(0,1).
$$
\end{corollary}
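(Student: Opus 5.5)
We prove Corollary~\ref{cor:feasible} in two steps: first $\widehat{\sigma}_T^2/\sigma_T^2 \xrightarrow{p} 1$, then Slutsky's lemma combined with Theorem~\ref{thm:asymptotic_normality}. Working with the ratio is natural here because $\sigma_T^2$ may vary with $T$ (it is only bounded below by Assumption~\ref{ass:variance_lower_bound}). Convergence of the ratio is the precise meaning we will attach to $\widehat{\sigma}_T^2 \xrightarrow{p} \sigma_T^2$. If $\sigma_T^2$ is also bounded above, the difference version follows at once.

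Write $Z_i := (w^{(\infty)}(Y_i,X_i) - \mu_*^{(\infty)})/\sigma_T$. By Assumption~\ref{ass:iid}, for each $T$ the $Z_i$ are i.i.d.\ with mean zero and variance one. Let $\bar Z := n^{-1}\sum_i Z_i$. Expanding the square around $\mu_*^{(\infty)}$ gives
\[
\frac{\widehat{\sigma}_T^2}{\sigma_T^2} = \frac1n\sum_{i=1}^n Z_i^2 - \bar Z^2 .
\]
For the first term, use Chebyshev's inequality row by row in the triangular array. We have $\mathbb{E}[n^{-1}\sum_i Z_i^2] = 1$, and by Assumption~\ref{ass:fourth_moment},
\[
{\rm Var}\Bigl(n^{-1}\sum_i Z_i^2\Bigr) \le n^{-1}\mathbb{E}[Z_i^4] \le \kappa/n \to 0 .
\]
Hence $n^{-1}\sum_i Z_i^2 \xrightarrow{p} 1$. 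For the second term, $\mathbb{E}[\bar Z^2] = 1/n$, so $\bar Z^2 = O_P(1/n)$. Together these give $\widehat{\sigma}_T^2/\sigma_T^2 \xrightarrow{p} 1$, and therefore $\sigma_T/\widehat{\sigma}_T \xrightarrow{p} 1$.

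Finally,
\[
\sqrt n(\widehat{\mu}^{(\infty)}-\mu_0)/\widehat{\sigma}_T = \bigl[\sqrt n(\widehat{\mu}^{(\infty)}-\mu_0)/\sigma_T\bigr]\cdot(\sigma_T/\widehat{\sigma}_T).
\]
The bracketed factor converges in distribution to ${\cal N}(0,1)$ by Theorem~\ref{thm:asymptotic_normality}, whose hypotheses we assume. The second factor converges in probability to $1$. Slutsky's lemma then gives the result.

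The only step needing care is that the argument runs along a triangular array, where the distribution of $Z_i$ changes with $T$. For this reason we avoid the standard law of large numbers and use the uniform fourth-moment bound to get a Chebyshev bound that is uniform in $T$. We also rely on Assumption~\ref{ass:variance_lower_bound} to guarantee that $\sigma_T>0$, so the standardization is well defined.
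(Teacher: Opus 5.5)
Your proposal is correct and follows essentially the same route as the paper: you write $\widehat{\sigma}_T^2/\sigma_T^2$ as the average of squared standardized deviations minus the squared standardized sample mean, apply Chebyshev with the uniform fourth-moment bound to the first term and the $1/n$ second-moment bound to the second, and conclude via continuous mapping and Slutsky with Theorem~\ref{thm:asymptotic_normality}. Like the paper, you establish the ratio form $\widehat{\sigma}_T^2/\sigma_T^2 \xrightarrow{p} 1$, which is the appropriate reading of the claim in this triangular-array setting.
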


\begin{remark}[On Assumption~\ref{ass:variance_lower_bound}]
\label{rmk:variance_lower_bound}
Assumption~\ref{ass:variance_lower_bound} requires that the variance $\sigma_T^2$ 
remains bounded away from zero as $T\to\infty$. A natural lower bound on $\sigma_T^2$  arises from the 
law of total variance:
$$
\sigma_T^2 = {\rm Var}\left(\mathbb{E}\left[w^{(\infty)}(Y,X) 
\,\big|\, X\right]\right) 
+ \mathbb{E}\left[{\rm Var}\left(w^{(\infty)}(Y,X) \,\big|\, X\right)\right].
$$
Under the regularity conditions of Section~\ref{subsec:rates}, as $T \to \infty$,
$$
\mathbb{E}\left[w^{(\infty)}(Y,X) \,\big|\, X\right] \to 
\mathbb{E}\left[\mu(X,A) \,\big|\, X\right].
$$
Consequently, for large $T$,
$$
\sigma_T^2 \geq {\rm Var}\left(\mathbb{E}\left[w^{(\infty)}(Y,X) 
\,\big|\, X\right]\right) 
= {\rm Var}\left(\mathbb{E}\left[\mu(X,A) \,\big|\, X\right]\right) + o(1).
$$
This lower bound is strictly positive whenever the conditional average effect 
$\mathbb{E}[\mu(X,A)|X]$ varies with $X$, which holds generically.
\end{remark}

\section{Random-coefficient binary logit model: numerical results}\label{sec:RCnumerics}

We illustrate the AOI estimator numerically in the random-coefficient binary choice model of Section~\ref{subsec:regimeC-example}. Section~\ref{subsec:RCnumerics-twoblock} reports exact ($n=\infty$) bias and asymptotic standard deviation in the two-block design across a range of $T$ and iteration depths $q$. Section~\ref{subsec:RCnumerics-continuous} complements this with a Monte Carlo study at $n=500$ in a design with a continuous covariate, where we also assess coverage of the variance estimator.

\subsection{Two-block covariate design}\label{subsec:RCnumerics-twoblock}

We set $F=\Lambda$, where $\Lambda$ denotes the standard logistic distribution function, and retain the two-block design
\[
X_{it}=0 \quad \text{for } t\leq T/2,
\qquad
X_{it}=c_T \quad \text{for } t>T/2,
\]
with $T$ even. The average effect of interest is
$
\mu_0 = \mathbb{E}\big[\Lambda(A_{i,1}+A_{i,2})\big].
$
We set the true distribution of $(A_{i,1},A_{i,2})$ equal to the product of two independent logistic distributions with mean $1$ and scale $\frac{1}{2}$ (so the variance is $\frac{1}{4}\pi^2/3$). The prior used by AOI is deliberately severely misspecified: we set it equal to the product of two independent Gaussian distributions with mean $0$ and variance $4\pi^2/3$. Since the two-dimensional integrals entering $Q$ are not available in closed form, both the true distribution, $\pi_0$, and the prior, $\pi_{\rm prior}$, are approximated numerically by $K=1000$ quantiles in each dimension, each with mass $1/K$. The calculations are carried out in Matlab for $n=\infty$, $T\in\{2,4,6,8,10,20,30\}$, and $q\in\{0,1,2,5,10,50,100,1000,10^6,\infty\}$, and we also report a regularized version obtained by truncating eigenvalues of $Q$ below $\lambda_{\min}=10^{-4}$. Because the estimators are linear in the outcome frequencies, the reported bias is the exact fixed-$n$ bias, and the reported asymptotic standard deviation is the exact fixed-$n$ standard deviation times $\sqrt{n}$.

To keep the presentation compact, Tables~\ref{tab:1}--\ref{tab:3} report only $T\in\{2,6,20,30\}$ and $q\in\{0,1,2,10,1000,10^6,\infty\}$, together with the regularized AOI estimator. Here $\mathrm{AOI}(q)$ denotes the $q$-th order bias-corrected estimator, while AOI denotes the limit estimator $\mathrm{AOI}(\infty)$. We consider three scenarios of increasing difficulty to estimate $\mu_0$, whose true value, given our choice of $\pi_0$, is $0.8276$.

Scenario~1 sets $c_T=1$, so the target covariate value $x=1$ is directly observed in the second block. This is the point-identified benchmark in Section~\ref{subsec:regimeC-example}. Table~\ref{tab:1} shows that $\mathrm{AOI}(q)$ converges rapidly to AOI as $q$ increases, and that AOI has zero bias. By $q=10$, the bias is already numerically negligible for all four reported values of $T$, while the increase in asymptotic standard deviation relative to $q=0$ remains moderate. Up to $T=20$, regularization has essentially no visible effect. At $T=30$, however, regularization reduces the reported asymptotic standard deviation of AOI while leaving the bias essentially unchanged.

\begin{table}[htbp]
\centering
\caption{Scenario 1 ($c_T=1$): point-identified benchmark}
\label{tab:1}
\small
\setlength{\tabcolsep}{5pt}
\renewcommand{\arraystretch}{1.12}
\begin{tabular}{l rrrr@{\hspace{0.45cm}}rrrr}
\toprule
& \multicolumn{4}{c}{Bias} & \multicolumn{4}{c}{Asymptotic s.d.} \\
\cmidrule(lr){2-5}\cmidrule(lr){6-9}
& \multicolumn{1}{c}{$T=2$}
& \multicolumn{1}{c}{$T=6$}
& \multicolumn{1}{c}{$T=20$}
& \multicolumn{1}{c}{$T=30$}
& \multicolumn{1}{c}{$T=2$}
& \multicolumn{1}{c}{$T=6$}
& \multicolumn{1}{c}{$T=20$}
& \multicolumn{1}{c}{$T=30$} \\
\midrule
\multicolumn{9}{c}{\textit{Unregularized}} \\
$q=0$
& $-0.0864$ & $-0.0351$ & $-0.0105$ & $-0.0068$
& $0.2595$  & $0.2190$  & $0.1859$  & $0.1882$ \\
$q=1$
& $-0.0258$ & $-0.0044$ & $0.0001$  & $0.0003$
& $0.3300$  & $0.2462$  & $0.1938$  & $0.2263$ \\
$q=10$
& $-0.0001$ & $0.0002$  & $0.0001$  & $0.0001$
& $0.3774$  & $0.2548$  & $0.1950$  & $0.3468$ \\
$q=10^3$
& $0.0000$  & $-0.0000$ & $0.0000$  & $-0.0000$
& $0.3777$  & $0.2552$  & $0.1949$  & $0.2157$ \\
$q=10^6$
& $0.0000$  & $0.0000$  & $0.0000$  & $-0.0000$
& $0.3777$  & $0.2552$  & $0.1949$  & $1.1095$ \\
$q=\infty$
& $0.0000$  & $0.0000$  & $-0.0000$ & $0.0000$
& $0.3777$  & $0.2552$  & $0.1949$  & $0.2444$ \\
\midrule
\multicolumn{9}{c}{\textit{Regularized}} \\
$q=0$
& $-0.0864$ & $-0.0351$ & $-0.0105$ & $-0.0068$
& $0.2595$  & $0.2190$  & $0.1859$  & $0.1792$ \\
$q=1$
& $-0.0258$ & $-0.0044$ & $0.0001$  & $0.0003$
& $0.3300$  & $0.2462$  & $0.1938$  & $0.1843$ \\
$q=10$
& $-0.0001$ & $0.0002$  & $0.0001$  & $0.0001$
& $0.3774$  & $0.2548$  & $0.1950$  & $0.1847$ \\
$q=10^3$
& $0.0000$  & $-0.0000$ & $0.0000$  & $-0.0000$
& $0.3777$  & $0.2552$  & $0.1949$  & $0.1847$ \\
$q=10^6$
& $0.0000$  & $-0.0000$ & $0.0000$  & $-0.0000$
& $0.3777$  & $0.2552$  & $0.1949$  & $0.1847$ \\
$q=\infty$
& $0.0000$  & $-0.0000$ & $0.0000$  & $-0.0000$
& $0.3777$  & $0.2552$  & $0.1949$  & $0.1847$ \\
\bottomrule
\end{tabular}
\end{table}

Scenario~2 sets $c_T=1/2$. Here the target value $x=1$ is not observed, so estimation of $\mu_0$ requires extrapolation. Relative to $q=0$ and $q=1$, increasing $q$ reduces the bias substantially, but the asymptotic standard deviation can increase sharply. At $T=20$, the bias falls from $-0.077086$ at $q=0$ to $-4.100\times 10^{-4}$ for AOI, while the asymptotic standard deviation rises from $0.207323$ to $296.6$. At $T=30$, the same pattern is even more pronounced. Regularization leaves the bias essentially unchanged but dramatically reduces the asymptotic standard deviation of the AOI estimator.

\begin{table}[htbp]
\centering
\caption{Scenario 2 ($c_T=1/2$): extrapolation design}
\label{tab:2}
\small
\setlength{\tabcolsep}{5pt}
\renewcommand{\arraystretch}{1.12}
\begin{tabular}{l rrrr@{\hspace{0.45cm}}rrrr}
\toprule
& \multicolumn{4}{c}{Bias} & \multicolumn{4}{c}{Asymptotic s.d.} \\
\cmidrule(lr){2-5}\cmidrule(lr){6-9}
& \multicolumn{1}{c}{$T=2$}
& \multicolumn{1}{c}{$T=6$}
& \multicolumn{1}{c}{$T=20$}
& \multicolumn{1}{c}{$T=30$}
& \multicolumn{1}{c}{$T=2$}
& \multicolumn{1}{c}{$T=6$}
& \multicolumn{1}{c}{$T=20$}
& \multicolumn{1}{c}{$T=30$} \\
\midrule
\multicolumn{9}{c}{\textit{Unregularized}} \\
$q=0$
& $-0.1491$ & $-0.1103$ & $-0.0771$ & $-0.0640$
& $0.2417$  & $0.2253$  & $0.2073$  & $0.9672$ \\
$q=1$
& $-0.0978$ & $-0.0824$ & $-0.0533$ & $-0.0386$
& $0.3326$  & $0.2973$  & $0.2515$  & $0.6591$ \\
$q=10$
& $-0.0475$ & $-0.0663$ & $-0.0093$ & $-0.0013$
& $0.5884$  & $0.4536$  & $0.3154$  & $0.3790$ \\
$q=10^3$
& $-0.0414$ & $-0.0122$ & $0.0012$  & $0.0004$
& $0.6356$  & $0.8684$  & $0.6761$  & $3.9431$ \\
$q=10^6$
& $-0.0414$ & $-0.0120$ & $-0.0002$ & $-0.0001$
& $0.6356$  & $0.8750$  & $8.6447$  & $8.1778$ \\
$q=\infty$
& $-0.0414$ & $-0.0120$ & $-0.0004$ & $0.0000$
& $0.6356$  & $0.8750$  & $296.578$ & $37216.5$ \\
\midrule
\multicolumn{9}{c}{\textit{Regularized}} \\
$q=0$
& $-0.1491$ & $-0.1103$ & $-0.0771$ & $-0.0640$
& $0.2417$  & $0.2253$  & $0.2073$  & $0.1996$ \\
$q=1$
& $-0.0978$ & $-0.0824$ & $-0.0533$ & $-0.0386$
& $0.3326$  & $0.2973$  & $0.2515$  & $0.2335$ \\
$q=10$
& $-0.0475$ & $-0.0663$ & $-0.0093$ & $-0.0013$
& $0.5884$  & $0.4536$  & $0.3154$  & $0.2726$ \\
$q=10^3$
& $-0.0414$ & $-0.0122$ & $0.0012$  & $0.0004$
& $0.6356$  & $0.8684$  & $0.6757$  & $0.5109$ \\
$q=10^6$
& $-0.0414$ & $-0.0120$ & $0.0004$  & $-0.0001$
& $0.6356$  & $0.8750$  & $1.7061$  & $1.3399$ \\
$q=\infty$
& $-0.0414$ & $-0.0120$ & $0.0004$  & $-0.0001$
& $0.6356$  & $0.8750$  & $1.7061$  & $1.3399$ \\
\bottomrule
\end{tabular}
\end{table}
Scenario~3 sets $c_T=1/\sqrt{T}$, which is the weak-variation design highlighted in Section~\ref{subsec:regimeC-example}. This is the conceptually most interesting and challenging case: the slope $A_{i,2}$ is not consistently estimable, but the target average effect $\mu_0$ remains estimable. Table~\ref{tab:3} shows that the AOI sequence continues to reduce the bias toward zero, but the variance explosion is much stronger than in Scenario~2. At $T=20$, AOI has bias $1.1\times 10^{-3}$ and asymptotic standard deviation $4881$; at $T=30$, the reported asymptotic standard deviation of AOI is $1.433\times 10^{8}$. Regularization again stabilizes the computation sharply, reducing the asymptotic standard deviation at $T=30$ to $3.836$, while preserving the qualitative message that AOI still targets the average effect $\mu_0$ in this weak-variation design.

\begin{table}[htbp]
\centering
\caption{Scenario 3 ($c_T=1/\sqrt{T}$): weak-variation design}
\label{tab:3}
\small
\setlength{\tabcolsep}{5pt}
\renewcommand{\arraystretch}{1.12}
\begin{tabular}{l rrrr@{\hspace{0.45cm}}rrrr}
\toprule
& \multicolumn{4}{c}{Bias} & \multicolumn{4}{c}{Asymptotic s.d.} \\
\cmidrule(lr){2-5}\cmidrule(lr){6-9}
& \multicolumn{1}{c}{$T=2$}
& \multicolumn{1}{c}{$T=6$}
& \multicolumn{1}{c}{$T=20$}
& \multicolumn{1}{c}{$T=30$}
& \multicolumn{1}{c}{$T=2$}
& \multicolumn{1}{c}{$T=6$}
& \multicolumn{1}{c}{$T=20$}
& \multicolumn{1}{c}{$T=30$} \\
\midrule
\multicolumn{9}{c}{\textit{Unregularized}} \\
$q=0$
& $-0.1187$ & $-0.1299$ & $-0.1586$ & $-0.1660$
& $0.2552$  & $0.2194$  & $0.1987$  & $155.34$ \\
$q=1$
& $-0.0609$ & $-0.1048$ & $-0.1429$ & $-0.1487$
& $0.3410$  & $0.2996$  & $0.2886$  & $29.59$ \\
$q=10$
& $-0.0217$ & $-0.0871$ & $-0.0881$ & $-0.0893$
& $0.4718$  & $0.5144$  & $0.5349$  & $377.39$ \\
$q=10^3$
& $-0.0206$ & $-0.0234$ & $-0.0215$ & $-0.0228$
& $0.4778$  & $1.2317$  & $1.5608$  & $1112.43$ \\
$q=10^6$
& $-0.0206$ & $-0.0226$ & $-0.0015$ & $-0.0015$
& $0.4778$  & $1.2979$  & $24.3406$ & $1453.17$ \\
$q=\infty$
& $-0.0206$ & $-0.0226$ & $0.0011$  & $-0.0006$
& $0.4778$  & $1.2979$  & $4881.38$ & $1.4331\times 10^8$ \\
\midrule
\multicolumn{9}{c}{\textit{Regularized}} \\
$q=0$
& $-0.1187$ & $-0.1299$ & $-0.1586$ & $-0.1660$
& $0.2552$  & $0.2194$  & $0.1987$  & $0.1953$ \\
$q=1$
& $-0.0609$ & $-0.1048$ & $-0.1429$ & $-0.1487$
& $0.3410$  & $0.2996$  & $0.2886$  & $0.2881$ \\
$q=10$
& $-0.0217$ & $-0.0871$ & $-0.0881$ & $-0.0893$
& $0.4718$  & $0.5144$  & $0.5349$  & $0.5358$ \\
$q=10^3$
& $-0.0206$ & $-0.0234$ & $-0.0215$ & $-0.0228$
& $0.4778$  & $1.2317$  & $1.5597$  & $1.4751$ \\
$q=10^6$
& $-0.0206$ & $-0.0226$ & $-0.0097$ & $-0.0092$
& $0.4778$  & $1.2979$  & $3.8038$  & $3.8358$ \\
$q=\infty$
& $-0.0206$ & $-0.0226$ & $-0.0097$ & $-0.0092$
& $0.4778$  & $1.2979$  & $3.8038$  & $3.8358$ \\
\bottomrule
\end{tabular}
\end{table}
Taken together, the three tables make three points. First, when the target covariate value is observed, AOI reproduces the fixed-$T$ identified benchmark. Second, when estimation requires extrapolation, higher-order bias correction can remove most of the bias, though at a potentially large variance cost. Third, in the weak-variation design, AOI still reduces the bias toward zero, which is a key conceptual advantage of the method, but regularization becomes practically important because the high-order calculations are numerically unstable.

\subsection{Continuous covariate design}\label{subsec:RCnumerics-continuous}

We now turn to a Monte Carlo study with finite $n$ and a continuous covariate. The data-generating process is a binary choice logit model with unobserved heterogeneity and random coefficients, given by
\begin{align}
	Y_{it} = 1 \{X_{it}\, A_{i,2} + A_{i,1} \geq \varepsilon_{it} \},
	\label{eq:dgp1}
\end{align}
where
\begin{align}
	A_{i,1} \stackrel{\mathrm{i.i.d.}}{\sim}  \mathcal N (0,1),
	\qquad
	A_{i,2} \stackrel{\mathrm{i.i.d.}}{\sim} \mathcal N (1,1),
	\qquad
	X_{it} \mid A_{i,1},A_{i,2} \stackrel{\mathrm{i.i.d.}}{\sim}  \mathcal N (A_{i,1}+A_{i,2} ,1),
	\label{eq:dgp2}
\end{align}
and
\begin{align}
	\varepsilon_{it} \stackrel{\mathrm{i.i.d.}}{\sim} \mathrm{Logit}(0,1).
	\label{eq:dgp3}
\end{align}
We consider three average effects of interest. The first one is the average partial effect
\begin{align}
	\mathbb{E} \left[ \frac{\partial P(Y_{it}=1|X_{it},A_{i,1},A_{i,2})}{ \partial X_{it}} \right],
	\label{eq:ate}
\end{align}
which is a standard object of interest for a continuous covariate $X_{it}$. In addition to this, we also consider the average effects given by 
\begin{align}
	\mathbb{E} \left[ P(Y_{it}=1 | X_{it}=1, A_{i,1},A_{i,2}) \right],
	\label{eq:ae1}
\end{align}
and
\begin{align}
	\mathbb{E} \left[ P(Y_{it}=1 | X_{it}=0, A_{i,1},A_{i,2}) \right].
	\label{eq:ae2}
\end{align}
These correspond to $\mathbb{E}[\Lambda(A_{i,1}+A_{i,2})]$ and $\mathbb{E}[\Lambda(A_{i,1})]$, respectively.

We focus on panels with $N=500$ and $T\in\{2,4,6\}$. This choice of $T$ corresponds to our setting of interest with moderately large panels. All results are based on $K=1000$ replications. We consider three choices of prior functions, given by
	\begin{align*}
		\textbf{Prior 1} &: \pi_{\mathrm{prior}}(A_{i,1}|X_{i})\sim \mathrm{Logit}(1,1) \text{ and } \pi_{\mathrm{prior}}(A_{i,2}|X_{i})\sim \mathrm{Logit}(0,1),
		\\
		\textbf{Prior 2} &: \pi_{\mathrm{prior}}(A_{i,1}|X_{i})\sim \mathrm{Logit}(1,2) \text{ and } \pi_{\mathrm{prior}}(A_{i,2}|X_{i})\sim \mathrm{Logit}(0,2),
		\\
		\textbf{Prior 3} &: \pi_{\mathrm{prior}}(A_{i,1}|X_{i})\sim \mathcal N (0,1) \text{ and } \pi_{\mathrm{prior}}(A_{i,2}|X_{i})\sim \mathcal N (1,1), 
	\end{align*}
	which are all misspecified relative to the true distributions of $A_{i,1}$ and $A_{i,2}$. Note that Prior 3 is misspecified as it ignores the dependence between $(A_{i,1},A_{i,2})$ and $X_i$.
	 
	We calculate the AOI estimator $w^{(q)}(y,x)$ by numerical integration over a discretised support for $(A_{i,1},A_{i,2})$. In particular, for a given choice of priors, we obtain a grid of $L=99$ points for each  of $A_{i,1}$ and $A_{i,2}$, corresponding to equi-distant percentiles on the prior distributions of $A_{i,1}$ and $A_{i,2}$. This yields a grid size of $99 \times 99$. We also regularise $Q$ for numerical stability by clamping all eigenvalues smaller than $10^{-4}$ to $10^{-4}.$

The simulation results for the estimation of the average effects in \eqref{eq:ate}, \eqref{eq:ae1} and \eqref{eq:ae2} are presented in Tables \ref{tab:ae1}, \ref{tab:ae2} and \ref{tab:ae3}, respectively. Each table presents the average bias across replications, the standard deviation of estimators across replications, the ratio of estimated standard errors to simulation standard deviations (SE/SD), and the 95\% coverage rate of the AOI estimator. Several important patterns stand out. 

\afterpage{%
\clearpage
\begin{landscape}
\begin{table}[p]
\centering
\small
\begin{tabular}{lrrrrrr@{\hspace{1cm}}rrrrrr}
$T$ & \multicolumn{6}{c}{$q$} & \multicolumn{6}{c}{$q$} \\
\cmidrule(lr){2-7}\cmidrule(lr){8-13}
  & 0 & 100 & $10^3$ & $10^4$ & $10^5$ & $\infty$ & 0 & 100 & $10^3$ & $10^4$ & $10^5$ & $\infty$ \\
\midrule
\multicolumn{13}{l}{\textbf{Prior:} $\pi_{A_1} \sim \text{Logit}(1,1)$ and $\pi_{A_2} \sim \text{Logit}(0,1)$} \\
\midrule
 & \multicolumn{6}{l}{\quad \textbf{Bias}} & \multicolumn{6}{l}{\quad \textbf{SE/SD ratio}} \\
2 & -0.046 &  0.001 &  0.005 &  0.005 &  0.005 &  0.005 &  0.986 &  0.936 &  0.926 &  0.922 &  0.915 &  0.915 \\
4 & -0.029 &  0.009 &  0.009 &  0.009 &  0.008 &  0.008 &  1.013 &  0.984 &  0.980 &  0.974 &  0.970 &  0.970 \\
6 & -0.020 &  0.009 &  0.009 &  0.009 &  0.009 &  0.009 &  1.003 &  0.995 &  0.997 &  0.999 &  1.000 &  1.000 \\
\addlinespace[4pt]
 & \multicolumn{6}{l}{\quad \textbf{Standard deviation}} & \multicolumn{6}{l}{\quad \textbf{95\% coverage rate}} \\
2 &  0.003 &  0.032 &  0.064 &  0.112 &  0.132 &  0.132 &  0.000 &  0.933 &  0.930 &  0.938 &  0.939 &  0.939 \\
4 &  0.004 &  0.017 &  0.023 &  0.035 &  0.040 &  0.040 &  0.000 &  0.920 &  0.925 &  0.932 &  0.930 &  0.930 \\
6 &  0.004 &  0.012 &  0.015 &  0.021 &  0.023 &  0.023 &  0.003 &  0.883 &  0.908 &  0.927 &  0.936 &  0.936 \\
\midrule
\addlinespace[6pt]
\multicolumn{13}{l}{\textbf{Prior:} $\pi_{A_1} \sim \text{Logit}(1,2)$ and $\pi_{A_2} \sim \text{Logit}(0,2)$} \\
\midrule
 & \multicolumn{6}{l}{\quad \textbf{Bias}} & \multicolumn{6}{l}{\quad \textbf{SE/SD ratio}} \\
2 & -0.039 &  0.000 &  0.004 &  0.002 &  0.001 &  0.001 &  0.992 &  0.976 &  0.969 &  0.937 &  0.928 &  0.928 \\
4 & -0.016 &  0.010 &  0.008 &  0.008 &  0.007 &  0.007 &  0.999 &  1.002 &  0.997 &  0.971 &  0.971 &  0.971 \\
6 & -0.007 &  0.008 &  0.008 &  0.008 &  0.008 &  0.008 &  0.974 &  0.984 &  0.981 &  0.982 &  0.975 &  0.975 \\
\addlinespace[4pt]
 & \multicolumn{6}{l}{\quad \textbf{Standard deviation}} & \multicolumn{6}{l}{\quad \textbf{95\% coverage rate}} \\
2 &  0.005 &  0.045 &  0.085 &  0.152 &  0.179 &  0.179 &  0.000 &  0.941 &  0.945 &  0.953 &  0.956 &  0.956 \\
4 &  0.007 &  0.022 &  0.036 &  0.062 &  0.073 &  0.073 &  0.307 &  0.936 &  0.946 &  0.947 &  0.943 &  0.943 \\
6 &  0.007 &  0.015 &  0.023 &  0.038 &  0.044 &  0.044 &  0.811 &  0.915 &  0.931 &  0.942 &  0.944 &  0.944 \\
\midrule
\addlinespace[6pt]
\multicolumn{13}{l}{\textbf{Prior:} $\pi_{A_1} \sim \mathcal{N}(0,1)$ and $\pi_{A_2} \sim \mathcal{N}(1,1)$} \\
\midrule
 & \multicolumn{6}{l}{\quad \textbf{Bias}} & \multicolumn{6}{l}{\quad \textbf{SE/SD ratio}} \\
2 &  0.039 &  0.018 &  0.011 &  0.008 &  0.007 &  0.007 &  1.011 &  0.979 &  0.986 &  0.978 &  0.974 &  0.974 \\
4 &  0.037 &  0.011 &  0.008 &  0.008 &  0.008 &  0.008 &  0.996 &  1.006 &  1.003 &  0.988 &  0.974 &  0.974 \\
6 &  0.035 &  0.010 &  0.009 &  0.009 &  0.009 &  0.009 &  1.003 &  0.994 &  0.990 &  1.010 &  1.013 &  1.013 \\
\addlinespace[4pt]
 & \multicolumn{6}{l}{\quad \textbf{Standard deviation}} & \multicolumn{6}{l}{\quad \textbf{95\% coverage rate}} \\
2 &  0.003 &  0.023 &  0.046 &  0.083 &  0.097 &  0.097 &  0.000 &  0.868 &  0.925 &  0.937 &  0.943 &  0.943 \\
4 &  0.003 &  0.015 &  0.019 &  0.025 &  0.028 &  0.028 &  0.000 &  0.895 &  0.926 &  0.928 &  0.931 &  0.931 \\
6 &  0.003 &  0.011 &  0.013 &  0.015 &  0.016 &  0.016 &  0.000 &  0.854 &  0.895 &  0.916 &  0.924 &  0.924 \\
\bottomrule
\end{tabular}
\caption{Simulation results for $\mathbb{E} \left[ \partial P(Y_{it}=1|X_{it},A_{i,1},A_{i,2})/ \partial X_{it} \right]$ (true value: $0.066$) under the DGP given in \eqref{eq:dgp1}, \eqref{eq:dgp2} and \eqref{eq:dgp3}. Reported statistics include bias, standard deviation (SD), the ratio of estimated standard errors to simulation standard deviations (SE/SD), and 95\% coverage rates of the AOI estimator. All results are based on $K=1000$ replications and $N=500$.}
\label{tab:ae1}
\end{table}

\begin{table}[p]
\centering
\small
\begin{tabular}{lrrrrrr@{\hspace{1cm}}rrrrrr}
$T$ & \multicolumn{6}{c}{$q$} & \multicolumn{6}{c}{$q$} \\
\cmidrule(lr){2-7}\cmidrule(lr){8-13}
  & 0 & 100 & $10^3$ & $10^4$ & $10^5$ & $\infty$ & 0 & 100 & $10^3$ & $10^4$ & $10^5$ & $\infty$ \\
\midrule
\multicolumn{13}{l}{\textbf{Prior:} $\pi_{A_1} \sim \text{Logit}(1,1)$ and $\pi_{A_2} \sim \text{Logit}(0,1)$} \\
\midrule
 & \multicolumn{6}{l}{\quad \textbf{Bias}} & \multicolumn{6}{l}{\quad \textbf{SE/SD ratio}} \\
2 &  0.032 &  0.020 &  0.017 &  0.017 &  0.017 &  0.017 &  0.994 &  0.958 &  0.947 &  0.911 &  0.899 &  0.899 \\
4 &  0.039 &  0.012 &  0.008 &  0.004 &  0.002 &  0.002 &  1.001 &  0.997 &  0.998 &  0.998 &  1.002 &  1.002 \\
6 &  0.039 &  0.005 &  0.001 & -0.001 & -0.002 & -0.002 &  1.007 &  0.995 &  1.003 &  0.975 &  0.966 &  0.966 \\
\addlinespace[4pt]
 & \multicolumn{6}{l}{\quad \textbf{Standard deviation}} & \multicolumn{6}{l}{\quad \textbf{95\% coverage rate}} \\
2 &  0.008 &  0.031 &  0.054 &  0.094 &  0.109 &  0.109 &  0.032 &  0.874 &  0.910 &  0.936 &  0.943 &  0.943 \\
4 &  0.009 &  0.024 &  0.038 &  0.067 &  0.079 &  0.079 &  0.006 &  0.910 &  0.942 &  0.955 &  0.957 &  0.957 \\
6 &  0.009 &  0.021 &  0.033 &  0.059 &  0.071 &  0.071 &  0.013 &  0.932 &  0.956 &  0.950 &  0.943 &  0.943 \\
\midrule
\addlinespace[6pt]
\multicolumn{13}{l}{\textbf{Prior:} $\pi_{A_1} \sim \text{Logit}(1,2)$ and $\pi_{A_2} \sim \text{Logit}(0,2)$} \\
\midrule
 & \multicolumn{6}{l}{\quad \textbf{Bias}} & \multicolumn{6}{l}{\quad \textbf{SE/SD ratio}} \\
2 &  0.022 &  0.018 &  0.018 &  0.019 &  0.019 &  0.019 &  1.039 &  0.987 &  0.992 &  0.958 &  0.940 &  0.940 \\
4 &  0.030 &  0.009 &  0.005 &  0.005 &  0.005 &  0.005 &  1.008 &  0.976 &  0.993 &  1.016 &  1.017 &  1.017 \\
6 &  0.030 &  0.003 & -0.001 & -0.004 & -0.005 & -0.005 &  0.999 &  1.010 &  1.010 &  0.995 &  0.992 &  0.992 \\
\addlinespace[4pt]
 & \multicolumn{6}{l}{\quad \textbf{Standard deviation}} & \multicolumn{6}{l}{\quad \textbf{95\% coverage rate}} \\
2 &  0.011 &  0.032 &  0.052 &  0.089 &  0.105 &  0.105 &  0.514 &  0.902 &  0.935 &  0.953 &  0.956 &  0.956 \\
4 &  0.012 &  0.027 &  0.045 &  0.079 &  0.093 &  0.093 &  0.295 &  0.924 &  0.948 &  0.955 &  0.960 &  0.960 \\
6 &  0.012 &  0.024 &  0.041 &  0.073 &  0.088 &  0.088 &  0.263 &  0.944 &  0.944 &  0.956 &  0.956 &  0.956 \\
\midrule
\addlinespace[6pt]
\multicolumn{13}{l}{\textbf{Prior:} $\pi_{A_1} \sim \mathcal{N}(0,1)$ and $\pi_{A_2} \sim \mathcal{N}(1,1)$} \\
\midrule
 & \multicolumn{6}{l}{\quad \textbf{Bias}} & \multicolumn{6}{l}{\quad \textbf{SE/SD ratio}} \\
2 &  0.024 &  0.011 &  0.005 &  0.003 &  0.002 &  0.002 &  0.973 &  0.989 &  0.982 &  0.975 &  0.974 &  0.974 \\
4 &  0.028 &  0.001 & -0.002 & -0.002 & -0.002 & -0.002 &  0.996 &  0.980 &  0.970 &  0.966 &  0.969 &  0.969 \\
6 &  0.028 & -0.003 & -0.006 & -0.007 & -0.007 & -0.007 &  0.981 &  0.959 &  0.972 &  1.003 &  1.005 &  1.005 \\
\addlinespace[4pt]
 & \multicolumn{6}{l}{\quad \textbf{Standard deviation}} & \multicolumn{6}{l}{\quad \textbf{95\% coverage rate}} \\
2 &  0.005 &  0.026 &  0.047 &  0.079 &  0.092 &  0.092 &  0.000 &  0.923 &  0.941 &  0.948 &  0.951 &  0.951 \\
4 &  0.006 &  0.023 &  0.034 &  0.053 &  0.062 &  0.062 &  0.002 &  0.936 &  0.946 &  0.940 &  0.944 &  0.944 \\
6 &  0.006 &  0.020 &  0.028 &  0.042 &  0.049 &  0.049 &  0.006 &  0.946 &  0.952 &  0.958 &  0.961 &  0.961 \\
\bottomrule
\end{tabular}
\caption{Simulation results for $\mathbb{E}[P(Y_{it}=1|X_{it}=1, A_{i,1}, A_{i,2})]$ (true value: $0.679$). See Table \ref{tab:ae1} for more information.}
\label{tab:ae2}
\end{table}

\begin{table}[p]
\centering
\small
\begin{tabular}{lrrrrrr@{\hspace{1cm}}rrrrrr}
$T$ & \multicolumn{6}{c}{$q$} & \multicolumn{6}{c}{$q$} \\
\cmidrule(lr){2-7}\cmidrule(lr){8-13}
  & 0 & 100 & $10^3$ & $10^4$ & $10^5$ & $\infty$ & 0 & 100 & $10^3$ & $10^4$ & $10^5$ & $\infty$ \\
\midrule
\multicolumn{13}{l}{\textbf{Prior:} $\pi_{A_1} \sim \text{Logit}(1,1)$ and $\pi_{A_2} \sim \text{Logit}(0,1)$} \\
\midrule
 & \multicolumn{6}{l}{\quad \textbf{Bias}} & \multicolumn{6}{l}{\quad \textbf{SE/SD ratio}} \\
2 &  0.149 &  0.058 &  0.050 &  0.050 &  0.051 &  0.051 &  0.984 &  1.000 &  0.994 &  0.987 &  0.978 &  0.978 \\
4 &  0.129 &  0.024 &  0.018 &  0.015 &  0.014 &  0.014 &  1.001 &  1.001 &  0.971 &  0.978 &  0.984 &  0.984 \\
6 &  0.112 &  0.014 &  0.011 &  0.011 &  0.010 &  0.010 &  1.013 &  0.992 &  0.994 &  1.008 &  1.005 &  1.005 \\
\addlinespace[4pt]
 & \multicolumn{6}{l}{\quad \textbf{Standard deviation}} & \multicolumn{6}{l}{\quad \textbf{95\% coverage rate}} \\
2 &  0.007 &  0.030 &  0.053 &  0.089 &  0.104 &  0.104 &  0.000 &  0.529 &  0.834 &  0.900 &  0.915 &  0.915 \\
4 &  0.008 &  0.026 &  0.045 &  0.077 &  0.091 &  0.091 &  0.000 &  0.829 &  0.910 &  0.945 &  0.953 &  0.953 \\
6 &  0.008 &  0.024 &  0.040 &  0.067 &  0.079 &  0.079 &  0.000 &  0.894 &  0.930 &  0.943 &  0.943 &  0.943 \\
\midrule
\addlinespace[6pt]
\multicolumn{13}{l}{\textbf{Prior:} $\pi_{A_1} \sim \text{Logit}(1,2)$ and $\pi_{A_2} \sim \text{Logit}(0,2)$} \\
\midrule
 & \multicolumn{6}{l}{\quad \textbf{Bias}} & \multicolumn{6}{l}{\quad \textbf{SE/SD ratio}} \\
2 &  0.100 &  0.037 &  0.034 &  0.037 &  0.038 &  0.038 &  1.008 &  0.983 &  1.003 &  0.990 &  0.978 &  0.978 \\
4 &  0.072 &  0.009 &  0.009 &  0.012 &  0.012 &  0.012 &  0.998 &  0.964 &  0.967 &  0.944 &  0.943 &  0.943 \\
6 &  0.054 &  0.003 &  0.004 &  0.005 &  0.004 &  0.004 &  0.979 &  0.979 &  0.988 &  0.985 &  0.987 &  0.987 \\
\addlinespace[4pt]
 & \multicolumn{6}{l}{\quad \textbf{Standard deviation}} & \multicolumn{6}{l}{\quad \textbf{95\% coverage rate}} \\
2 &  0.010 &  0.035 &  0.057 &  0.097 &  0.113 &  0.113 &  0.000 &  0.790 &  0.916 &  0.940 &  0.942 &  0.942 \\
4 &  0.011 &  0.031 &  0.051 &  0.080 &  0.092 &  0.092 &  0.000 &  0.930 &  0.936 &  0.941 &  0.947 &  0.947 \\
6 &  0.011 &  0.028 &  0.047 &  0.073 &  0.085 &  0.085 &  0.002 &  0.947 &  0.951 &  0.942 &  0.946 &  0.946 \\
\midrule
\addlinespace[6pt]
\multicolumn{13}{l}{\textbf{Prior:} $\pi_{A_1} \sim \mathcal{N}(0,1)$ and $\pi_{A_2} \sim \mathcal{N}(1,1)$} \\
\midrule
 & \multicolumn{6}{l}{\quad \textbf{Bias}} & \multicolumn{6}{l}{\quad \textbf{SE/SD ratio}} \\
2 &  0.008 & -0.008 & -0.011 & -0.011 & -0.012 & -0.012 &  0.980 &  0.996 &  0.985 &  0.977 &  0.977 &  0.977 \\
4 &  0.008 & -0.015 & -0.014 & -0.015 & -0.015 & -0.015 &  0.985 &  0.970 &  0.978 &  1.020 &  1.028 &  1.028 \\
6 &  0.006 & -0.015 & -0.013 & -0.012 & -0.012 & -0.012 &  0.959 &  0.982 &  0.986 &  1.006 &  1.006 &  1.006 \\
\addlinespace[4pt]
 & \multicolumn{6}{l}{\quad \textbf{Standard deviation}} & \multicolumn{6}{l}{\quad \textbf{95\% coverage rate}} \\
2 &  0.004 &  0.025 &  0.047 &  0.083 &  0.096 &  0.096 &  0.366 &  0.937 &  0.945 &  0.943 &  0.946 &  0.946 \\
4 &  0.004 &  0.023 &  0.038 &  0.062 &  0.073 &  0.073 &  0.563 &  0.886 &  0.932 &  0.946 &  0.949 &  0.949 \\
6 &  0.005 &  0.021 &  0.033 &  0.053 &  0.062 &  0.062 &  0.766 &  0.894 &  0.938 &  0.953 &  0.956 &  0.956 \\
\bottomrule
\end{tabular}
\caption{Simulation results for $\mathbb{E}[P(Y_{it}=1|X_{it}=0, A_{i,1}, A_{i,2})]$ (true value: $0.500$). See Table \ref{tab:ae1} for more information.}
\label{tab:ae3}
\end{table}

\end{landscape}
\clearpage
}

Regarding bias, in all cases the AOI estimator reduces the bias substantially compared to $q=0$, especially at $q=\infty$. Interestingly, the bias of the average partial effect $\mathbb{E} \left[ \partial P(Y_{it}=1|X_{it},A_{i,1},A_{i,2})/ \partial X_{it} \right]$ remains just below 0.01 even for $q=\infty$; see Table \ref{tab:ae1}. However, it is possible that this average effect is inherently difficult to estimate and requires greater $T$ than considered here. For $\mathbb{E} \left[ P(Y_{it}=1 | X_{it}=1, A_{i,1}, A_{i,2}) \right]$  bias appears to be small relative to its true value of $0.679$, even at $q=0$; see Table \ref{tab:ae2}. The average effect $\mathbb{E} \left[ P(Y_{it}=1 | X_{it}=0, A_{i,1}, A_{i,2}) \right]$, on the other hand, shows an interesting pattern: depending on the choice of priors, the bias can be quite small or significantly large at $q=0$; see Table \ref{tab:ae3}. All in all, the results for the uncorrected case of $q=0$ show that the severity of bias also depends on the average effect itself. Nevertheless, at $q=\infty$ and $T=6$ the bias becomes negligibly small in most settings.

The simulation results also reveal that estimator variance increases with $q$. This is not an unexpected reflection of the classical bias-variance trade-off. However, while the estimator standard deviation has a clearly increasing trend with $q$, we do not observe an explosive behaviour. As for the estimation of the standard deviation, across most configurations the variance estimator is quite accurate (as revealed by the SE/SD ratios). 

Finally---and most importantly---in almost all cases, for $q=\infty$ (and even for many large but finite $q$ settings) the coverage rates are very close to the nominal coverage rate of 95\%. The coverage rates at $q=0$, on the other hand, are in stark contrast to this result and often fall below 0.5. This confirms the validity of the asymptotic distribution as $q\to\infty$ even for very small values of $T$ in this complicated model, and strongly supports the validity of our approach in terms of inference.

\section{Inference on common parameters}
\label{sec:theta}

The main body of this paper focuses on estimation of $\mu_0$ in models where
the conditional outcome probabilities $f(y\,|\,x,\alpha)$ do not depend on any
common parameter. As noted before,
many panel models of interest include a finite-dimensional common parameter
$\theta_0$, so that the outcome probabilities take the form
$f(y\,|\,x,\alpha,\theta_0)$. This section discusses two issues that arise in
that setting: how inference on $\mu_0$ is affected by the presence of
$\theta_0$, and how $\theta_0$ itself can be estimated.

\subsection{Inference on $\mu_0$ in models with $\theta_0$}
\label{subsec:mu-with-theta}

All results in this paper carry over directly to models with a common parameter
$\theta_0$, provided a consistent estimator $\widehat{\theta}$ of $\theta_0$ is
available. Given $\widehat{\theta}$, one simply evaluates the AOI estimating
function at $\widehat{\theta}$, that is,
$\widehat{\mu}^{(\infty)} = n^{-1} \sum_{i=1}^n
w^{(\infty)}(Y_i, X_i, \widehat{\theta})$, where $w^{(\infty)}(y,x,\theta)$
is the estimating function from Section~\ref{sec:estimator} applied to the
model $f(y\,|\,x,\alpha,\theta)$.

Consistent estimators of $\theta_0$ are available in a wide range of nonlinear
panel models. For essentially every type of discrete outcome variable (binary,
count data, ordered choice, multinomial choice), there exist model
specifications that allow point identification and $\sqrt{n}$-consistent
estimation of $\theta_0$ even at fixed $T$. In static models, this is typically
achieved through conditional likelihood methods that exploit the existence of a
sufficient statistic for $A_i$, as in exponential-family models
\citep{rasch1961general, andersen1970asymptotic, chamberlain1980analysis}.
In dynamic models, appropriate specifications similarly allow estimation of
$\theta_0$ via generalized method of moments
\citep[see, e.g.,][]{honore2020dynamic}. More generally, the functional
differencing method of \cite{bonhomme2012functional} provides a unifying
framework for point estimation of $\theta_0$ in both static and dynamic panel
models. These methods are well established in the literature and widely
implemented in statistical software. In models where $\theta_0$ is not
point-identified at fixed $T$, the approximate functional differencing (AFD)
method of \cite{dhaene2023approximate} can be used to obtain consistent
estimators of $\theta_0$ whose bias vanishes rapidly as $T$ grows.

When $\theta_0$ is estimated, the estimation error in $\widehat{\theta}$
may need to be accounted for when conducting inference on $\mu_0$. If
$\widehat{\theta}$ is $\sqrt{nT}$-consistent, as is the case for many
conditional likelihood and AFD estimators, then
$\sqrt{n}(\widehat{\theta} - \theta_0) = O_P(T^{-1/2})$, and the contribution
of the estimation error in $\widehat{\theta}$ to the asymptotic distribution of
$\widehat{\mu}^{(\infty)}$ vanishes as $T \to \infty$. In this case, no formal
adjustment is required, and the asymptotic theory of
Section~\ref{sec:asymptotics} applies directly. For good finite-sample
performance, however, it may still be advisable to account for the estimation
error in $\widehat{\theta}$.

More generally, if $\widehat{\theta}$ is $\sqrt{n}$-consistent and
asymptotically linear with influence function $\psi(Y,X,\theta_0)$, and if
$\theta \mapsto w^{(\infty)}(y,x,\theta)$ is sufficiently smooth, then the
delta method gives
$$
\sqrt{n}(\widehat{\mu}^{(\infty)} - \widetilde{\mu}^{(\infty)})
\approx \sqrt{n}\, G(\theta_0)^\top (\widehat{\theta} - \theta_0),
$$
where $G(\theta_0) := \mathbb{E}[\nabla_\theta
w^{(\infty)}(Y,X,\theta)]_{\theta=\theta_0}$ and
$\widetilde{\mu}^{(\infty)}$ denotes the infeasible estimator evaluated at the
true $\theta_0$. The adjusted asymptotic variance becomes
$$
\widetilde{\sigma}_T^2 = \sigma_T^2
+ 2\, G(\theta_0)^\top \mathrm{Cov}(\psi, w^{(\infty)})
+ G(\theta_0)^\top \mathrm{Var}(\psi)\, G(\theta_0),
$$
and the analogue of Theorem~\ref{thm:asymptotic_normality} holds with
$\sigma_T^2$ replaced by $\widetilde{\sigma}_T^2$.

\subsection{Estimation of $\theta_0$}
\label{subsec:theta-estimation}

{The ideas developed in this paper also apply to inference on $\theta_0$ itself.
For most of the paper, we have suppressed $\theta_0$ from the notation.
Reinstating $\theta_0$, the AOI construction produces estimating functions
$w(Y_i, X_i, \theta_0)$ that satisfy $\mu_0 \approx \mathbb{E}[w(Y_i, X_i, \theta_0)]$
for large $T$, with the bias decaying exponentially under our regularity
conditions. Estimation of $\theta_0$ calls for moment functions of the same
family but targeting zero rather than $\mu_0$, that is, functions
$m(Y_i, X_i, \theta_0)$ such that
$0 \approx \mathbb{E}[m(Y_i, X_i, \theta_0)]$, which can be combined into a
GMM estimator for $\theta_0$.}

{The construction of such moment functions follows the same logic as the
construction of $w$ in this paper, starting from the score of the MLE for
$\theta$ and applying the iterated bias correction machinery to
the resulting average-effect expression --- this construction is detailed,
without asymptotic theory, in \cite{dhaene2023approximate}. At
$\theta = \theta_0$, the resulting moment functions are themselves average
effects in the sense of the present paper, so the bias and variance results
of Section~\ref{sec:asymptotics} apply directly to
$\mathbb{E}[m(Y_i, X_i, \theta_0)]$ as $n, T \to \infty$.
Two ingredients are not addressed in that paper: identification of $\theta_0$
from the moment conditions, and well-behavedness of the Jacobian
$\nabla_\theta \mathbb{E}[m(Y_i, X_i, \theta)]\big|_{\theta = \theta_0}$.
Once these are established, standard GMM cross-sectional asymptotics combined
with the bias control developed here yields the asymptotic distribution of
the GMM estimator for $\theta_0$. Controlling the bias at $\theta_0$ is the
technically demanding step, and it is precisely what the AOI theory of our
paper provides.}

\section{Conclusion} \label{sec:conclusion}  

This paper develops the approximate operator inversion method (AOI) for estimating average
effects in nonlinear panel data models with fixed effects. The central idea is
to recast the estimation problem as an inversion of the distributional mapping
from the fixed-effect distribution to the outcome distribution. This mapping goes from an
infinite-dimensional space to a finite-dimensional space, so it cannot be inverted exactly,
but the approximation improves as $T$ grows because the outcome space becomes
richer. The resulting estimator can be understood as the limit of infinitely iterated large $T$ bias corrections.

Two properties of AOI are worth highlighting. First, the bias has a product
structure (rate double robustness), decaying at the product of the
approximation rates for the average-effect function and the fixed-effect
distribution. Under analyticity conditions, this yields exponential bias decay
in $T$, so that only $T \gg \log n$ is needed for valid inference. Second, the
method operates entirely at the distributional level and never estimates
individual fixed effects. This makes it applicable in what we call regime (C) in the introduction, where the
fixed effects cannot be consistently estimated, a setting not covered or discussed by any existing papers.

\section*{Acknowledgment}
This paper benefited from the use of generative AI tools to assist with language editing and \LaTeX{} formatting; all output was carefully reviewed by the authors. All substantive content, results, and any remaining errors are the authors' responsibility.

\bibliographystyle{apalike}
\bibliography{apebib}

\appendix

\section{Proofs}

This appendix collects all proofs. Results are presented in the order they appear 
in the main text.

\subsection{Derivation of the plug-in bias}
\label{app:plugin_bias}

By the law of iterated expectations and model assumption \eqref{model},
\begin{align*}
\mathbb{E}\bigl[w^{(0)}(Y,X)\bigr]
&= \mathbb{E}\!\left[\sum_{y\in\cal Y} w^{(0)}(y,X) 
   \int_{\cal A} f(y\,| \, X,\alpha)\,\pi_0(\alpha\,| \, X)\,\dd\alpha\right]\\
&= \mathbb{E}\!\left[\int_{\cal A} \sum_{y\in\cal Y} w^{(0)}(y,X) 
   f(y\,| \, X,\alpha)\,\pi_0(\alpha\,| \, X)\,\dd\alpha\right],
\end{align*}
where the interchange of sum and integral is justified by finiteness of $\mathcal{Y}$.
Subtracting
\[
\mu_0 = \mathbb{E}\!\left[\int_{\cal A} \mu(X,\alpha)\,
\pi_0(\alpha\,| \, X)\,\dd\alpha\right]
\]
yields
\begin{align*}
\mathbb{E}\bigl[w^{(0)}(Y,X) - \mu_0\bigr] 
= \mathbb{E}\!\left[\int_{\cal A}  
\Bigl(\sum_{y\in \cal Y} w^{(0)}(y,X)\, f(y\,| \, X,\alpha)
-\mu(X,\alpha)\Bigr)\,\pi_0(\alpha\,| \, X)\,\dd\alpha \right],
\end{align*}
as claimed in Section~\ref{sec:iterated}.

\subsection{Proofs of the results of Section~\ref{subsec:bias}}

\subsubsection{Notation} 
Let ${\cal Y}=\{y_{(1)},\ldots,y_{(n_{\cal Y})}\}$ and fix 
$x\in{\cal X}$. Define
\[
P_{\rm p}(x) := \diag\!\Big(p_{\rm prior}(y_{(k)}\,| \, x)
\Big)_{k=1,\ldots,n_{\cal Y}},
\]
and let $\pi_{\rm p}(x)$ denote the linear operator on $L^2(\mathcal{A})$ given, for all 
$v\in L^2(\mathcal{A})$, by
\[
\bigl(\pi_{\rm p}(x)\,v\bigr)(\alpha) \;=\; \pi_{\rm prior}(\alpha\,| \, x)\,v(\alpha).
\]
In what follows we also treat $\pi_0(\cdot\,| \, x)$ and $\mu(x,\cdot)$ as elements 
of $L^2(\mathcal{A})$.

For each $x\in{\cal X}$, define the operator
$\mathbf{F}(x):L^2(\mathcal{A})\to\mathbb{R}^{n_{\cal Y}}$
by
\[
\mathbf{F}(x)\,v
\;:=\;
\begin{pmatrix}
\displaystyle\int_{\mathcal{A}} f(y_{(1)}\,| \, x,\alpha)\,v(\alpha)\,\dd\alpha \\[6pt]
\vdots \\[6pt]
\displaystyle\int_{\mathcal{A}} f(y_{(n_{\cal Y})}\,| \, x,\alpha)\,v(\alpha)\,\dd\alpha
\end{pmatrix}.
\]
Note that $\mathbf{F}(x)$ is distinct from the likelihood function 
$f(\cdot\,| \,\cdot,\cdot)$ and from the subspace $\mathcal{F}(x)$ defined in 
Section~\ref{subsec:bias}. Next set
\[
G(x) \;:=\; \pi_{\rm p}(x)\,\mathbf{F}(x)^*\,P_{\rm p}(x)^{-1},
\]
where $\mathbf{F}(x)^*$ denotes the Hilbert adjoint of $\mathbf{F}(x)$. 
Then $G(x)$ is an operator $\mathbb{R}^{n_{\cal Y}}\to L^2(\mathcal{A})$ 
satisfying
\[
\bigl(G(x)\,v\bigr)(\alpha)
\;=\;\sum_{y\in{\cal Y}} \pi_{\rm post}(\alpha\,| \, y,x)\,v(y), 
\qquad v\in\mathbb{R}^{n_{\cal Y}}.
\]
It follows that
\[
Q(x) \;=\; \mathbf{F}(x)\,G(x).
\]
Finally, define the rescaled operators
\[
\widetilde{Q}(x) \;:=\; 
P_{\rm p}(x)^{-1/2}\,Q(x)\,P_{\rm p}(x)^{1/2},
\qquad
\widetilde{\mathbf{F}}(x) \;:=\; 
P_{\rm p}(x)^{-1/2}\,\mathbf{F}(x)\,\pi_{\rm p}(x)^{1/2}.
\]
Note that $\widetilde{Q}(x) = \widetilde{\mathbf{F}}(x)\,
\widetilde{\mathbf{F}}(x)^*$.
The rescaled operator $\widetilde{Q}$ is introduced because, unlike $Q$, it is real 
symmetric, which facilitates the spectral analysis below.

\subsubsection{Proof of Lemma~\ref{lemma.qinf}}

Fix $x\in{\cal X}$. As noted in the proof of Lemma 1 in 
\cite{dhaene2023approximate}, $\widetilde{Q}(x)$ is real symmetric and has 
eigenvalues in $[0,1]$. Hence there exist eigenvectors 
$\widetilde{u}_k(x)$, $k=1,\ldots,n_{\cal Y}$, and eigenvalues 
$\widetilde{\lambda}_k(x)\in[0,1]$, $k=1,\dots,n_{\cal Y}$, such that
\begin{align*}
\widetilde{Q}(x)
&=\sum_{k=1}^{n_{\cal Y}}\widetilde{\lambda}_k(x)\,  
\widetilde{u}_k(x)\,\widetilde{u}_k(x)^\top,\\[4pt]
\widetilde{Q}(x)^D
&=\sum_{k=1}^{n_{\cal Y}}
\frac{\mathbf{1}\{\widetilde{\lambda}_k(x)\ne 0\}}
{\widetilde{\lambda}_k(x)}\,
\widetilde{u}_k(x)\,\widetilde{u}_k(x)^\top.
\end{align*}
Write the partial sums of the Neumann series:
\[
S_q:=\sum_{r=0}^{q}\bigl(\mathbb{I}_{n_{\cal Y}}-\widetilde{Q}(x)\bigr)^r
=\sum_{k=1}^{n_{\cal Y}} \sum_{r=0}^{q}
\bigl(1-\widetilde{\lambda}_k(x)\bigr)^r\, 
\widetilde{u}_k(x)\,\widetilde{u}_k(x)^\top.
\]
For $0<\widetilde{\lambda}_k(x)\le 1$, we have 
$|1-\widetilde{\lambda}_k(x)|<1$ and therefore
\[
\sum_{r=0}^{\infty}\bigl(1-\widetilde{\lambda}_k(x)\bigr)^r=
\frac{1}{\widetilde{\lambda}_k(x)}.
\]
If $\widetilde{\lambda}_k(x)=0$, then 
$\sum_{r=0}^{q}(1-\widetilde{\lambda}_k(x))^r=q+1$ diverges as $q\to\infty$, 
but in this case 
$\widetilde{u}_k(x)\in\ker\bigl(\widetilde{Q}(x)\bigr)=
\operatorname{range}\!\bigl(\widetilde{\mathbf{F}}(x)\bigr)^{\perp}$,
so that $\widetilde{\mathbf{F}}(x)^{*}\widetilde{u}_k(x)=0$.
Hence,
\begin{align*}
\widetilde{\mathbf{F}}(x)^{*}S_q
&=\sum_{k=1}^{n_{\cal Y}} \sum_{r=0}^{q}
\mathbf{1}\{\widetilde{\lambda}_k(x)\ne 0\}\, 
\bigl(1-\widetilde{\lambda}_k(x)\bigr)^r\, 
\widetilde{\mathbf{F}}(x)^{*}\,\widetilde{u}_k(x)\,  
\widetilde{u}_k(x)^\top \\
&\;\xrightarrow{q\to\infty}\;
\sum_{k=1}^{n_{\cal Y}}
\frac{\mathbf{1}\{\widetilde{\lambda}_k(x)\ne 0\}}
{\widetilde{\lambda}_k(x)}\,
\widetilde{\mathbf{F}}(x)^{*}\,\widetilde{u}_k(x)\,  
\widetilde{u}_k(x)^\top 
\;=\; \widetilde{\mathbf{F}}(x)^{*}\,\widetilde{Q}(x)^{D},
\end{align*}
where convergence is in operator norm in 
$\mathcal{L}(\mathbb{R}^{n_{\mathcal{Y}}},L^2(\mathcal{A}))$.
By definition of $\widetilde{\mathbf{F}}(x)$ and $\widetilde{Q}(x)$, 
this yields 
\begin{align*}
&\pi_{\rm p}(x)\,\mathbf{F}(x)^*\,P_{\rm p}(x)^{-1}\,
Q(x)^D\,P_{\rm p}(x)^{1/2}\\
&\quad=\lim_{q\to\infty}\; \pi_{\rm p}(x)\,\mathbf{F}(x)^*\, 
P_{\rm p}(x)^{-1/2}\, 
\sum_{r=0}^{q}\Bigl(\mathbb{I}_{n_{\cal Y}}-
P_{\rm p}(x)^{-1/2}\,Q(x)\,P_{\rm p}(x)^{1/2}\Bigr)^r\\
&\quad=\lim_{q\to\infty}\; \pi_{\rm p}(x)\,\mathbf{F}(x)^*\, 
P_{\rm p}(x)^{-1}
\left[\sum_{r=0}^{q}\bigl(\mathbb{I}_{n_{\cal Y}}-Q(x)\bigr)^r\right] 
P_{\rm p}(x)^{1/2},
\end{align*}
which implies, using the definition of $G(x)$, that
\[
G(x)\,Q(x)^D
=\lim_{q\to\infty}\;G(x)
\sum_{r=0}^{q}\bigl(\mathbb{I}_{n_{\cal Y}}-Q(x)\bigr)^r.
\]
The vector of estimating functions satisfies, for each $k=1,\ldots,n_{\cal Y}$,
\[
w^{(\infty)}(y_{(k)},x) =
\bigl\langle \mu(x,\cdot),\, G(x)\,Q(x)^D\,e_k
\bigr\rangle_{L^2(\mathcal{A})},
\]
where $e_k$ is the $k$-th standard basis vector in $\mathbb{R}^{n_{\cal Y}}$,
from which the result follows.

\subsubsection{Proof of Theorem~\ref{theorem.inf}}

By \eqref{eq:Winf}, we have
\[
\mathbb{E}\bigl[w^{(\infty)}(Y,X)\,| \, X=x\bigr]  
=  \mu(x,\cdot)^*\,G(x)\,Q(x)^D\,
\mathbf{F}(x)\,\pi_0(\cdot\,| \, x).
\]
As a result, 
\begin{equation}\label{eq:bias_decomp1}
\begin{aligned}
&\mathbb{E}\bigl[w^{(\infty)}(Y,X)\,| \, X=x\bigr] 
-\mathbb{E}\bigl[\mu(X,A)\,| \, X=x\bigr]\\
&\quad=\mu(x,\cdot)^*\bigl\{G(x)\,Q(x)^D\,
\mathbf{F}(x)-\mathbb{I}_{n_{\cal Y}}\bigr\}\pi_0(\cdot\,| \, x)\\
&\quad=\mu(x,\cdot)^*\bigl\{G(x)\,Q(x)^D\,
\mathbf{F}(x)-\mathbb{I}_{n_{\cal Y}}\bigr\}
\pi_{\mathcal{P}(x)^\perp}(\cdot\,| \, x)\\
&\qquad+\mu(x,\cdot)^*\bigl\{G(x)\,Q(x)^D\,
\mathbf{F}(x)-\mathbb{I}_{n_{\cal Y}}\bigr\}
\pi_{\mathcal{P}(x)}(\cdot\,| \, x)\\
&\quad=\mu(x,\cdot)^*\bigl\{G(x)\,Q(x)^D\,
\mathbf{F}(x)-\mathbb{I}_{n_{\cal Y}}\bigr\}
\pi_{\mathcal{P}(x)^\perp}(\cdot\,| \, x),
\end{aligned}
\end{equation}
where in the last equality we used Lemma~\ref{lemma:equality} and the fact that, 
by definition of $\pi_{\mathcal{P}(x)}(\cdot\,| \, x)$, 
$\pi_{\mathcal{P}(x)}(\cdot\,| \, x)=\pi_{\rm p}(x)\,\mathbf{F}(x)^*\,
\nu(\cdot,x)$ for some $\nu(\cdot\,| \, x):\mathcal{Y}\to\mathbb{R}$.
Then write 
$\mu(x,\cdot)=\mu_{\mathcal{F}(x)}(x,\cdot)+ 
\mu_{\mathcal{F}(x)^\perp}(x,\cdot)$.
Since the prior is uniform, $\pi_{\rm p}(x)$ is a scalar multiple of the identity, and 
therefore 
\[
\operatorname{range}\bigl\{\pi_{\rm p}(x)\,\mathbf{F}(x)^*\bigr\}
=\operatorname{range}\bigl\{\mathbf{F}(x)^*\bigr\}.
\]
Hence,
\[
\mu_{\mathcal{F}(x)^\perp}(x,\cdot)^*\,G(x)= 
\mu_{\mathcal{F}(x)^\perp}(x,\cdot)^*\,\pi_{\rm p}(x)\,
\mathbf{F}(x)^*\,P_{\rm p}(x)^{-1}=0.
\]
By \eqref{eq:bias_decomp1}, this leads to 
\begin{equation}\label{eq:bias_decomp2}
\begin{aligned}
&\mathbb{E}\bigl[w^{(\infty)}(Y,X)\,| \, X=x\bigr] 
-\mathbb{E}\bigl[\mu(X,A)\,| \, X=x\bigr]\\
&\quad=\mu_{\mathcal{F}(x)}(x,\cdot)^*\bigl\{G(x)\,
Q(x)^D\,\mathbf{F}(x)-\mathbb{I}_{n_{\cal Y}}\bigr\}
\pi_{\mathcal{P}(x)^\perp}(\cdot\,| \, x)\\
&\qquad-\mu_{\mathcal{F}(x)^\perp}(x,\cdot)^*\,
\pi_{\mathcal{P}(x)^\perp}(\cdot\,| \, x).
\end{aligned}
\end{equation}
Write $\mu_{\mathcal{F}(x)}=\mathbf{F}(x)^*\widetilde{\mu}_{\mathcal{F}(x)}$ 
for some $\widetilde{\mu}_{\mathcal{F}(x)}\in\mathbb{R}^{n_{\cal Y}}$. Then
\begin{equation}\label{eq:bias_decomp3}
\begin{aligned}
&\mu_{\mathcal{F}(x)}(x,\cdot)^*\bigl\{G(x)\,
Q(x)^D\,\mathbf{F}(x)-\mathbb{I}_{n_{\cal Y}}\bigr\}\\
&\quad=\widetilde{\mu}_{\mathcal{F}(x)}^\top\,\mathbf{F}(x)\,
\bigl\{G(x)\,Q(x)^D\,\mathbf{F}(x)
-\mathbb{I}_{n_{\cal Y}}\bigr\}\\
&\quad=\widetilde{\mu}_{\mathcal{F}(x)}^\top\,\mathbf{F}(x)\,G(x)\,
Q(x)^D\,\mathbf{F}(x)
-\widetilde{\mu}_{\mathcal{F}(x)}^\top\,\mathbf{F}(x).
\end{aligned}
\end{equation}
Next,
\begin{equation}\label{eq:bias_decomp4}
\begin{aligned}
&\widetilde{\mu}_{\mathcal{F}(x)}^\top\,\mathbf{F}(x)\,G(x)\,
Q(x)^D\,\mathbf{F}(x)\\
&=\widetilde{\mu}_{\mathcal{F}(x)}^\top\,Q(x)\,Q(x)^D\,
\mathbf{F}(x)\\
&=\widetilde{\mu}_{\mathcal{F}(x)}^\top\,U(x)\,
\diag\bigl[\mathbf{1}\{\lambda_k(x)\ne 0\}\bigr]_{k=1,\ldots,n_{\cal Y}}\,
U^{-1}(x)\,\mathbf{F}(x).
\end{aligned}
\end{equation}
Now notice that 
\begin{align*}
&\diag\bigl[\mathbf{1}\{\lambda_k(x)=0\}\bigr]_{k=1,\ldots,n_{\cal Y}}\,
U^{-1}(x)\,Q(x)\\
&=\diag\bigl[\lambda_k(x)\,
\mathbf{1}\{\lambda_k(x)=0\}\bigr]_{k=1,\ldots,n_{\cal Y}}\,
U^{-1}(x)=0.
\end{align*}
By Lemma~3 in \cite{dhaene2023approximate}, this yields 
\[
\diag\bigl[\mathbf{1}\{\lambda_k(x)=0\}\bigr]_{k=1,\ldots,n_{\cal Y}}\,
U^{-1}(x)\,\mathbf{F}(x)=0,
\]
and therefore
\begin{align*}
&U(x)\,\diag\bigl[\mathbf{1}\{\lambda_k(x)\ne 0\}
\bigr]_{k=1,\ldots,n_{\cal Y}}\,
U^{-1}(x)\,\mathbf{F}(x)\\
&\quad=U(x)\,U^{-1}(x)\,\mathbf{F}(x)
=\mathbf{F}(x).
\end{align*}
Substituting back into \eqref{eq:bias_decomp4} and then \eqref{eq:bias_decomp3} gives
\[
\mu_{\mathcal{F}(x)}(x,\cdot)^*\bigl\{G(x)\,
Q(x)^D\,\mathbf{F}(x)-\mathbb{I}_{n_{\cal Y}}\bigr\}
\pi_{\mathcal{P}(x)^\perp}(\cdot\,| \, x)=0.
\]
By \eqref{eq:bias_decomp2}, we therefore obtain
\[
\mathbb{E}\bigl[w^{(\infty)}(Y,X)\,| \, X=x\bigr] 
-\mathbb{E}\bigl[\mu(X,A)\,| \, X=x\bigr]
=-\mu_{\mathcal{F}(x)^\perp}(x,\cdot)^*\,
\pi_{\mathcal{P}(x)^\perp}(\cdot\,| \, x),
\]
and the result of the theorem follows by taking expectations over $X$.
\subsubsection{Auxiliary lemmas}

\begin{lemma}\label{lemma.pseudoinverse}
Let $H$ be a (real) Hilbert space and let $B: H \to \mathbb{R}^m$ be a bounded 
linear operator. Set $A := B B^{*}$, where $B^{*}: \mathbb{R}^m \to H$. Then
\[
B^{*} A^{\dagger} A \;=\; B^{*},
\qquad\text{equivalently,}\qquad
(A^{\dagger}A)B=B.
\]
\end{lemma}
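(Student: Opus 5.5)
The plan is to reduce everything to finite-dimensional linear algebra on the range of $B$. Since $A = BB^{*}$ is an $m\times m$ real symmetric positive semidefinite matrix, $A^{\dagger}A$ is the orthogonal projection $\Pi$ of $\mathbb{R}^m$ onto $\operatorname{range}(A)$. The identity $(A^{\dagger}A)B = B$ then says exactly that $\operatorname{range}(B)\subseteq\operatorname{range}(A)$. The other form, $B^{*}A^{\dagger}A = B^{*}$, follows by taking adjoints, because $(A^{\dagger}A)^{*} = A^{\dagger}A$ for symmetric $A$. So the two displayed statements are equivalent, and I will prove the second.

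The key step is to show $\operatorname{range}(A) = \operatorname{range}(B)$. Since $\mathbb{R}^m$ is finite-dimensional, both ranges are closed subspaces. It therefore suffices to compare orthogonal complements, that is, to show $\ker(A) = \ker(B^{*})$:
\begin{itemize}
\item If $B^{*}v = 0$, then $Av = B(B^{*}v) = 0$.
\item Conversely, if $Av = 0$, then $0 = v^\top A v = \langle B^{*}v, B^{*}v\rangle_H = \|B^{*}v\|_H^2$, so $B^{*}v = 0$.
\end{itemize}
Hence $\operatorname{range}(A) = \ker(A)^{\perp} = \ker(B^{*})^{\perp} = \overline{\operatorname{range}(B)} = \operatorname{range}(B)$. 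The last equality uses that $\operatorname{range}(B)$ is a subspace of $\mathbb{R}^m$ and hence closed.

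It remains to conclude. For any $h\in H$, $Bh\in\operatorname{range}(A)$, so $\Pi Bh = Bh$, which gives $(A^{\dagger}A)B = B$. Taking Hilbert adjoints yields $B^{*}(A^{\dagger}A)^{*} = B^{*}$, and since $A^{\dagger}A$ is symmetric this is the first identity.

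There is no real obstacle here. The only points needing care are the following.
\begin{itemize}
\item $A^{\dagger}A$ is the orthogonal projector onto $\operatorname{range}(A^{*}) = \operatorname{range}(A)$. This is a standard Moore--Penrose property, which I will cite or verify via the spectral decomposition of $A$, as done for $\widetilde{Q}(x)$ in the proof of Lemma~\ref{lemma.qinf}.
\item The closedness and adjoint manipulations are legitimate because the codomain is finite-dimensional, even though $H$ may be infinite-dimensional.
\end{itemize}
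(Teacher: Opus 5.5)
Your proof is correct. Its skeleton matches the paper's: identify $A^{\dagger}A$ as the orthogonal projector onto $\operatorname{range}(B)$, apply it to $Bh$, and take adjoints. The difference lies in how you obtain that identification.

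The paper takes a Schmidt (singular value) decomposition of $B$, $Bx=\sum_i\sigma_i\langle x,v_i\rangle_H\,u_i$. From it, it reads off $A=\sum_i\sigma_i^2u_iu_i^\top$ and $A^{\dagger}=\sum_i\sigma_i^{-2}u_iu_i^\top$, so that $A^{\dagger}A=\sum_iu_iu_i^\top$ is visibly the projector onto $\operatorname{range}(B)$. This route is constructive and mirrors the spectral computations used for $\widetilde{Q}(x)$ elsewhere in the appendix. However, it presupposes that a finite-rank operator on a possibly infinite-dimensional $H$ admits such a decomposition.

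You instead prove $\operatorname{range}(BB^{*})=\operatorname{range}(B)$ in three steps:
\begin{enumerate}
\item the kernel identity $\ker(BB^{*})=\ker(B^{*})$, via $v^\top Av=\|B^{*}v\|_H^2$;
\item the relation $\ker(B^{*})=\operatorname{range}(B)^{\perp}$;
\item the closedness of subspaces of $\mathbb{R}^m$.
\end{enumerate}
You then cite the general Moore--Penrose fact that $A^{\dagger}A$ projects onto $\operatorname{range}(A^{*})$. This uses only the defining adjoint relation and finite-dimensional linear algebra, so it is slightly more elementary on the $H$ side. The cost is that it does not produce explicit formulas for $A^{\dagger}$.

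A minor remark: $A^{\dagger}A$ is symmetric for every matrix $A$, by the Moore--Penrose conditions, so the qualifier ``for symmetric $A$'' in your adjoint step is unnecessary.
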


\begin{proof}
Since the codomain of $B$ is finite-dimensional, $B$ admits a Schmidt decomposition 
\citep[see, e.g.,][]{horn1994topics}: there exist singular values 
$\sigma_1\ge \cdots \ge \sigma_r>0$, an orthonormal family 
$\{u_i\}_{i=1}^r \subset \mathbb{R}^m$ spanning $\operatorname{range}(B)$, and an 
orthonormal family $\{v_i\}_{i=1}^r \subset H$ such that, for all $x\in H$ and 
$y\in \mathbb{R}^m$,
\[
Bx = \sum_{i=1}^r \sigma_i \,\langle x, v_i\rangle_H\, u_i,
\qquad
B^{*}y = \sum_{i=1}^r \sigma_i \,\langle y, u_i\rangle_{\mathbb{R}^m}\, v_i.
\]
Consequently,
\[
A \;=\; B B^{*} \;=\; \sum_{i=1}^r \sigma_i^2\, u_i u_i^\top
\quad\text{on }\mathbb{R}^m,
\]
so $A u_i = \sigma_i^2 u_i$ for $i=1,\dots,r$, and $A$ vanishes on 
$\operatorname{ran}(B)^{\perp}$. Hence the pseudoinverse is
\[
A^{\dagger} \;=\; \sum_{i=1}^r \frac{1}{\sigma_i^2}\, u_i u_i^\top,
\]
and thus
\[
A^{\dagger}A \;=\; \sum_{i=1}^r u_i u_i^\top \;=\; P_{\operatorname{range}(B)},
\]
the orthogonal projector in $\mathbb{R}^m$ onto $\operatorname{range}(B)$.
For any $x\in H$, we have $Bx \in \operatorname{range}(B)$, whence
\[
(A^{\dagger}A) Bx \;=\; P_{\operatorname{range}(B)} (Bx) \;=\; Bx.
\]
Therefore $(A^{\dagger}A)B = B$. Taking Hilbert adjoints yields 
$B^{*}A^{\dagger}A = B^{*}$, as claimed.
\end{proof}

\begin{lemma}\label{lemma:equality} 
It holds that 
$$G(x) Q(x)^D \mathbf{F}(x)
\pi_{\rm p}(x)\mathbf{F}(x)^*=\pi_{\rm p}(x)\mathbf{F}(x)^*.$$
\end{lemma}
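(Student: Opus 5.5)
The identity follows by rescaling to the symmetric problem and then applying Lemma~\ref{lemma.pseudoinverse}. Fix $x$ and drop it from the notation. I will rewrite both sides through the rescaled operators $\widetilde{\mathbf{F}}=P_{\rm p}^{-1/2}\mathbf{F}\pi_{\rm p}^{1/2}$ and $\widetilde Q=\widetilde{\mathbf{F}}\widetilde{\mathbf{F}}^*$. Since $\widetilde Q$ is real symmetric, its Drazin inverse coincides with its Moore--Penrose pseudoinverse.

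The first step is to express everything in tilde form. By definition $G=\pi_{\rm p}\mathbf{F}^*P_{\rm p}^{-1}$. Hence
\[
\pi_{\rm p}\mathbf{F}^* = \pi_{\rm p}^{1/2}\,\widetilde{\mathbf{F}}^*\,P_{\rm p}^{1/2}
\quad\text{and}\quad
G = \pi_{\rm p}^{1/2}\,\widetilde{\mathbf{F}}^*\,P_{\rm p}^{-1/2}.
\]
Both use the facts that $\pi_{\rm p}^{1/2}$ is self-adjoint, being multiplication by a positive function, and that $P_{\rm p}$ is diagonal with positive entries.

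The second step transfers the Drazin inverse through the similarity. Since $Q=P_{\rm p}^{1/2}\widetilde Q P_{\rm p}^{-1/2}$, we have $Q=(P_{\rm p}^{1/2}\widetilde U)\widetilde\Lambda(P_{\rm p}^{1/2}\widetilde U)^{-1}$. So $Q$ and $\widetilde Q$ share eigenvalues, and the eigenvector matrices correspond via $U=P_{\rm p}^{1/2}\widetilde U$, up to the usual nonuniqueness within eigenspaces. The Drazin inverse defined via $U\Lambda^D U^{-1}$ is independent of that choice, since it is the spectral function $\lambda\mapsto \lambda^{-1}\mathbf 1\{\lambda\neq0\}$ of a diagonalizable matrix. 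Therefore $Q^D=P_{\rm p}^{1/2}\widetilde Q^{D}P_{\rm p}^{-1/2}$, and $\widetilde Q^D=\widetilde Q^\dagger$ by symmetry. I regard this similarity-invariance and well-definedness point as the only real subtlety, and I would state it explicitly.

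The third step multiplies out and applies Lemma~\ref{lemma.pseudoinverse}. The left-hand side becomes
\[
\pi_{\rm p}^{1/2}\widetilde{\mathbf{F}}^*P_{\rm p}^{-1/2}\cdot P_{\rm p}^{1/2}\widetilde Q^{\dagger}P_{\rm p}^{-1/2}\cdot P_{\rm p}^{1/2}\widetilde{\mathbf{F}}\pi_{\rm p}^{-1/2}\cdot \pi_{\rm p}^{1/2}\widetilde{\mathbf{F}}^*P_{\rm p}^{1/2}.
\]
After cancellation this equals $\pi_{\rm p}^{1/2}\,\widetilde{\mathbf{F}}^*\widetilde Q^\dagger\widetilde Q\,P_{\rm p}^{1/2}$. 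Here I write $\mathbf F=P_{\rm p}^{1/2}\widetilde{\mathbf F}\pi_{\rm p}^{-1/2}$. If $\pi_{\rm p}^{-1/2}$ is unbounded, I would instead keep $\mathbf F\pi_{\rm p}\mathbf F^*=P_{\rm p}^{1/2}\widetilde{\mathbf F}\widetilde{\mathbf F}^*P_{\rm p}^{1/2}$ directly, which avoids inverting $\pi_{\rm p}$.

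Apply Lemma~\ref{lemma.pseudoinverse} with $H=L^2(\mathcal A)$, $B=\widetilde{\mathbf{F}}$ and $A=\widetilde Q$. This gives $\widetilde{\mathbf{F}}^*\widetilde Q^\dagger\widetilde Q=\widetilde{\mathbf{F}}^*$. The left-hand side therefore reduces to $\pi_{\rm p}^{1/2}\widetilde{\mathbf{F}}^*P_{\rm p}^{1/2}=\pi_{\rm p}\mathbf{F}^*$, which is the claimed identity.

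Boundedness of $\widetilde{\mathbf F}$ follows from compactness of $\mathcal A$ and boundedness of $f$ and of the prior, so Lemma~\ref{lemma.pseudoinverse} applies.
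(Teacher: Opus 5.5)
Your proof is correct and follows the paper's argument: rescale to the symmetric $\widetilde Q(x)=\widetilde{\mathbf F}(x)\widetilde{\mathbf F}(x)^*$, use $\widetilde Q(x)^D=\widetilde Q(x)^\dagger$, apply Lemma~\ref{lemma.pseudoinverse} with $B=\widetilde{\mathbf F}(x)$, and conjugate back by $\pi_{\rm p}(x)^{1/2}$ on the left and $P_{\rm p}(x)^{1/2}$ on the right. The paper uses $Q(x)^D=P_{\rm p}(x)^{1/2}\widetilde Q(x)^D P_{\rm p}(x)^{-1/2}$ without comment, so your explicit similarity-invariance argument for it is a useful addition; also, $\widetilde{\mathbf F}(x)$ is bounded even without compactness or a bounded prior, because $\int_{\mathcal A} f(y\,|\,x,\alpha)^2\,\pi_{\rm prior}(\alpha\,|\,x)\,\dd\alpha\le p_{\rm prior}(y\,|\,x)$.
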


\begin{proof}

Applying Lemma~\ref{lemma.pseudoinverse} with $B=\widetilde{\mathbf{F}}(x)$ 
and $A=\widetilde{Q}(x)$ and using that, since $\widetilde{Q}(x)$ is symmetric, we have $\widetilde{Q}(x)^{\dagger}=\widetilde{Q}(x)^{D}$, we obtain
\begin{equation}\label{24042026}
\widetilde{\mathbf{F}}(x)^* \, \widetilde{Q}(x)^D \,    
\widetilde{\mathbf{F}}(x) \widetilde{\mathbf{F}}(x)^* 
= \widetilde{\mathbf{F}}(x)^*.
\end{equation}
By definition of $\widetilde{\mathbf{F}}(x)$ and 
$\widetilde{Q}(x)$, we also have 
\begin{equation}\label{240420261}
\begin{aligned}
&\pi_{\rm p}(x)^{1/2}\widetilde{\mathbf{F}}(x)^* \, \widetilde{Q}(x)^D \,    
\widetilde{\mathbf{F}}(x) \widetilde{\mathbf{F}}(x)^* P_{\rm p}(x)^{1/2}\\
&=\pi_{\rm p}(x) \mathbf{F}(x)^* P_{\rm p}(x)^{-1/2}  P_{\rm p}(x)^{-1/2} 
Q(x)^D  P_{\rm p}(x)^{-1/2} P_{\rm p}(x)^{1/2} \\
&\quad \times 
\mathbf{F}(x)\pi_{\rm p}(x)^{1/2}\pi_{\rm p}(x)^{1/2} \mathbf{F}(x)^*
P_{\rm p}(x)^{-1/2} P_{\rm p}(x)^{1/2}\\
&= \pi_{\rm p}(x) \mathbf{F}(x)^* P_{\rm p}(x)^{-1}
Q(x)^D
\mathbf{F}(x)\pi_{\rm p}(x) \mathbf{F}(x)^*\\
&= G(x) Q(x)^D \mathbf{F}(x)
\pi_{\rm p}(x)\mathbf{F}(x)^*,
\end{aligned}
\end{equation}
and 
\begin{equation}
    \label{240420262}
    \begin{aligned}
    &\pi_{\rm p}(x)^{1/2}\widetilde{\mathbf{F}}(x)^* P_{\rm p}(x)^{1/2}\\
    &=\pi_{\rm p}(x)^{1/2}\pi_{\rm p}(x)^{1/2}\mathbf{F}(x)^* P_{\rm p}(x)^{-1/2}P_{\rm p}(x)^{1/2}\\
    &=    \pi_{\rm p}(x) \mathbf{F}(x)^* .
    \end{aligned}
\end{equation}
We obtain the result by combining \eqref{24042026}, \eqref{240420261} and \eqref{240420262}.
\end{proof}

\subsection{Proofs of the results of Section~\ref{sec:InferenceKnownTheta}}

\subsubsection{Proof of Theorem~\ref{thm:asymptotic_normality}}

Decompose the estimation error as
\begin{equation}\label{eq:decomposition}
\frac{\sqrt{n}\left(\widehat{\mu}^{(\infty)} - \mu_0\right)}{\sigma_T} =
\frac{\sqrt{n}\left(\widehat{\mu}^{(\infty)} - \mu_*^{(\infty)}\right)}{\sigma_T} +
\frac{\sqrt{n}\left(\mu_*^{(\infty)} - \mu_0\right)}{\sigma_T}.
\end{equation}

\noindent
\textbf{Step 1 (Asymptotic normality of the sampling error).}
Define, for each $n$ and $i = 1, \ldots, n$,
$$
Z_{n,i} := \frac{w^{(\infty)}(Y_i,X_i) - \mu_*^{(\infty)}}{\sigma_T}.
$$
Under Assumption~\ref{ass:iid}, for each $n$, the random variables 
$Z_{n,1}, \ldots, Z_{n,n}$ are i.i.d.\ with $\mathbb{E}[Z_{n,i}] = 0$ and 
${\rm Var}(Z_{n,i}) = 1$. We have
$$
\frac{\sqrt{n}\left(\widehat{\mu}^{(\infty)} - \mu_*^{(\infty)}\right)}{\sigma_T}
= \frac{1}{\sqrt{n}} \sum_{i=1}^n Z_{n,i}.
$$
This is a triangular array of rowwise i.i.d.\ random variables with zero mean and 
unit variance. By the Lindeberg--Feller central limit theorem 
\citep{billingsley2013convergence}, convergence to ${\cal N}(0,1)$ holds if the Lindeberg 
condition is satisfied: for all $\epsilon > 0$,
$$
\frac{1}{n} \sum_{i=1}^n \mathbb{E}\left[Z_{n,i}^2 \, 
\mathbf{1}\{|Z_{n,i}| > \epsilon\sqrt{n}\}\right] \to 0.
$$
Since the $Z_{n,i}$ are identically distributed for each $n$, this reduces to
$$
\mathbb{E}\left[Z_{n,1}^2 \, \mathbf{1}\{|Z_{n,1}| > \epsilon\sqrt{n}\}\right] \to 0,
$$
which is precisely Assumption~\ref{ass:lindeberg}. Therefore,
\begin{equation}\label{eq:clt}
\frac{\sqrt{n}\left(\widehat{\mu}^{(\infty)} - \mu_*^{(\infty)}\right)}{\sigma_T}
\xrightarrow{d} {\cal N}(0,1).
\end{equation}

\noindent
\textbf{Step 2 (Asymptotic negligibility of the bias).}
We have
$$
\left|\frac{\sqrt{n}\left(\mu_*^{(\infty)} - \mu_0\right)}{\sigma_T}\right| 
\leq \frac{\sqrt{n} \cdot O(r_T)}{\underline{\sigma}} = o(1)
$$
under condition~\eqref{eq:rate_condition} and 
Assumption~\ref{ass:variance_lower_bound}. Hence,
\begin{equation}\label{eq:bias_negligible}
\frac{\sqrt{n}\left(\mu_*^{(\infty)} - \mu_0\right)}{\sigma_T} \xrightarrow{p} 0.
\end{equation}

\noindent
\textbf{Step 3.}
Since the first term in \eqref{eq:decomposition} converges in distribution to 
${\cal N}(0,1)$ by \eqref{eq:clt} and the second converges in probability to zero by 
\eqref{eq:bias_negligible}, Slutsky's theorem yields
$$
\frac{\sqrt{n}\left(\widehat{\mu}^{(\infty)} - \mu_0\right)}{\sigma_T}
\xrightarrow{d} {\cal N}(0,1).
$$

\subsubsection{Proof of Corollary~\ref{cor:feasible}}

We show that $\widehat{\sigma}_T / \sigma_T \xrightarrow{p} 1$, which combined with 
Theorem~\ref{thm:asymptotic_normality} and Slutsky's theorem yields the result.
Write
$$
\frac{\widehat{\sigma}_T^2}{\sigma_T^2} = \frac{1}{n\sigma_T^2}\sum_{i=1}^n W_i^2 
- \frac{\left(\widehat{\mu}^{(\infty)} - \mu_*^{(\infty)}\right)^2}{\sigma_T^2},
$$
where $W_i := w^{(\infty)}(Y_i,X_i) - \mu_*^{(\infty)}$.

\noindent
\textbf{Step 1.} We show 
$\frac{1}{n\sigma_T^2}\sum_{i=1}^n W_i^2 \xrightarrow{p} 1$.
Define $V_i := W_i^2 / \sigma_T^2$. Then $\mathbb{E}[V_i] = 1$ and 
$\mathbb{E}[V_i^2] = \mathbb{E}[W_i^4]/\sigma_T^4 \leq \kappa$ 
by Assumption~\ref{ass:fourth_moment}. By Chebyshev's inequality,
$$
P\left(\left|\frac{1}{n}\sum_{i=1}^n V_i - 1\right| > \epsilon\right) 
\leq \frac{{\rm Var}(V_1)}{n\epsilon^2} 
\leq \frac{\mathbb{E}[V_1^2]}{n\epsilon^2} 
\leq \frac{\kappa}{n\epsilon^2} \to 0.
$$

\noindent
\textbf{Step 2.} We show 
$\left(\widehat{\mu}^{(\infty)} - \mu_*^{(\infty)}\right)^2 / \sigma_T^2
\xrightarrow{p} 0$.
We have
$\mathbb{E}\left[\left(\widehat{\mu}^{(\infty)} - \mu_*^{(\infty)}\right)^2\right]
= {\rm Var}(\widehat{\mu}^{(\infty)}) = \sigma_T^2/n$. Thus,
$$
\mathbb{E}\left[\frac{\left(\widehat{\mu}^{(\infty)} - 
\mu_*^{(\infty)}\right)^2}{\sigma_T^2}\right] 
= \frac{1}{n} \to 0,
$$
and the result follows by Markov's inequality.

\noindent
\textbf{Step 3.} Combining Steps 1 and 2, 
$\widehat{\sigma}_T^2 / \sigma_T^2 \xrightarrow{p} 1$. 
Since $\sigma_T^2 \geq \underline{\sigma}^2 > 0$ by 
Assumption~\ref{ass:variance_lower_bound}, the square root function is continuous 
at the limit point, and the continuous mapping theorem gives
$\widehat{\sigma}_T / \sigma_T \xrightarrow{p} 1$.
The result follows from Theorem~\ref{thm:asymptotic_normality} and Slutsky's theorem.

\subsection{Proof of the bias bound in Section~\ref{subsec:regimeC-example}}
\label{app:proof-lemma1}

\begin{proof}
Let $S:=T/2$. Conditional on $A=(\alpha_1,\alpha_2)^\top$, the success probabilities are
\begin{equation}
\label{p1p2}
  p_1 := F(\alpha_1), \qquad p_2 := F\left(\alpha_1 + \frac{\alpha_2}{\sqrt{T}}\right),
\end{equation}
where $F(z) = 1/(1+e^{-z})$.
The sufficient statistics
\[
  Y_1:= \sum_{t=1}^{S} Y_{t}, \qquad Y_2 := \sum_{t=S+1}^{T} Y_{t}
\]
are independent with $Y_1 \sim \mathrm{Bin}(S, p_1)$ and 
$Y_2 \sim \mathrm{Bin}(S, p_2)$.

We work in the $(p_1,p_2)$ parameterization. From \eqref{p1p2}, we have
\[
  \alpha_1=\logit(p_1), \qquad   \alpha_2=\sqrt{T}\bigl(\logit(p_2)-\logit(p_1)\bigr),
\]
and therefore the target $F(\alpha_1+\alpha_2)$ can be written as
\begin{equation}\label{eq:gT}
  g_T(p_1, p_2) := F\Bigl((1-\sqrt{T})\,\logit(p_1) + \sqrt{T}\,\logit(p_2)\Bigr).
\end{equation}

\paragraph{Factorial moment identity.}
For $Y \sim \mathrm{Bin}(n, p)$ and any integer $0 \leq r \leq n$, define
the falling factorial ratio
\begin{equation}\label{eq:factorial}
  \frac{(Y)_r}{(n)_r} := 
  \frac{Y(Y-1)\cdots(Y-r+1)}{n(n-1)\cdots(n-r+1)},
\end{equation}
with the convention $(Y)_0 = (n)_0 = 1$.
Then $\mathbb{E}\!\left[\frac{(Y)_r}{(n)_r} \right] = p^r$, so the
falling factorial ratio is an unbiased estimator of $p^r$.

\paragraph{Chebyshev interpolation on $[\epsilon,1-\epsilon]$.}
Fix $\epsilon \in (0,1/2)$.
Define the Chebyshev nodes on $[\epsilon, 1-\epsilon]$:
\begin{equation*}\label{eq:cheb_nodes}
  t_m = \frac{1}{2} + \frac{1 - 2\epsilon}{2}\,
  \cos\!\left(\frac{(2m+1)\pi}{2(S+1)}\right),
  \qquad m = 0, 1, \ldots, S,
\end{equation*}
and the corresponding Lagrange basis polynomials
\begin{equation*}\label{eq:lagrange}
  \ell_m(p) = \prod_{\substack{r=0 \\ r \neq m}}^{S} \frac{p - t_r}{t_m - t_r}, 
  \qquad m = 0, \ldots, S.
\end{equation*}
Expand in monomials: $\ell_m(p) = \sum_{s=0}^{S} c_{ms}\, p^s$.

\paragraph{Unbiased implementation.}
For $y \in \{0, 1, \ldots, S\}$, define
\begin{equation*}\label{eq:L}
  L_m(y) := \sum_{s=0}^{S} c_{ms}\, \frac{(y)_s}{(S)_s}.
\end{equation*}
By the factorial moment identity, 
$\mathbb{E}[L_m(Y) ] = \ell_m(p)$ for $Y \sim \mathrm{Bin}(S, p)$.

\paragraph{Estimator.} Define
\begin{equation*}\label{eq:mT}
  m_T(Y_1, Y_2) = \sum_{m=0}^{S}\sum_{l=0}^{S}
    g_T(t_m, t_l)\; L_m(Y_1)\, L_l(Y_2),
\end{equation*}
with $g_T$ as defined in \eqref{eq:gT}.
Taking expectations and using the independence of $Y_1$ and $Y_2$ yields
\begin{equation*}\label{eq:interp}
  \mathbb{E}[m_T(Y_1, Y_2) ]
  = \sum_{m=0}^{S}\sum_{l=0}^{S} g_T(t_m, t_l) \, \ell_m(p_1)\, \ell_l(p_2)
  =: I_S[g_T](p_1, p_2).
\end{equation*}
The bias equals the interpolation error:
\begin{equation*}\label{eq:bias_interp}
  \mathbb{E}[m_T(Y_1,Y_2) ] - g_T(p_1, p_2)
  = I_S[g_T](p_1, p_2) - g_T(p_1, p_2).
\end{equation*}
It remains to bound the right-hand side.

\paragraph{Bounding the interpolation error.}
Let $\epsilon\in(0,1/2)$ be such that $p_1,p_2\in[\epsilon,1-\epsilon]$.
For fixed $p_1 \in [\epsilon, 1-\epsilon]$, the function $p_2 \mapsto g_T(p_1, p_2)$
is analytic on $(0,1)$ and extends to a meromorphic function of $p_2 \in \mathbb{C}$.
The logistic function $F(z) = 1/(1+e^{-z})$ has poles where $1 + e^{-z} = 0$,
i.e., at $z = (2k+1)\pi i$ for $k \in \mathbb{Z}$.

Fix $p_1 \in [\epsilon, 1-\epsilon]$ real. The singularities of 
$p_2\in\mathbb{C} \mapsto g_T(p_1, p_2)$ occur when
\[
  \underbrace{(1-\sqrt{T})\,\logit(p_1)}_{\in\,\mathbb{R}}
  \;+\; \sqrt{T}\,\logit(p_2)
  = (2k+1)\pi i,
\]
which requires $\mathrm{Im}\bigl[\sqrt{T}\,\logit(p_2)\bigr] = (2k+1)\pi$.
The nearest singularity satisfies $|\mathrm{Im}[\logit(p_2)]| = \pi/\sqrt{T}$.

Since $|\logit'(p)| = 1/(p(1-p)) \leq 1/(\epsilon(1-\epsilon))$
on $[\epsilon,1-\epsilon]$, a mean-value argument shows that the imaginary displacement 
$|\mathrm{Im}(\logit(p_2))| = \pi/\sqrt{T}$ requires a displacement from the real 
interval of at least
\[
  \delta_T \geq \frac{\pi\, \epsilon(1-\epsilon)}{\sqrt{T}}.
\]
That is, the nearest singularity to $[\epsilon,1-\epsilon]$ in the complex $p_2$-plane 
lies at imaginary distance at least $\delta_T$.

Set $\kappa := 2\pi\epsilon(1-\epsilon)/(1-2\epsilon) > 0$
and $\rho_T := 1 + \kappa/\sqrt{T}$. The corresponding Bernstein ellipse
$\mathcal{E}_{\rho_T}$ around $[\epsilon, 1-\epsilon]$ has semi-minor axis
$(1-2\epsilon)(\rho_T-1/\rho_T)/4 \le \delta_T$ (a direct calculation shows
this inequality holds for every $T \ge 1$), so $\mathcal{E}_{\rho_T}$
is contained in the analyticity region of $p_2 \mapsto g_T(p_1,p_2)$
derived above.
By classical results on polynomial interpolation at Chebyshev nodes
\citep[Theorem~8.2]{trefethen2019approximation},
the one-dimensional interpolation error satisfies
\begin{equation}\label{eq:1d-interp}
  \|I_S[g_T(p_1,\cdot)] - g_T(p_1,\cdot)\|_{L^\infty([\epsilon,1-\epsilon])}
  \leq \frac{4M}{\rho_T - 1}\, \rho_T^{-S}
  = \frac{4M\sqrt{T}}{\kappa}\, \rho_T^{-S},
\end{equation}
where $M = \sup_{\mathcal{E}_{\rho_T}} |g_T(p_1,\cdot)| < \infty$ uniformly in
$p_1$ and $T$, since the Bernstein ellipse stays bounded away from the poles of $F$.
The factor $1/(\rho_T-1)=\sqrt{T}/\kappa$ contributes a $\sqrt{T}$ pre-factor
that must be tracked explicitly.

To pass to the two-dimensional case, recall that $I_S[g_T]$ is the tensor-product
Chebyshev interpolant. Let $I_S^{(1)}$ and $I_S^{(2)}$ denote the one-dimensional
interpolation operators in $p_1$ and $p_2$ respectively. Then
\[
g_T - I_S[g_T]
= (g_T - I_S^{(1)} g_T) + I_S^{(1)}(g_T - I_S^{(2)} g_T).
\]
The operator norm of $I_S^{(1)}$ on $L^\infty([\epsilon,1-\epsilon])$ equals the
Lebesgue constant $\Lambda_S$ of Chebyshev interpolation, which satisfies
$\Lambda_S \leq \tfrac{2}{\pi}\log(S+1) + 1$
\citep[Theorem~15.2]{trefethen2019approximation}.
Combining the decomposition above with \eqref{eq:1d-interp} applied in each
coordinate, we obtain
\begin{equation}\label{eq:2d-interp}
\|I_S[g_T] - g_T\|_{L^\infty([\epsilon,1-\epsilon]^2)}
\;\leq\; (1 + \Lambda_S)\,\frac{4M\sqrt{T}}{\kappa}\, \rho_T^{-S}
\;\leq\; C_1\,\sqrt{T}\,\log(T+1)\, \rho_T^{-S},
\end{equation}
for a constant $C_1$ independent of $T$.

It remains to bound $\rho_T^{-S}$. Since $S=T/2$ and $\rho_T = 1 + \kappa/\sqrt{T}$,
the inequality $\log(1+x) \geq x - x^2/2$ for $x \geq 0$ yields
\[
\rho_T^{-S}
= \exp\!\Bigl(-\tfrac{T}{2}\log\rho_T\Bigr)
\leq \exp\!\Bigl(-\tfrac{T}{2}\bigl(\tfrac{\kappa}{\sqrt{T}} - \tfrac{\kappa^2}{2T}\bigr)\Bigr)
= e^{\kappa^2/4}\,e^{-\kappa\sqrt{T}/2}.
\]
Substituting into \eqref{eq:2d-interp},
\[
\|I_S[g_T] - g_T\|_{L^\infty([\epsilon,1-\epsilon]^2)}
\;\leq\; C_2\,\sqrt{T}\,\log(T+1)\,e^{-\kappa\sqrt{T}/2}.
\]
Finally, fix any $c_0 \in (0, \kappa/2)$. Since
$\sqrt{T}\,\log(T+1)\, e^{-(\kappa/2 - c_0)\sqrt{T}} \to 0$ as $T \to \infty$,
this quantity is bounded by some constant $C_3 = C_3(c_0)$ uniformly in $T$,
and hence
\[
\|I_S[g_T] - g_T\|_{L^\infty([\epsilon,1-\epsilon]^2)}
\;\leq\; C\, e^{-c_0\sqrt{T}}
\]
for $C = C_2 C_3$. The polynomial $\sqrt{T}\log(T+1)$ pre-factor has been
absorbed into the exponential at the cost of a strictly smaller exponent
constant. This establishes the claimed bound.
\end{proof}

\end{document}